\newtheorem{theorem}{Theorem}[section]
\newtheorem{lemma}[theorem]{Lemma}
\newtheorem{definition}[theorem]{Definition}
\newcommand{\E}{\mathbb{E}}
\icmltitlerunning{Differentially Private Community Detection for Stochastic Block Models}
\begin{document}

\onecolumn
\icmltitle{Differentially Private Community Detection for Stochastic Block Models}



\icmlsetsymbol{equal}{*}

\begin{icmlauthorlist}
\icmlauthor{Mohamed Seif}{equal,to}
\icmlauthor{Dung Nguyen}{equal,goo,ed}
\icmlauthor{Anil Vullikanti}{goo,ed}
\icmlauthor{Ravi Tandon}{to}
\end{icmlauthorlist}

\icmlaffiliation{to}{Department of Electrical and Computer Engineering, University of Arizona.}
\icmlaffiliation{goo}{Biocomplexity Institute and Initiative, University of Virginia.}
\icmlaffiliation{ed}{Department of Computer Science, University of Virginia}

\icmlcorrespondingauthor{Mohamed Seif}{mseif@email.arizona.edu}
\icmlcorrespondingauthor{Dung Nguyen}{dungn@virginia.edu}
\icmlcorrespondingauthor{Ravi Tandon}{tandonr@email.arizona.edu}

\icmlkeywords{Machine Learning, ICML}

\vskip 0.3in



\printAffiliationsAndNotice{\icmlEqualContribution} 

\begin{abstract}
The goal of community detection over graphs is to recover underlying labels/attributes of users (e.g., political affiliation) given the connectivity between users. There has been significant recent progress on understanding the fundamental limits of community detection when the graph is generated from a stochastic block model (SBM). Specifically, sharp information theoretic limits and efficient algorithms have been obtained for SBMs as a function of $p$ and $q$, which 
represent the intra-community and inter-community connection probabilities. In this paper, we study the community detection problem while preserving the privacy of the individual connections between the vertices. Focusing on the notion of $(\epsilon, \delta)$-edge differential privacy (DP), we seek to understand the fundamental tradeoffs between $(p, q)$, DP budget $(\epsilon, \delta)$, and computational efficiency for exact recovery of community labels. 

To this end, we present and analyze the associated information-theoretic tradeoffs for three differentially private community recovery mechanisms: a) stability based mechanism; b) sampling based mechanisms; and c) graph perturbation mechanisms.
Our main findings are that stability and sampling based mechanisms lead to a superior tradeoff between $(p,q)$ and the privacy budget $(\epsilon, \delta)$; however this comes at the expense of higher computational complexity. On the other hand, albeit low complexity, graph perturbation mechanisms require the privacy budget $\epsilon$ to scale as $\Omega(\log(n))$ for exact recovery. 
\end{abstract}

\section{Introduction}
\label{introduction}

Community detection in networks is a fundamental problem in the area of graph mining and machine learning, with many interesting applications such as social networks, image segmentation, and biological networks (see, e.g., the survey by~\cite{fortunato2010community}). The main goal is to partition the network into communities that are ``well-connected''; no standard definition for communities exists, and a large number of methods have been proposed, e.g.,~\cite{Blondel2008FastUO,girvan2002community,holland1983stochastic}, but, in general, there is a limited theoretical basis for the performance of these methods. One exception is the stochastic block model (SBM)~\cite{holland1983stochastic}, which is a probabilistic generative model for generating networks with underlying communities, providing a rigorous framework for detection algorithms. 
In the simplest canonical form of an SBM, the $n$ vertices are partitioned into $r$ communities, and a pair of vertices connect with probability $p$ within communities and with probability $q$ across communities, where $p > q$. ``Recovering'' communities in a graph generated from an SBM (defined formally in Section~\ref{sec:preliminaries_and_problem_statement}) has been a very active area of research, e.g.,~\cite{condon2001algorithms, arias2014community, abbe2015exact, hajek2016achieving}. The exact conditions for recoverability are well understood in terms of the scaling of $p$ and $q$ (more specifically the difference between $p$ and $q$). In particular, in the dense regime (the focus of this paper), with $p = a \log(n)/n$ and $q = b \log(n)/n$, for some constants $a > b > 0$, it is known that exact recovery is possible \emph{if and only if} $\sqrt{a} - \sqrt{b} > \sqrt{r}$ (see~\cite{abbe2017community} for a comprehensive survey). Efficient algorithms for recovering communities have been developed using spectral methods and semi-definite programming (SDP) ~\cite{boppana1987eigenvalues, mcsherry2001spectral,abbe2015exact,  massoulie2014community,gao2017achieving, hajek2016achieving, abbe2020entrywise, wang2020nearly}.

\begin{table*}[t]
    \centering
    \begin{tabular}{| c |c | c | c | c | c |}
    \hline
   & MLE-Stability & SDP-Stability & Bayesian & Exponential  &  RR + SDP   \\ 
    \hline \hline
      $\epsilon$ & \small{$\mathcal{O}(1)$} & \small{$\mathcal{O}(1)$}  & \small{$\Omega(\log(a/b))$} & \small{$\mathcal{O}(1)$} & \small{$\Omega(\log(n))$} \\ 
        \hline
        $\delta$ & $1/n^{2}$ & $1/n^{2}$ & 0 & 0 & $0$  \\
        \hline
        \small{$\sqrt{a} - \sqrt{b} \geq $} &  \small{$\sqrt{2} \cdot \sqrt{1 + 3/2\epsilon}$}  & \small{$\sqrt{2} \cdot \sqrt{2 + 3/2\epsilon}$}  &  Theorem \ref{thm:bayesian-privacy} &  Theorem \ref{thm:utility_exponential} & Theorem \ref{thm:private_threshold_condition} \\ 
        \hline
        Time complexity  & \small{$\mathcal{O}(\exp(n))$} &  \small{$n^{(\mathcal{O}(\log{(n)}))}$} & \small{$\mathcal{O}(\exp(n))$}  &  \small{$\mathcal{O}(\exp(n))$} &  \small{$\mathcal{O}(\operatorname{poly}(n))$}  \\
        \hline
    \end{tabular}
    \vspace{-5pt}
    \caption{Summary of the recovery threshold(s),  complexity and $(\epsilon, \delta)$-edge DP for $r=2$ communities.}
    \label{table:SBM_summary_results_approach2}
    \vspace{-10pt}
\end{table*}

In many applications, e.g., healthcare, social networks, and finance, network data is often private and sensitive, and there is a risk of revealing private information through adversarial queries. Differential Privacy (DP) \cite{dwork2014algorithmic} is the \textit{de facto} standard notion for providing rigorous privacy guarantees. DP ensures that each user's presence in the dataset has minimal statistical influence (measured by the privacy budget $\epsilon$) on the output of queries. Within the context of network/graph data, two privacy models have been considered--- edge and node privacy, and DP algorithms have also been developed for a few network problems, e.g., the number of subgraphs, such as stars and triangles, cuts, dense subgraphs, and communities, and releasing synthetic graphs~\cite{Kasiviswanathan:2013:AGN:2450206.2450232, blocki:itcs13,mulle2015privacy, nguyen2016detecting, qin2017generating, imola2021locally,blocki:itcs13};  most of them focus on edge privacy models, especially when the output is not a count. Finally, there has been very little work on community detection with privacy. \cite{nguyen2016detecting} consider communities based on the modularity. Very recently,~\cite{hehir2021consistency,ji2019differentially} consider community detection in the SBM models subject to edge privacy constraints (also see related work Section \ref{sec:related_work}); however, neither provides any rigorous bounds on the accuracy or the impact of edge privacy on the recovery threshold.

\vspace{-5pt}
\subsection{Contributions}
In this paper, \emph{we present the first differentially private algorithms for community detection in SBMs with rigorous bounds on recoverability}, under the edge privacy model. Informally, a community recovery algorithm satisfies edge privacy if the output has similar distribution irrespective of the 
presence or absence of an edge between any two vertices in the network (see Definition \ref{def:edgeDP}). Edge DP is the most natural privacy notion for community detection, as it involves outputting the partition of the nodes into communities. Our focus is on characterizing the recoverability threshold under edge DP, i.e., how much does the difference between $p$ and $q$ have to change in order to ensure recoverability with privacy. We analyze three classes of mechanisms for this problem.

\noindent
\emph{1. Stability based mechanisms.}
We show that the stability mechanism~\cite{thakurta2013differentially} gives $(\epsilon, \delta)$-DP algorithms for our problem. The main idea is to determine if a non-private community recovery estimator is stable with respect to graph $G$, i.e., the estimate of community structure does not change if a few edges are perturbed; if the estimator is stable, the non-private estimate of community labels can be released; otherwise, we release a random label. We analyze stability based mechanism for two estimators--- the maximum likelihood estimator (MLE), which involves solving a min-bisection problem, and an SDP based estimator. We also derive sufficient conditions for exact recovery for $r =2$ and $r > 2$ communities for both these types of algorithms---these require a slightly larger separation between $p$ and $q$ as a function of the privacy budget $\epsilon$; further, the threshold converges to the well known non-private bound as $\epsilon$ becomes large. The SDP based stability mechanism can be implemented in quasi-polynomial time. 

Stability based mechanisms are less common in the DP literature, compared to other mechanisms, e.g., exponential or randomized response, since proving stability turns out to be very challenging, in general, and is one of our important technical contributions. Stability of the MLE scheme requires showing that the optimum bisection does not change when $k=\mathcal{O}(\log{n})$ edges are perturbed, with high probability. This becomes even harder for the SDP based algorithm, which doesn't always produce an optimum solution. \cite{hajek2016achieving} construct a ``certificate'' for proving optimality of the SDP solution, with high probability. A technical contribution is to identify a new condition that makes the certificate \emph{deterministic}---this is crucial in our stability analysis.


\noindent
\emph{2. Sampling based mechanisms.}
In the second approach, we design two different sampling based mechanisms: (1) Bayesian Estimation and (2) Exponential mechanism. We show that these algorithms are differentially private (with constant $\epsilon$ for Bayesian Estimation and arbitrary small $\epsilon$ for the Exponential mechanism) and guarantee exact recovery under certain regimes of $\epsilon, a, b$; note that, in contrast to the stability based mechanisms, we have $\delta=0$.


\noindent
\emph{3. Randomized Response (RR) based mechanism.}
We also study and analyze a baseline approach, in which one can use a randomized response (RR) technique to perturb the adjacency matrix, and subsequently run an SDP based algorithm for community recovery on the perturbed graph. Due to the post-processing properties of DP, this mechanism satisfies $\epsilon$-DP for any $\epsilon>0$. We show that in contrast to stability and sampling based methods, the baseline RR  approach requires $\epsilon= \Omega(\log(n))$ for exact recovery. 

\noindent
\emph{4. Empirical evaluation.}
We also present simulation results on both synthetic and real-world graphs to validate our theoretical findings (Section \ref{sec:empirical_results}). We observe that the stability based mechanism generally outperforms the others in terms of the error, which is quite small even for fairly small $\epsilon$. Interestingly, the error is low even in real world networks.

\begin{figure}[!t]
\centering
    \begin{minipage}{.25\textwidth}
	\centering
	 \subcaptionbox{$\epsilon = 2$. (\textit{high} privacy regime)}
	{\includegraphics[width=\linewidth]{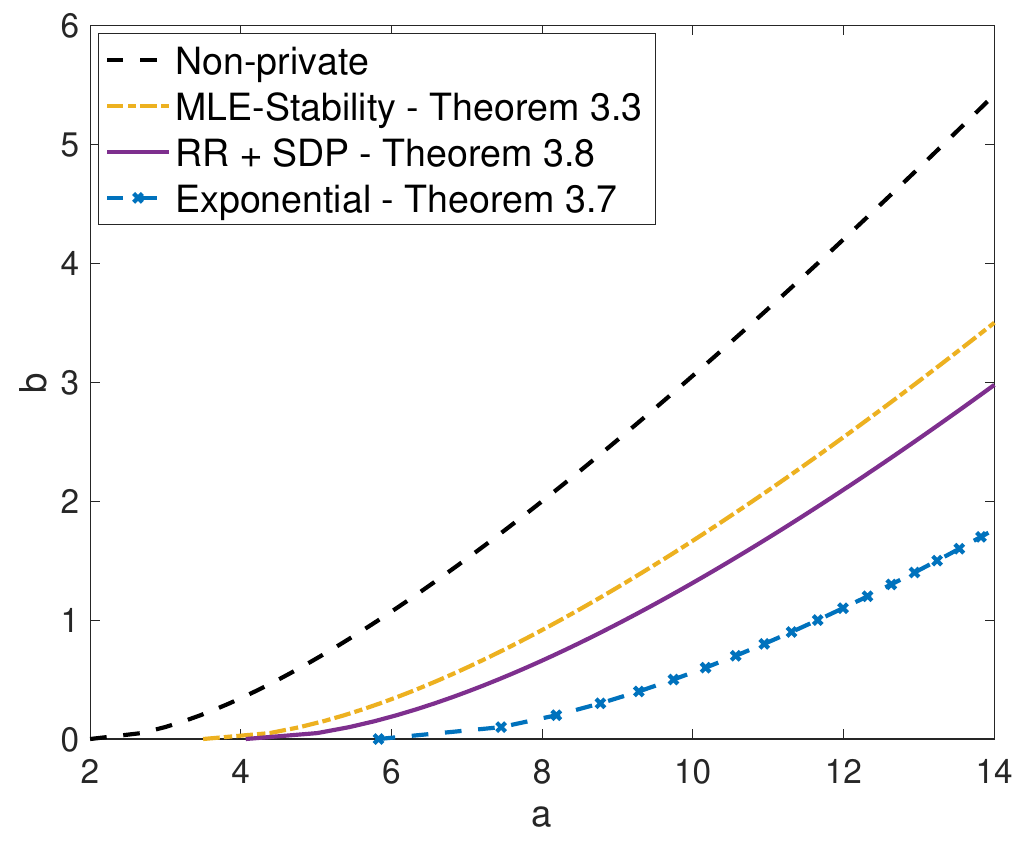}}
	\end{minipage}
	   \begin{minipage}{.25\textwidth}
	\centering
	 \subcaptionbox{$\epsilon = 4$. (\textit{low} privacy regime)}
	{\includegraphics[width=\linewidth]{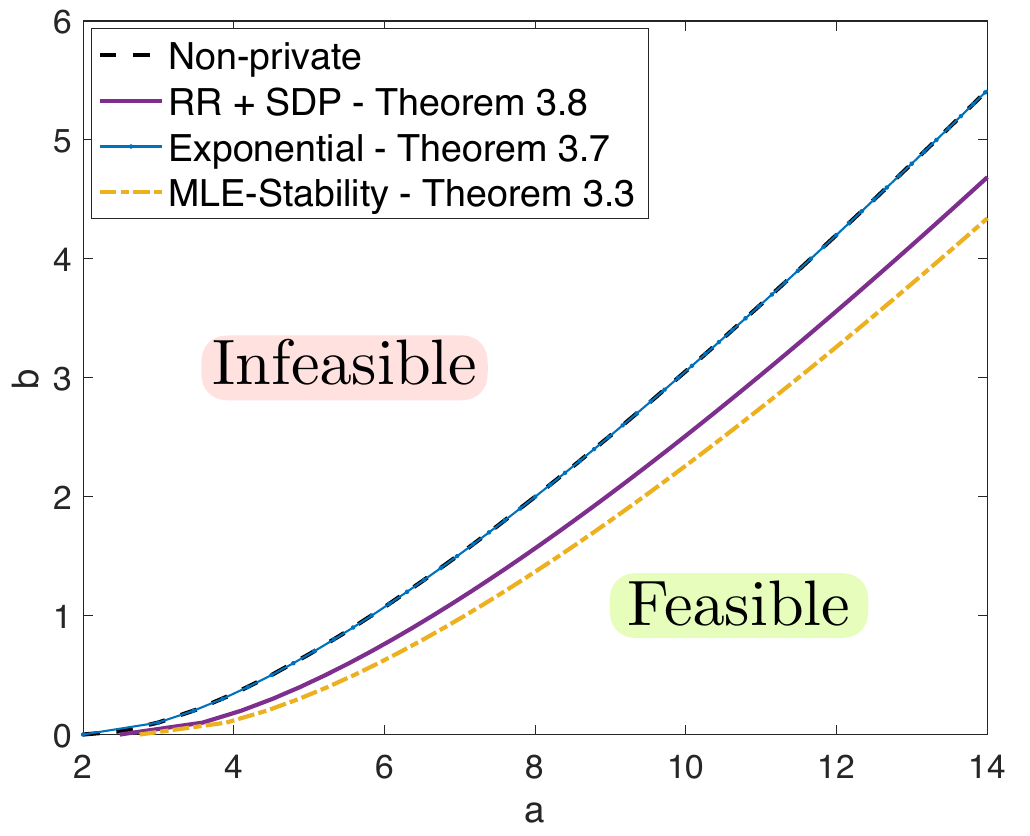}}
	\end{minipage}
	    \caption{Exact Recovery Threshold as a function of $(a, b)$, and the privacy budget $\epsilon$ for $r = 2$ communities.}
	    \label{fig:plot_of_thresholds_private}
	    \vspace{-20pt}
\end{figure}

\noindent
\emph{Comparison between different mechanisms.}
We summarize our theoretical results for differentially private community recovery in Table \ref{table:SBM_summary_results_approach2}, which shows the tradeoffs between $(a, b)$, $(\epsilon, \delta)$ as well as the computational complexity of the mechanisms for $r=2$ communities. Note that none of the mechanisms is redundant--- each is the best in some part of the complex space consisting of the parameters $a, b, \epsilon, \delta$, and the running time.
To further illustrate these tradeoffs, we plot the recovery threshold conditions for these mechanisms in Fig. \ref{fig:plot_of_thresholds_private}. From Fig. \ref{fig:plot_of_thresholds_private}(a), we observe that for the high privacy regime (smaller $\epsilon$), MLE based Stability mechanism requires the least separation between $a$ and $b$ compared to all other algorithms. In the low privacy regime (larger $\epsilon$), as shown in Fig. \ref{fig:plot_of_thresholds_private}(b), we can see that exponential mechanism tends to overlap with the non-private recovery threshold \cite{abbe2015exact}, whereas stability-based and RR based mechanisms require more separation between $a$ and $b$. Complete proofs are presented in the Appendix.

\vspace{-5pt}
\subsection{Related Work}\label{sec:related_work}

We first summarize a few of the main results on the complexity of different recoverability algorithms and then discuss some relevant work on SBMs with DP.
The seminal work of \cite{abbe2015exact} showed that the optimal reconstruction of graph partitions is achieved by the maximum likelihood (ML) estimator, which is computationally intractable.  
\cite{boppana1987eigenvalues, mcsherry2001spectral} designed polynomial time algorithms for exact recovery; however, they did not achieve the optimal information theoretic bound, i.e., $\sqrt{a} - \sqrt{b} > \sqrt{r}$.  \cite{abbe2015exact} showed the first computationally efficient algorithm that achieves the information theoretic limit. This algorithm has two phases:  the first phase performs partial recovery via the algorithm of  \cite{massoulie2014community}. The second phase uses a local improvement to refine the recovery.  \cite{hajek2016achieving} showed that an SDP based rounding algorithm achieves the optimal recovery threshold in polynomial time, and settled the conjecture of \cite{abbe2015exact}. Recently, there have been different computationally efficient recovery algorithms \cite{gao2017achieving, hajek2016achieving, abbe2020entrywise, wang2020nearly}  proposed that achieve the optimal recovery threshold in polynomial time or quasi-linear time for different settings, e.g., multiple communities with different sizes. 

As mentioned earlier, there has been little work on community detection with differential privacy. \cite{nguyen2016detecting} consider the problem of finding communities by modularity maximization.
\cite{qin2017generating} design heuristics for models which are related to SBM. 
Other related work is on estimating parameters of graphons, which are generalizations of SBMs.
\cite{borgs2015private} developed an exponential time algorithm for estimating properties in the node DP model, and derived optimal information theoretic error bounds.  \cite{sealfon2019efficiently} improved this and designed a polynomial time algorithm. 
\cite{hehir2021consistency} study the problem of privacy-preserving community detection on SBMs using a simple spectral method \cite{lei2015consistency} for multiple communities. They generalized the convergence rate analysis of the spectral algorithm and showed the impact of the privacy parameters on the misclassification rate between the ground truth labels and the estimated labels for the algorithm. \cite{ji2019differentially} propose a DP gradient based community detection algorithm. However, neither of these results analyze the thresholds for recoverability, which has remained an open problem (under edge DP constraints) till now.

\vspace{-5pt}
\section{Problem Statement \& Preliminaries} 
\label{sec:preliminaries_and_problem_statement}

 We consider an undirected graph $G = (\mathcal{V}, E)$ consisting of $n$ vertices (vertices), where vertices are divided into $r$ communities  with $\frac{n}{r}$ vertices in each community. The community label for vertex $i$ is denoted by $\sigma^{*}_{i} \in \{1, 2, \cdots, r\}, \forall i \in [n]$. We focus on the setting when the graph $G$ is generated through a Stochastic block model (SBM), where the edges within the classes are generated independently with probability $p$ and the edges between the classes are generated independently with probability $q$. The connections between vertices are represented by an adjacency matrix $\mathbf{A} \in \{0, 1\}^{n \times n}$, where the elements in $\mathbf{A}$ are drawn as:
 \begin{align}
     A_{i,j} \sim \begin{cases} \operatorname{Bern}(p), & i < j, ~~\sigma_{i} = \sigma_{j},  \\
     \operatorname{Bern}(q), & i < j, ~~\sigma_{i} \neq \sigma_{j}.  
     \end{cases}
 \end{align}
with $A_{i,i}=0$ and $A_{i,j}=A_{j,i}$. For the scope of this paper, we focus on the so called ``dense" connectivity regime, where $p = \frac{a \log(n)}{n}$ and $q = \frac{b \log(n)}{n}$, and $a, b\geq 0$ are fixed constants. Note that one can consider other regimes for $p$ and $q$ such as the ``sparse" regime \cite{decelle2011asymptotic}, i.e., $ p = \frac{a}{n}$ and $q = \frac{b}{n}$, however, in this regime exact recovery is not possible since the graph with high probability is not connected. On the other hand, in the dense regime one can still exactly recover the labels of the graph with high probability.
The goal of community detection problem is to design a (stochastic) estimator $\hat{\bm{\sigma}}: \mathbf{A} \rightarrow \{1, 2, \cdots, r\}^{n}$ for community recovery (i.e, the true label vector $\bm{\sigma}^{*} = \{{\sigma}^{*}_{1}, {\sigma}^{*}_{2}, \cdots, {\sigma}^{*}_{n}\}$) upon observing the adjacency matrix. We next define the notion of exact asymptotic recovery as a measure of performance of an estimator. 

 \begin{definition} [Exact Recovery] 
 An estimator $\hat{\bm{\sigma}} = \{\hat{\sigma}_{1}, \hat{\sigma}_{2}, \cdots, \hat{\sigma}_{n}\}$ satisfies exact recovery (upto a global permutation of the community labels) if the probability of error behaves as  
\begin{align}
     \operatorname{Pr}(\hat{\bm{\sigma}} \neq  \bm{\sigma}^{*} ) =  o(1),
\end{align}
where the probability is taken over both the randomness of the graph $G$ as well as the stochastic estimation process. 
\end{definition}
In addition to exact recovery, we require that the recovery algorithm for community detection also protects the individual relationships (i.e., the edges in the graph $G$) in the network. Specifically, we adopt the notion of $(\epsilon, \delta)$-edge differential privacy (DP) \cite{karwa2011private}, defined next.
 \begin{definition} [$(\epsilon, \delta)$-edge DP] \label{def:edgeDP}  An estimator $\hat{\bm{\sigma}}$ satisfies $(\epsilon, \delta)$-edge DP for some $\epsilon \in \mathds{R}^{+}$ and $\delta \in (0, 1]$, if for any pair of adjacency matrices $\mathbf{A}$ and $\mathbf{A}'$ that differ in one edge, we have 
\begin{align}\label{eq:edgeDP}
    \operatorname{Pr}(\hat{\bm{\sigma}}(\mathbf{A}) = \bm{\sigma}) \leq e^{\epsilon}   \operatorname{Pr}(\hat{\bm{\sigma}}(\mathbf{A}') = \bm{\sigma}) + \delta.
\end{align}
For privacy constraints in \eqref{eq:edgeDP}, the probabilities are computed only over the randomness in the estimation process. The case of $\delta = 0$ is called pure $\epsilon$-edge DP.
\end{definition} 
 
\subsection{Prior results on exact recovery without privacy}

The optimal maximum likelihood (ML) estimator for community detection, given by $\hat{\bm{\sigma}}_{\text{ML}} = \arg \max_{\bm{\sigma}} p(\mathbf{A}|\bm{\sigma})$ has been recently analyzed in a series of papers \cite{boppana1987eigenvalues, mcsherry2001spectral, choi2012stochastic, abbe2015exact, mossel2015consistency}. It has been shown that for SBMs with ``dense" regime, i.e.,  $p = \frac{a \log(n)}{n}$ and $q = \frac{b \log(n)}{n}$, exact recovery is possible if and only if $\sqrt{a} - \sqrt{b} > \sqrt{r}$ (often referred to as the phase transition boundary or exact recovery threshold). Even for $r=2$ communities, the ML estimator is equivalent to finding the minimum bisection of the graph, which is known to be NP-hard \cite{abbe2015exact}. Specifically, the ML estimator of $\bm{\sigma}^{*}$  is the solution of the following optimization problem:
\begin{align}
\hat{\bm{\sigma}}_{\text{ML}} = \arg \max_{\bm{\sigma}} \{\bm{\sigma}^{T}\mathbf{A} \bm{\sigma}: \mathbf{1}^{T} \bm{\sigma} =0, \sigma_{i} = \pm 1\}.
\end{align}
Subsequently,
several works have studied if polynomial time algorithms can still achieve the exact recovery threshold. For instance, it has been shown \cite{hajek2016achieving}, \cite{hajek2016achieving_extensions} that SDP relaxation of the ML estimator can also achieve the same recovery threshold. 
Recently, Abbe et.al. \cite{abbe2020entrywise} have analyzed the spectral clustering estimator \cite{lei2015consistency}, and showed that it achieves the same recovery threshold as ML for $r = 2$.


\vspace{-5pt}
\section{Main Results \& Discussions}
\label{sec:main_results}

In this section, we present three different approaches for the design of community detection algorithms for exact recovery while satisfying edge differential privacy.  In the first approach, we analyze the stability property of ML based and SDP based algorithms. For MLE based algorithm, the stability property of the min-bisection hinges on the concentration properties of SBMs in terms of the intra and inter communities edges. For SDP based algorithm, we introduce a concept of concentration that both (1) provides sufficient conditions for the dual certificate of the SDP and (2) persists under certain degrees of connection perturbation. In the second approach, we study and analyze sampling based mechanisms, which release a differentially private estimate of the community labels via sampling.  In the third approach, we perturb the adjacency matrix $\mathbf{A}$ to satisfy DP (using randomized response (RR)), and perform the estimation of community labels using the perturbed graph by using computationally efficient SDP relaxation of the maximum-likelihood estimator. In Table \ref{table:SBM_summary_results_approach2}, we summarize our main results for the case of $r=2$ communities, where we show the constraints on the privacy budget $(\epsilon, \delta)$ and sufficient conditions on $(a,b)$ for exact recovery. 

\vspace{-5pt}
\subsection{Stability-based Mechanisms}
The basic idea behind stability based mechanisms is as follows: Let us consider a non-private estimator for community detection $\hat{\bm{\sigma}}$. We first \emph{privately} compute the stability of this estimator with respect to a graph $G$, which essentially is the \emph{minimum} number of edge modifications on $G$, so that the estimator output on the modified graph $G'$ differs from that on $G$, i.e., $\hat{\bm{\sigma}}(G)\neq \hat{\bm{\sigma}}(G')$. If the graph $G$ is stable enough (i.e., if the estimate of stability is larger than a threshold, which depends on $(\epsilon, \delta)$), then we release the non-private estimate $\hat{\bm{\sigma}}(G)$, otherwise we release a random label vector. The key intuition is that from the output of a stable estimator, one cannot precisely infer the presence or absence of a single edge (thereby providing edge DP guarantee). Before presenting the general stability mechanism, we formally define $d_{\hat{\sigma}}(G)$, which quantifies the stability of an estimator $\hat{\bm{\sigma}}$ with respect to a graph $G$. 
\begin{definition} [Stability of $\hat{\bm{\sigma}}$]
 The stability of an estimator $\hat{\bm{\sigma}}$ with respect to a graph $G$ is defined as follows:

%
%

 \begin{align}
 \label{eqn:stab-sigma2}
    d_{\hat{\bm{\sigma}}}(G) = \{\min_k:  \exists G', \text{dist}(G, G')\leq k+1, \hat{\bm{\sigma}}(G)\neq \hat{\bm{\sigma}}(G')\}. 
\end{align}

\end{definition}
We now present the general stability based mechanism in Algorithm $1$.
\begin{algorithm}
  \caption{$\mathcal{M}^{\hat{\bm{\sigma}}}_{\operatorname{Stability}}(G)$: Stability  Based Mechanism}
  \label{algo:stability}
  \begin{algorithmic}[1]
     \STATE {\bfseries Input:} $G(\mathcal{V}, E) \in \mathcal{G}$
     \STATE {\bfseries Output:} labelling vector $\hat{\bm{\sigma}}_{\text{Private}}$. 
    \STATE $d_{\hat{\bm{\sigma}}}(G) \leftarrow$ stability of $\hat{\bm{\sigma}}$ with respect to graph $G$
    \STATE $\tilde{d} \leftarrow d_{\hat{\bm{\sigma}}}(G) + \operatorname{Lap}(1/\epsilon)$
    \IF{$\tilde{d}  > \frac{\log{1/\delta}}{\epsilon}$}
    \STATE Output $\hat{\bm{\sigma}}(G)$
    \ELSE
    \STATE Output $\perp$ (random label) 
    \ENDIF
  \end{algorithmic}
\end{algorithm}
We first state the following claim about the privacy guarantee of the above mechanism \cite{dwork2014algorithmic}.
\begin{lemma}\label{thm:privacy_guarantee_stability}
  For any community detection algorithm $\hat{\bm{\sigma}}$,  $\mathcal{M}^{\hat{\bm{\sigma}}}_{\operatorname{Stability}}(G)$ satisfies $(\epsilon, \delta)$-edge DP. 
\end{lemma}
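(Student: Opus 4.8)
The plan is to recognize $\mathcal{M}^{\hat{\bm{\sigma}}}_{\operatorname{Stability}}$ as an instance of the \emph{distance-to-instability} (propose-test-release) framework, and to reduce its privacy to two structural properties of the stability statistic $d_{\hat{\bm{\sigma}}}$ together with the standard guarantee of the Laplace mechanism. Throughout, write $\theta := \log(1/\delta)/\epsilon$ for the test threshold, fix two adjacency matrices $G, G'$ that differ in exactly one edge, and read the ``random label'' $\perp$ as a draw from a distribution $\mu$ that does not depend on the graph.

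First I would establish that $d_{\hat{\bm{\sigma}}}$ has $\ell_1$-sensitivity one, i.e. $|d_{\hat{\bm{\sigma}}}(G) - d_{\hat{\bm{\sigma}}}(G')| \le 1$. This follows from \eqref{eqn:stab-sigma} and the triangle inequality for the edit distance $\text{dist}(\cdot,\cdot)$: any graph $G''$ witnessing instability of $G'$ within distance $k$ lies within distance $k+1$ of $G$, and if it happens to share $G$'s label then $G'$ itself witnesses instability of $G$ at distance $1 \le k+1$; either way $d_{\hat{\bm{\sigma}}}(G) \le d_{\hat{\bm{\sigma}}}(G')+1$, and symmetrically. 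Second, I would record the \emph{stability} consequence: since $\text{dist}(G,G') = 1$, whenever $d_{\hat{\bm{\sigma}}}(G) \ge 2$ no graph within distance one of $G$ changes the estimator, so $\hat{\bm{\sigma}}(G) = \hat{\bm{\sigma}}(G')$; note also that $d_{\hat{\bm{\sigma}}}(G) \ge 1$ always, since distance zero cannot change the label.

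With these in hand I would verify the inequality of Definition~\ref{def:edgeDP} for an arbitrary output set $S$ by splitting the output into the ``fail'' branch (release $\perp \sim \mu$) and the ``pass'' branch (release $\hat{\bm{\sigma}}(G)$). For the fail branch, the event $\{\tilde d \le \theta\}$ is a post-processing of the noisy statistic $d_{\hat{\bm{\sigma}}}(G)+\operatorname{Lap}(1/\epsilon)$; by sensitivity one its probability is within a factor $e^{\epsilon}$ across $G,G'$, and since $\mu$ is graph-independent this branch obeys the pure $\epsilon$-bound with no additive term. For the pass branch I would case on $d_{\hat{\bm{\sigma}}}(G)$: if $d_{\hat{\bm{\sigma}}}(G) \ge 2$ then $\hat{\bm{\sigma}}(G) = \hat{\bm{\sigma}}(G')$ by stability, so the two mechanisms release the identical label and the pass probabilities are again $e^{\epsilon}$-close by Laplace. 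The only dangerous case is $d_{\hat{\bm{\sigma}}}(G) = 1$, where $\hat{\bm{\sigma}}(G)$ and $\hat{\bm{\sigma}}(G')$ may differ; there, passing requires $\operatorname{Lap}(1/\epsilon) > \theta - 1$, a one-sided Laplace tail scaling like $e^{-\epsilon(\theta-1)} = e^{\epsilon}\delta$, and this is precisely the rare event absorbed into the additive slack $\delta$, which is why the threshold is calibrated to $\theta = \log(1/\delta)/\epsilon$. Summing the two branches gives $\operatorname{Pr}(\hat{\bm{\sigma}}(G)\in S) \le e^{\epsilon}\operatorname{Pr}(\hat{\bm{\sigma}}(G')\in S) + \delta$.

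I expect the main obstacle to be the bookkeeping in the pass branch: one must argue that the \emph{only} way the released label can differ between neighbors is $d_{\hat{\bm{\sigma}}}(G)=1$, and then bound the probability of passing in that case purely by the Laplace tail (independent of $\hat{\bm{\sigma}}$), matching it to the target $\delta$. Some care is needed with the tail constant, since the naive estimate yields $e^{\epsilon}\delta/2$ rather than $\delta$; this is cleanly resolved by assuming the meaningful regime $\theta \ge 1$ (equivalently $\delta \le e^{-\epsilon}$) or by shifting the threshold by one, and the sensitivity and stability facts themselves are routine consequences of \eqref{eqn:stab-sigma}.
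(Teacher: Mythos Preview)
Your proposal is correct and follows essentially the same propose--test--release argument as the paper: sensitivity one for $d_{\hat{\bm{\sigma}}}$ via the triangle inequality, $\epsilon$-DP for the noisy stability via Laplace, and a case split on whether the deterministic labels on $G$ and $G'$ agree.

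One point worth noting: the paper's proof works with ``distance to the nearest \emph{unstable} instance'' (so $d(G)=d(G')=0$ whenever $\hat{\bm{\sigma}}(G)\neq\hat{\bm{\sigma}}(G')$), which is off by one from the definition \eqref{eqn:stab-sigma} you follow (where $d_{\hat{\bm{\sigma}}}\ge 1$ always). With the paper's convention the bad-case pass probability is $\Pr[\operatorname{Lap}(1/\epsilon)>\theta]=\delta/2\le\delta$ and no extra assumption is needed; under \eqref{eqn:stab-sigma} one indeed gets the $e^{\epsilon}\delta/2$ you flag, and your proposed fixes (either shift the threshold by one, or restrict to $\delta\le e^{-\epsilon}$) are exactly what is required. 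So your care about the tail constant is not pedantry---it is compensating for a definitional discrepancy that the paper resolves silently.
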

In the above algorithm, Step $4$ ensures that the stability is computed privately and Step $5$ ensures that the non-private estimate is released only if the estimator is stable enough (i.e., $\tilde{d}  > \frac{\log{1/\delta}}{\epsilon}$). 

Our first main contribution is to analyze the performance of $\mathcal{M}^{\hat{\bm{\sigma}}}_{\operatorname{Stability}}(G)$ and establish sharp phase transition thresholds for exact recovery as a function of $(p, q)$ and $(\epsilon, \delta)$. Specifically, we focus on two possible choices for $\hat{\bm{\sigma}}$: a) when we use the MLE estimator, i.e., $\hat{\bm{\sigma}}=\hat{\bm{\sigma}}_{\text{MLE}}$, and b) when we use the computationally efficient SDP relaxation, i.e., $\hat{\bm{\sigma}}=\hat{\bm{\sigma}}_{\text{SDP}}$. 

\noindent
\textbf{\emph{Stability of MLE.}}
We start by first presenting the results for MLE based approach for both $r=2$ communities and then for $r>2$ communities. 

\begin{theorem}\label{thm:error_stability_mechanism_two} 


For $r=2$ communities, 
$\mathcal{M}^{\text{MLE}}_{\operatorname{Stability}}(G)$ satisfies exact recovery if 
\begin{align}
\label{eqn:mle-2comm}
 \sqrt{a} - \sqrt{b} > \sqrt{2} \times \sqrt{1 + \frac{t+1}{2\epsilon}}
\end{align}
for any $\epsilon>0$ and $\delta = n^{-t}$, $t > 0$.  


\end{theorem}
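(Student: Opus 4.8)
The plan is to control the two ways the mechanism $\mathcal{M}^{\text{MLE}}_{\operatorname{Stability}}(G)$ can fail to return $\bm{\sigma}^*$ and to bound each separately. Let $\mathcal{E}_1$ be the event that the non-private estimator is already wrong, $\hat{\bm{\sigma}}_{\text{MLE}}(G)\neq\bm{\sigma}^*$, and let $\mathcal{E}_2$ be the event that Step~5 fails, i.e. $\tilde d\le \frac{\log(1/\delta)}{\epsilon}$, so the mechanism outputs the random label $\perp$. On $(\mathcal{E}_1\cup\mathcal{E}_2)^c$ the mechanism returns exactly $\bm{\sigma}^*$, so by a union bound it suffices to show $\operatorname{Pr}(\mathcal{E}_1)=o(1)$ and $\operatorname{Pr}(\mathcal{E}_2)=o(1)$. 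Since the hypothesis $\sqrt{a}-\sqrt{b}>\sqrt{2}\sqrt{1+1/\epsilon}$ implies in particular $\sqrt{a}-\sqrt{b}>\sqrt{2}=\sqrt{r}$, the non-private exact-recovery threshold for the MLE/min-bisection stated in Section~\ref{sec:preliminaries_and_problem_statement} gives $\operatorname{Pr}(\mathcal{E}_1)=o(1)$ directly. The entire problem therefore reduces to bounding $\operatorname{Pr}(\mathcal{E}_2)$.

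For $\mathcal{E}_2$, I would write $\tilde d = d_{\hat{\bm{\sigma}}}(G)+\operatorname{Lap}(1/\epsilon)$ and use $\delta\ge 1/n$, which gives the worst-case threshold $\frac{\log(1/\delta)}{\epsilon}\le\frac{\log n}{\epsilon}$. The Laplace tail $\operatorname{Pr}(\operatorname{Lap}(1/\epsilon)<-t)=\tfrac12 e^{-\epsilon t}$ shows that, taking $t=\Theta(\log n/\epsilon)$, the noise is at least $-t$ except with polynomially small probability. Hence it is enough to establish a high-probability \emph{deterministic} lower bound $d_{\hat{\bm{\sigma}}}(G) > \frac{\log(1/\delta)}{\epsilon}+t$, i.e. $d_{\hat{\bm{\sigma}}}(G)=\Omega(\log n/\epsilon)$ with the correct constant.

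The heart of the argument is this stability lower bound. I would first prove the purely combinatorial inequality
\[
 d_{\hat{\bm{\sigma}}}(G)\ \ge\ \tfrac14\min_{\bm{\sigma}\neq\bm{\sigma}^*}\big(\bm{\sigma}^{*T}\mathbf{A}\bm{\sigma}^*-\bm{\sigma}^{T}\mathbf{A}\bm{\sigma}\big),
\]
which holds because flipping a single edge changes any quadratic form $\bm{\sigma}^{T}\mathbf{A}\bm{\sigma}$ by at most $2$, hence changes the margin by at most $4$; so to make some competitor $\bm{\sigma}$ overtake $\bm{\sigma}^*$ one must flip at least a quarter of its margin in edges. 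It then remains to show the minimum margin is $\Omega(\log n/\epsilon)$. Writing $e^{\mathrm{in}}_v,e^{\mathrm{out}}_v$ for the number of neighbours of $v$ inside and outside its own community, the margin of the competitor obtained by moving $v$ is $4(e^{\mathrm{in}}_v-e^{\mathrm{out}}_v)$; an exchange/peeling argument as in the non-private MLE analysis reduces the minimization over the exponentially many competitors to local (single-vertex, balanced-swap) moves, so the minimum margin concentrates like $\min_v(e^{\mathrm{in}}_v-e^{\mathrm{out}}_v)$ up to the balance constraint. With $e^{\mathrm{in}}_v\sim\operatorname{Bin}(n/2-1,p)$ and $e^{\mathrm{out}}_v\sim\operatorname{Bin}(n/2,q)$ independent, a Chernoff bound with exponential tilt $\theta$ gives, for $\tau=\Theta(\log n/\epsilon)$,
\[
 \operatorname{Pr}\big(e^{\mathrm{in}}_v-e^{\mathrm{out}}_v\le\tau\big)\ \le\ \min_{\theta\ge 0}\exp\!\Big(\theta\tau+\tfrac{a\log n}{2}(e^{-\theta}-1)+\tfrac{b\log n}{2}(e^{\theta}-1)\Big),
\]
whose leading exponent, after a union bound over the $n$ vertices, is negative precisely when the recovery rate $\frac{(\sqrt a-\sqrt b)^2}{2}$ exceeds the budget $1+\frac1\epsilon$ built from the usual recovery term ($1$) and the privacy offset induced by the threshold $\log(1/\delta)/\epsilon$ ($1/\epsilon$); this is exactly $\sqrt{a}-\sqrt{b}>\sqrt{2}\sqrt{1+1/\epsilon}$.

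The step I expect to be the main obstacle is this stability concentration, for two reasons. First, replacing the minimum over all $\bm{\sigma}\neq\bm{\sigma}^*$ by $O(n)$ local moves requires a monotonicity-of-margin argument that is clean only in the exact-recovery regime where $\bm{\sigma}^*$ is the unique optimum with a strictly positive gap, and the balance constraint forces competitors to be vertex \emph{swaps} rather than single moves, which must be handled without losing the constant in front of $\log n$. Second, extracting the clean factor $\sqrt{1+1/\epsilon}$ rather than a messier $\log(a/b)$-dependent expression hinges on the exact tilt chosen in the Chernoff step and on correctly matching the privacy threshold $\log(1/\delta)/\epsilon$ to the large-deviation rate; I would therefore carry the tilt $\theta$ as a free parameter and verify that the stated separation is the intended sufficient condition for the $n$-fold union bound to vanish.
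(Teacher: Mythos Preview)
Your high-level decomposition into $\mathcal{E}_1$ (MLE wrong) and $\mathcal{E}_2$ (noise drops $\tilde d$ below threshold) is exactly the paper's, and your combinatorial stability bound $d_{\hat{\bm{\sigma}}}(G)\ge \tfrac14\min_{\bm{\sigma}\neq\bm{\sigma}^*}(\bm{\sigma}^{*T}\mathbf{A}\bm{\sigma}^*-\bm{\sigma}^T\mathbf{A}\bm{\sigma})$ is equivalent to the paper's Lemma~\ref{lemma:mle-stable_lower_bound} (since $\tfrac14(\bm{\sigma}^{*T}\mathbf{A}\bm{\sigma}^*-\bm{\sigma}^T\mathbf{A}\bm{\sigma})=E_{\text{inter}}(\bm{\sigma})-E_{\text{inter}}(\bm{\sigma}^*)$). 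The Laplace-tail step with $f_{n,\epsilon}=2\log(n)/\epsilon$ is also the same.

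The substantive divergence is in how you control $\min_{\bm{\sigma}\neq\bm{\sigma}^*}\big(E_{\text{inter}}(\bm{\sigma})-E_{\text{inter}}(\bm{\sigma}^*)\big)$. You propose a peeling/exchange reduction to single swaps, so that the minimum margin is governed by $\min_v(e^{\mathrm{in}}_v-e^{\mathrm{out}}_v)$. The paper does \emph{not} do this. It takes the full union bound over all competitors, stratified by the swap size $k=|S_1|=|S_2|$: for each $k$ there are at most $\binom{n}{k}^2$ competitors, and the margin is distributed as $\tilde m_1^{(k)}-\tilde m_2^{(k)}$ with $\tilde m_i^{(k)}\sim\operatorname{Bin}(2k(n-k),\cdot)$. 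The Chernoff/large-deviation exponent for $\Pr(\tilde m_1^{(k)}-\tilde m_2^{(k)}<f_{n,\epsilon})$ scales like $\frac{4k(n-k)}{n}\,C_{a,b}\cdot\log n$ with $C_{a,b}=\tfrac{a+b}{2}-\sqrt{1/\epsilon^2+ab}$, and this linear-in-$k$ rate exactly beats the entropy $2k\log(n/k)$ for \emph{every} $k$ precisely when $C_{a,b}>1$, which (via $\sqrt{x+y}\le\sqrt x+\sqrt y$) is implied by $\sqrt a-\sqrt b>\sqrt 2\,\sqrt{1+1/\epsilon}$.

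Your worry about the peeling step is well placed: that reduction is not proved in the paper and is not needed. More to the point, the single-vertex route is essentially what the paper uses for $r>2$ (Theorem~\ref{thm:error_stability_mechanism_general}), and there the residual $\log(a/b)$ term \emph{does} survive into the final condition. For $r=2$ the clean $\sqrt{1+1/\epsilon}$ comes from doing the whole $k$-sum: the $\tfrac1\epsilon\log(a/b)$ term that falls out of the Chernoff bound is $k$-independent and thus contributes only a harmless constant factor $(a/b)^{1/\epsilon}$ outside the sum, while the $k$-dependent part yields $C_{a,b}>1$. So rather than trying to justify the monotonicity/peeling argument, you should carry the union bound over all $k$ directly; that is both simpler and what actually produces the stated threshold.
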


We note two important points:
\emph{(1) In contrast to the non-private recovery threshold $\sqrt{a} - \sqrt{b} > \sqrt{2}$, the impact of edge DP shows up explicitly in the threshold condition; (2) As we relax the relax the privacy budget, namely as $\epsilon\rightarrow \infty$, the privacy constrained threshold converges to the non-private threshold.} 
We next generalize our results to $r>2$ equal sized communities and present a sufficient condition on $a$ and $b$ for exact recovery.
\begin{theorem}\label{thm:error_stability_mechanism_general} 


For $r>2$ communities, 
$\mathcal{M}^{\text{MLE}}_{\operatorname{Stability}}(G)$ satisfies exact recovery if 
\begin{align}
\label{eqn:mle-rcomm}
    \sqrt{a} - \sqrt{b} > \sqrt{r} \times \sqrt{ 1 + \frac{t+1}{\epsilon} \times \bigg(1 + \log \sqrt{\frac{a}{b}} \bigg)}
 \end{align}
for any $\epsilon>0$ and $\delta = n^{-t}$, $t>0$.  


\end{theorem}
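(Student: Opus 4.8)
Here is my plan for proving Theorem~\ref{thm:error_stability_mechanism_general}.

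The plan is to split the failure event of $\mathcal{M}^{\text{MLE}}_{\operatorname{Stability}}(G)$ into two parts: (i) the non-private estimator errs, $\hat{\bm{\sigma}}_{\text{MLE}}(G)\neq\bm{\sigma}^*$; and (ii) the stability test rejects, i.e. $\tilde d = d_{\hat{\bm{\sigma}}}(G) + \operatorname{Lap}(1/\epsilon) \le \frac{\log(1/\delta)}{\epsilon}$, so a random label is released. For part (i), since the hypothesized separation is strictly stronger than the non-private threshold $\sqrt a - \sqrt b > \sqrt r$, the exact-recovery guarantee for the MLE already gives $\Pr[\hat{\bm{\sigma}}_{\text{MLE}}(G)\neq\bm{\sigma}^*] = o(1)$. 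All the new work is in part (ii): taking the worst case $\delta = 1/n$ so the threshold is $\frac{\log n}{\epsilon}$, it suffices to show that $d_{\hat{\bm{\sigma}}}(G) > k$ with high probability for some $k = \frac{\log n}{\epsilon} + \omega(1/\epsilon)$. The excess $k - \frac{\log n}{\epsilon}$ then makes the Laplace lower tail $\Pr[\operatorname{Lap}(1/\epsilon)\le -(k-\tfrac{\log n}{\epsilon})] = \tfrac12 e^{-\epsilon(k-\log n/\epsilon)}$ vanish, exactly as in the proof of Theorem~\ref{thm:error_stability_mechanism_two}.

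Next I would give a combinatorial handle on the stability. Because all communities have fixed size $n/r$, the MLE maximizes the intra-community edge count $e_{\text{in}}(\bm{\sigma}) = \sum_{i<j,\ \sigma_i=\sigma_j} A_{ij}$, and a single edge flip changes $e_{\text{in}}(\bm{\sigma}^*) - e_{\text{in}}(\bm{\tau})$ by at most one for any fixed $\bm{\tau}$; moreover such reducing flips are always realizable in the dense regime. Hence, up to an additive constant,
\[ d_{\hat{\bm{\sigma}}}(G) = \min_{\bm{\tau}\neq\bm{\sigma}^*}\big[\,e_{\text{in}}(\bm{\sigma}^*) - e_{\text{in}}(\bm{\tau})\,\big], \]
i.e. the stability equals the likelihood margin of $\bm{\sigma}^*$. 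Since balance must be preserved, the cheapest competing partitions are single transpositions (swap $u$ in community $A$ with $v$ in community $B$), whose margin is $\alpha_u + \beta_v$ with $\alpha_u = D_A(u) - D_B(u)$ and $\beta_v = D_B(v) - D_A(v)$ --- a sum of two nearly independent degree-difference statistics, each with positive mean $\tfrac{(a-b)\log n}{r}$.

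The crux is a uniform large-deviation bound proving $\Pr[\,d_{\hat{\bm{\sigma}}}(G)\le k\,] = o(1)$. For a single swap I would apply a tilted Chernoff bound to $D_B(u) - D_A(u)$; demanding a margin $k$ rather than mere correctness multiplies the familiar per-vertex rate $n^{-(\sqrt a - \sqrt b)^2/r}$ by a penalty of order $(a/b)^{k/2}$ coming from the optimal exponential tilt $s^\ast \approx \tfrac12\log(a/b)$. With $k = \Theta(\log n/\epsilon)$ this penalty becomes $n^{\Theta(\log(a/b)/\epsilon)}$, and a union bound over the $\sim n$ vertices and their $r-1$ alternative communities, after collecting all $n$-exponents and requiring their sum to be negative, yields a sufficient condition of the shape $\frac{(\sqrt a - \sqrt b)^2}{r} > 1 + \frac{1}{\epsilon}\big(O(1) + \log(a/b)\big)$; matching the stated constant $2 + \log(a/b)$ requires carrying the lower-order terms of the tilted bound and the threshold/Laplace slack through exactly.

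I expect the main obstacle to be making the union bound over competing partitions uniform when $r>2$. Unlike the two-community case of Theorem~\ref{thm:error_stability_mechanism_two}, each vertex now has $r-1$ admissible targets and balanced rearrangements can touch three or more communities at once, so I must verify that multi-vertex moves have margins that grow at least linearly in the number of displaced vertices --- ensuring their contribution is dominated by the single-swap term and does not inflate the union bound --- while simultaneously tracking the margin $k$ through the shifted large-deviation rate. It is precisely this richer move structure, combined with the margin-induced tilt $(a/b)^{k/2}$, that produces the extra $\big(2+\log(a/b)\big)$ factor relative to the clean $1+1/\epsilon$ bound of the two-community case, and keeping these constants honest while still extracting a clean sufficient condition is the delicate part of the argument.
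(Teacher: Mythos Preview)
Your proposal is correct and follows essentially the same architecture as the paper: decompose the error into (i) MLE failure and (ii) stability-test failure, reduce (ii) to showing $d_{\hat{\bm\sigma}}(G)>\Theta(\log n/\epsilon)$ with high probability, lower-bound the stability by the likelihood margin, and control the margin via a tilted Chernoff bound on $\operatorname{Bin}(n/r,p)-\operatorname{Bin}(n/r,q)$, with the tilt $s^\ast\approx\tfrac12\log(a/b)$ producing the extra $\log(a/b)/\epsilon$ term.

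Two differences are worth noting. First, the paper's proof for $r>2$ is considerably terser than yours: it immediately writes $\Pr[d(G)\le 2\log n/\epsilon]\le rn\cdot\Pr[\operatorname{Bin}(n/r,p)-\operatorname{Bin}(n/r,q)\le 2\log n/\epsilon]$ via a union bound over single-vertex deviations, without the explicit argument you outline for why multi-vertex or multi-community rearrangements are dominated by single swaps. Your plan to verify that margins grow linearly in the number of displaced vertices is the more complete route; the paper effectively takes this for granted. Second, the paper obtains the precise constant $2+\log(a/b)$ by invoking a ready-made tail lemma (the optimized bound on $\Pr[X-R\le f_{n,\epsilon}]$ with explicit rate $g$ in terms of $\alpha=f_{n,\epsilon}/\log n$ and $\gamma=\sqrt{\alpha^2+4ab/r^2}$) and then lower-bounding $g$ by a short chain of elementary inequalities, rather than re-deriving the tilt from scratch. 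If you want to match the constant exactly, take $k=2\log n/\epsilon$ (not $\log n/\epsilon+\omega(1/\epsilon)$): the factor $2$ here is what becomes the additive $2$ inside the parenthesis after the bound $\sqrt{4/\epsilon^2+4ab/r^2}\le 2\sqrt{ab}/r+2/\epsilon$ is applied.
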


The result for $r>2$ communities is slightly weaker compared to the case for $r=2$ case. However, \emph{it still converges to the non-private optimal threshold $(\sqrt{a} - \sqrt{b} > \sqrt{r})$ when the privacy budget $\epsilon\rightarrow \infty$}. 

\emph{Main Ideas behind the Proof(s) of Theorems \ref{thm:error_stability_mechanism_two} and \ref{thm:error_stability_mechanism_general} and Intuition behind the private recovery threshold:} Analyzing the error probability for the stability based mechanism for SBM is highly non-trivial. Specifically, there are two types of error events occur in this mechanism when estimating the true labels $\bm{\sigma}^{*}$: $(1)$ When the stability mechanism outputs the ML estimate $\hat{\bm{\sigma}}_{\text{MLE}}$, then we are interested in bounding the $\operatorname{Pr}(\hat{\bm{\sigma}}_{\text{MLE}} \neq \bm{\sigma}^{*})$. This error probability can be analyzed using existing results on exact recovery \cite{abbe2015exact}, and the error vanishes as $o(1)$ if $\sqrt{a} - \sqrt{b} > \sqrt{r}$. 
$(2)$ The second source of error is when the mechanism outputs a random label $\perp$, whose probability is bounded by $\operatorname{Pr}(\tilde{d}  \leq \frac{\log{1/\delta}}{\epsilon})$.  The key technical challenge arises in the analysis of this probability. Specifically, we show that when the graph $G$ is drawn from an SBM, the ML estimator is $\Omega(\log(n))$-stable with high probability. By leveraging this result, we bound the probability
$\operatorname{Pr}(\tilde{d}  \leq \frac{\log{1/\delta}}{\epsilon})$, and in order to make this probability decay as $o(1)$ for exact recovery, we obtain sufficient conditions on $(a, b)$ presented in Theorems \ref{thm:error_stability_mechanism_two} and \ref{thm:error_stability_mechanism_general}. 


\textbf{\emph{Stability of SDP relaxation.}}
We show that the SDP relaxation (SDP for short) method also has the stability property, i.e., a graph $G$ generated by an SBM is $\Omega(\log{n})$-stable with respect to the SDP with high probability, which gives us the following result for both $r=2$ and $r>2$ multiple equal-sized communities.





\begin{theorem} 
\label{theorem:sdp-rcomm}

For $r\geq 2$ communities, 
$\mathcal{M}^{\text{SDP}}_{\operatorname{Stability}}(G)$ satisfies exact recovery if 
\begin{align}
\label{eqn:sdp-rcomm}
\sqrt{a}-\sqrt{b} &> \sqrt{r}\times \sqrt{2 +  \frac{t+1}{\epsilon}\left( 1 + \log{\sqrt{\frac{a}{b}}}\right)}
\end{align}
for any $\epsilon >0$ and $\delta = n^{-t}$, $t > 0$.


\end{theorem}


In contrast with the threshold condition (\ref{eqn:mle-rcomm}), we have a larger constant in  (\ref{eqn:sdp-rcomm}) for $\mathcal{M}^{\text{SDP}}_{\operatorname{Stability}}(G)$, arising out of the concentration bounds for the SDP relaxation algorithm.



\noindent
\emph{Main ideas in the proof of Theorem~\ref{theorem:sdp-rcomm}.}
The proof of the stability of $\mathcal{M}^{\text{SDP}}_{\operatorname{Stability}}(G)$ becomes more complex than that of MLE, because SDP only takes the ground truth label as the optimal solution in some regimes; further, arguing that a solution $\hat{\bm{\sigma}}=\hat{\bm{\sigma}}_{\text{SDP}}$ is not easy (since it may not be the min bisection). 
\cite{hajek2016achieving} design a sophisticated ``certificate'' for proving that the SDP solution is indeed the optimal, and show that the certificate holds with high probability when $\sqrt{a} - \sqrt{b} > \sqrt{r}$ (note that this certificate is much more complex than the primal-dual based certificate used earlier for $r=2$ communities~\cite{lecture7}). The high probability bound for the certificates is unfortunately not sufficient, since we need to argue about the stability for a graph $G$ generated from the SBM deterministically, and there are $n^{\mathcal{O}(\log{n})}$ graphs within distance $\mathcal{O}(\log{n})$ of $G$. Specifically, the high probability bound for the certificate does not hold after flipping $\Omega(\log{n})$ connections, which is required to maintain the stability of the optimal solution. Instead, we define a notion of ``concentration'', and show that if a graph is concentrated, then $SDP(G)$ is optimal at the ground truth label; \emph{note that this holds deterministically, not with high probability}. We then use this notion of concentration to determine stability, by showing that all graphs within $\mathcal{O}(\log{n})$ distance of $G$ are also concentrated. Finally, we derive a lower bound on $\sqrt{a} - \sqrt{b}$ that is both (1) sufficient for concentration and (2) able to preserve concentration after flipping up to $\Omega(\log{n})$ connections. We give more details below.

We say that a graph is  $(c_1, c_2, c_3, c_4)$-concentrated, for constants $c_1,\ldots,c_4$, if the following four conditions hold:
\begin{itemize}[leftmargin=0.2in,topsep=0pt,itemsep=0pt]
  \item $\min_{i\in V(G)} (s_i-r_i) > c_1 \log{n}$, where $s_i$ is the number of same-community neighbors of $i$ and $r_i$ is the maximum number of neighbors of i in one of the other communities.
  \item $\Vert \mathbf{A} - \E[\mathbf{A}] \Vert_2 \leq c_2\sqrt{\log{n}}$
  \item $\max_{k\in[r]}\frac{1}{K}\sum_{i\in C_k}r_i \leq Kq + c_3\sqrt{\log{n}}$, \mbox{where $K=n/r$}.
  \item $e(C_k, C_{k'}) \geq K^2q -3/4K\sqrt{\log{n}}-c_4\log{n}$, where $e(C_k, C_{k'})$ is the number of inter-community edges between communities $k$ and $k'\neq k$
  \end{itemize}
  
Next, we prove that a graph generated by an SBM with appropriate parameters will be $(c_1,\dots, c_4)$-concentrated w.h.p.. The concentration holds with high probability only when $a$ and $b$ satisfies some conditions related to $r$ i.e., $a$ must be large enough (relatively to $b$) and they will determine the exact recovery threshold of the method. Next, we prove that the concentration persists under $\Omega(\log{n})$ edge perturbations, i.e., that if the original graph is concentrated under a tuple $(c_1, \dots, c_4)$, a graph obtained by flipping up to $\Omega(\log{n})$ connections of the original one is also concentrated with slightly different tuple.

We then apply the analyses of~\cite{hajek2016achieving_extensions} to prove that when a graph is $(c_1, \dots, c_4)$-concentrated for some constants $c_i$, the SDP relaxation (SDP for short) outputs the (1) uniquely optimal solution and (2) the optimal solution is also the ground truth community vector. Our proof differs from~\cite{hajek2016achieving_extensions}'s proof in a way that~\cite{hajek2016achieving_extensions}'s conditions holds with high probability and ours holds deterministically. First we note that the SDP can be presented by the following form:

  \begin{align*}
    \mbox{maximize  }&\langle \mathbf{A}, \mathbf{Z} \rangle \\
    \mbox{subsect to  } \mathbf{Z} &\curlyeqsucc 0\\
    Z_{ii} &= 1, \forall i \in [n] \\
    Z_{ij} &\geq 0, \forall i, j \in [n]\\
    \mathbf{Z}\bm{1} &= K\bm{1},
  \end{align*}

\noindent
Then we provide the condition for a dual certificate (deterministically). Intuitively, if we can construct a positive semi-definite matrix $S^*$ by the following formula without violating the two constraints below, the SDP is uniquely optimal at $\mathbf{Z}^*$ constructed by the ground truth community label (We say SDP(G) is optimal at the ground truth community label for short).

  \emph{Lemma 6 of~\cite{hajek2016achieving_extensions}}.
  Suppose there exists $\mathbf{D}^* = diag(d_i^*)$ with $d_i^*>0$  for all $i, \mathbf{B}^*\in\mathcal{S}^n$ with $\mathbf{B}^*\geq \mathbf{0}$ and $B_{ij}>0$ whenever $i$ and $j$ are in distinct clusters, and $\lambda^*\in\mathbb{R}^n$ such that $\mathbf{S}^* \triangleq \mathbf{D}^* - \mathbf{B}^* - \mathbf{A} + \mathbf{\lambda}^*\bm{1}^T + \bm{1}(\mathbf{\lambda}^*)^T$ satisfies $\mathbf{S}^*\curlyeqsucc0$ and
  \begin{align*}
    \mathbf{S}^*\mathbf{\xi}^*_k &= 0, \forall k\in[r]\\
    B_{ij}^*Z_{ij}^* &= 0, \forall i,j \in[n]
  \end{align*}
  Then $\text{SDP}(G) = \mathbf{Z}^*$ is the unique solution for the SDP.

We then prove that the concentration of the input graph implies the existence of a positive semi-definite matrix $S^*$, which satisfies the dual certificate above, i.e., we point out that there's always a way to construct matrices $\mathbf{D}^*, \mathbf{B^*}$ that satisfies above conditions from the concentration's conditions. We note that when such $\mathbf{S^*}$ exists, the SDP will uniquely output the ground truth community vector.

Therefore if a graph $G$ (with size $n$ large enough) is generated by an SBM with the ground truth community vector and $G$ is concentrated, $SDP(G)$ will outputs $\mathbf{Z^*}$. We also know that any $G'$ obtained by flipping up to $c\log{n}/\epsilon$  edges of $G$ is also concentrated (for some constant $c$). It means that $SDP(G')$ also outputs $\mathbf{Z^*}$ and proves that $SDP$ is $c\log{n}/\epsilon$-stable. Compose with the fact that a graph generated by such SBM will be concentrated with high probability, we conclude that $SDP$ is $c\log{n}/\epsilon$-stable with high probability. The threshold for Theorem~\ref{theorem:sdp-rcomm} derives from the conditions of $a, b$ and $r$ for which the concentration holds with high probability and choosing the constant $c$ accordingly to $\delta$.

\noindent
\emph{Complexity of Stability Based Mechanisms.}
A naive implementation of $\mathcal{M}^{\hat{\bm{\sigma}}}_{\operatorname{Stability}}(G)$, which involves computing $d_{\hat{\bm{\sigma}}}(G)$ in Step 3 using (\ref{eqn:stab-sigma}), requires computing $\hat{\bm{\sigma}}(G')$ for all graphs $G'$. It can be shown that the algorithm works if we use $\min\{d_{\hat{\bm{\sigma}}}(G), \mathcal{O}(\log{n})\}$, instead of $d_{\hat{\bm{\sigma}}}(G)$, for which
it  suffices to compute $\hat{\bm{\sigma}}(G')$ for only those graphs $G'$ with $d(G, G')= \mathcal{O}(\log{n})$. The MLE algorithm takes exponential time, so algorithm $\mathcal{M}^{\text{MLE}}_{\operatorname{Stability}}(G)$ still takes exponential time; however, $\mathcal{M}^{\text{SDP}}_{\operatorname{Stability}}(G)$ can be implemented in quasi-polynomial time, i.e., $n^{(\mathcal{O}(\log{(n)}))}$, using the above observation.

\subsection{Sampling  Mechanisms}

We present two sampling based approaches for private community detection. In the first approach of \emph{Bayesian Sampling}, presented in Algorithm \ref{algo:bayesian}, we compute the posterior probability of label vectors given the graph $G$ and release a label estimate by sampling from this posterior distribution. 

\begin{algorithm}
  \caption{$\mathcal{M}_{\text{Bayesian}}(G)$: Bayesian Sampling Mechanism
  }
  \begin{algorithmic}[1]
       \STATE {\bfseries Input:} $G(\mathcal{V}, E) \in \mathcal{G}$
         \STATE {\bfseries Output:}  A labelling vector $\hat{\bm{\sigma}} \in \mathcal{L}$. 
    \STATE For every $\bm{\sigma} \in{\mathcal{L}}$, calculate $p(\bm{\sigma}|G) = \frac{p(\bm{\sigma}) \times p(G|\bm{\sigma})}{p(G)}$ 
    \STATE Sample and output a labelling $\hat{\bm{\sigma}}\in{\mathcal{L}}$ with probability $\Pr(\hat{\bm{\sigma}}|G)$
      \label{algo:bayesian}
  \end{algorithmic}
\end{algorithm}

Surprisingly, we show that this mechanism satisfies pure $\epsilon$-edge DP whenever $\epsilon$ is larger than a threshold, namely, $\epsilon\geq \log(a/b)$. This is in-contrast with Stability mechanisms which achieve approximate $(\epsilon, \delta)$-edge DP, for any $\epsilon>0$ but require $\delta = 1/n^{t}$, for any $t > 0$. Our main result for the Bayesian mechanism is stated in the following theorem along with the corresponding recovery threshold. 

\begin{theorem} 
  \label{thm:bayesian-privacy}
  The mechanism $\mathcal{M}_{\text{Bayesian}}$(G) satisfies $\epsilon$-edge DP,  $\forall \epsilon \geq \epsilon_0 = \log \big(\frac{a}{b}\big)$, and for $r=2$ communities, satisfies exact recovery if 
\begin{align}
     \sqrt{a} - \sqrt{b} & > \max \left[ \sqrt{2}, \frac{{2}}{(\sqrt{2} - 1)(1- e^{- \epsilon_{0}})} \right]. 
 \end{align}
 \end{theorem}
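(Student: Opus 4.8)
The statement bundles a privacy claim with a recovery claim, and I would prove them separately. Throughout I take the prior $p(\bm{\sigma})$ to be uniform over the label space $\mathcal{L}$, so that the sampling probability is the normalized likelihood $\Pr(\hat{\bm{\sigma}}=\bm{\sigma}\mid G)=p(G\mid\bm{\sigma})/Z(G)$ with $Z(G)=\sum_{\bm{\sigma}'\in\mathcal{L}}p(G\mid\bm{\sigma}')$.

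For the privacy bound, fix neighboring adjacency matrices $\mathbf{A},\mathbf{A}'$ differing only in the pair $(u,v)$ and a target $\bm{\sigma}$, and write the privacy ratio as a product of two factors, $\frac{\Pr(\hat{\bm{\sigma}}(\mathbf{A})=\bm{\sigma})}{\Pr(\hat{\bm{\sigma}}(\mathbf{A}')=\bm{\sigma})}=\frac{p(G\mid\bm{\sigma})}{p(G'\mid\bm{\sigma})}\cdot\frac{Z(G')}{Z(G)}$. Since only the single entry $A_{uv}$ changes, the first factor equals one of $\frac{p}{1-p},\frac{q}{1-q}$ (or a reciprocal) according to whether $\sigma_u=\sigma_v$; the second factor $Z(G')/Z(G)$ is a convex combination of the same two quantities $\frac{1-p}{p},\frac{1-q}{q}$, weighted by the posterior mass on labelings that place $u,v$ in the same versus different communities, and hence lies between them. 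Multiplying, the extreme quantities cancel and the whole ratio is squeezed into $\big[\tfrac{q(1-p)}{p(1-q)},\tfrac{p(1-q)}{q(1-p)}\big]$, so the mechanism is $\epsilon$-edge DP for every $\epsilon\ge\log\tfrac{p(1-q)}{q(1-p)}$. In the dense regime $\frac{1-q}{1-p}\to1$, so this threshold converges to $\epsilon_0=\log(a/b)$.

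For recovery I would write $\Pr(\hat{\bm{\sigma}}\neq\bm{\sigma}^{*})=\E_G\big[1-\Pr(\bm{\sigma}^{*}\mid G)\big]=\E_G\big[\tfrac{\sum_{\bm{\sigma}\neq\bm{\sigma}^{*}}p(G\mid\bm{\sigma})}{Z(G)}\big]$ and bound the inner ratio by $\min\{1,T(G)\}$, where $T(G)=\sum_{\bm{\sigma}\neq\bm{\sigma}^{*}}\operatorname{LR}(\bm{\sigma})$ and $\operatorname{LR}(\bm{\sigma})=p(G\mid\bm{\sigma})/p(G\mid\bm{\sigma}^{*})$. The reason a naive first moment fails — and the crux of the whole argument — is that $\E_G[\operatorname{LR}(\bm{\sigma})]=1$ for every $\bm{\sigma}$ (a likelihood ratio integrates to one), so $\E_G[T(G)]=|\mathcal{L}|-1$; the mass of this expectation sits entirely on rare graphs with a badly misclassified vertex. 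I would therefore condition on a concentration event $\mathcal{E}$ on which the graph is typical and the ML estimator succeeds, and bound $\Pr(\hat{\bm{\sigma}}\neq\bm{\sigma}^{*})\le\Pr(\mathcal{E}^{c})+\E_G[T(G)\mathbf{1}_{\mathcal{E}}]$. The first term is $o(1)$ exactly when $\sqrt{a}-\sqrt{b}>\sqrt{2}$, via the single-vertex large-deviation bound $\Pr(d_{\mathrm{in}}(v)\le d_{\mathrm{out}}(v))\lesssim n^{-(\sqrt{a}-\sqrt{b})^2/2}$ and a union bound over the $n$ vertices; this is the origin of the $\sqrt{2}$ entry in the maximum. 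For the second term I would organize the competing labelings by the set $S$ of vertices on which they disagree with $\bm{\sigma}^{*}$: for $r=2$ balanced labelings $S$ has $m$ vertices in each community, and the only edges whose within/across classification flips are the cut edges $(S,\bar{S})$, so that on $\mathcal{E}$ each $\operatorname{LR}(\bm{\sigma})$ is exponentially small, with log equal to $\epsilon_0$ times a signed cut imbalance that is negative on $\mathcal{E}$. Summing over the $\binom{n/2}{m}^2\approx n^{2m}$ choices of $S$ and over $m\ge1$ produces a geometric series; its convergence requires the per-pair decay to beat the $n^{2}$ combinatorial growth (this is where the counting constant $\sqrt{2}-1$ enters), while the per-vertex margin sum $\sum_{k\ge1}e^{-\epsilon_0 k}$ contributes the $(1-e^{-\epsilon_0})^{-1}$ factor, yielding the second sufficient condition $\sqrt{a}-\sqrt{b}>\tfrac{2}{(\sqrt{2}-1)(1-e^{-\epsilon_0})}$, and the more stringent of the two gives the stated maximum.

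The main obstacle is controlling $\E_G[T(G)\mathbf{1}_{\mathcal{E}}]$: one must choose $\mathcal{E}$ strong enough that every one of the $n^{\Theta(\log n)}$ relevant competing labelings has a genuinely small likelihood ratio, yet likely enough that $\Pr(\mathcal{E}^{c})=o(1)$ under only $\sqrt{a}-\sqrt{b}>\sqrt{2}$. A cleaner route I would pursue in parallel, to make the bookkeeping behind the constant rigorous, is the fractional-moment (Bhattacharyya) method: since $\min\{1,T\}\le T^{1/2}\le\sum_{\bm{\sigma}\neq\bm{\sigma}^{*}}\operatorname{LR}(\bm{\sigma})^{1/2}$ and $\E_G[\operatorname{LR}(\bm{\sigma})^{1/2}]$ factorizes over the flipped cut edges into powers of $\sqrt{pq}+\sqrt{(1-p)(1-q)}=1-\tfrac12(\sqrt{p}-\sqrt{q})^{2}(1+o(1))$, one sidesteps the $\E[\operatorname{LR}]=1$ trap entirely; I expect this to reproduce the $\sqrt{2}$ threshold transparently and to be the right tool for justifying the $(\sqrt{2}-1)(1-e^{-\epsilon_0})$ factor.
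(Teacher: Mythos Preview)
Your privacy argument is essentially the paper's: the paper also splits the posterior ratio into a likelihood factor and a normalizer factor $Z(G')/Z(G)$, does the case analysis on whether the flipped edge $e$ is intra- or inter-community under $\bm{\sigma}$, and bounds the normalizer ratio by observing it is a weighted average of $\frac{1-p}{p}$ and $\frac{1-q}{q}$. So on privacy you are aligned with the paper.

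On recovery your high-level plan (condition on a concentration event, then control $\sum_{\bm{\sigma}\neq\bm{\sigma}^*}\operatorname{LR}(\bm{\sigma})$ stratified by the swap size $k$) is exactly the paper's route, but your explanation of where the two constants come from is off, and this matters because it is the only nontrivial content of the bound. In the paper, the factor $(1-e^{-\epsilon_0})$ does \emph{not} arise from a geometric sum $\sum_{k\ge 1}e^{-\epsilon_0 k}$ over per-vertex margins. It appears because the likelihood ratio for a labeling that swaps $k$ vertices is exactly $(1-x)^{\tilde m_1-\tilde m_2}$ with $x=1-\tfrac{q(1-p)}{p(1-q)}$, and one uses $(1-x)^m\le e^{-xm}$; then $x\to 1-b/a=1-e^{-\epsilon_0}$ in the dense regime. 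So $(1-e^{-\epsilon_0})$ is the \emph{slope} converting edge imbalance into log-likelihood, not the sum of a geometric series. Likewise, $(\sqrt{2}-1)$ is not a ``counting constant'': the paper's concentration event is a multiplicative Chernoff bound $\tilde m_1-\tilde m_2\ge(1-\tilde\delta)\mu$ with a free parameter $\tilde\delta$, which produces two competing constraints $\sqrt{a}-\sqrt{b}>2/\tilde\delta$ (for $\Pr(\mathcal{E}^c)=o(1)$) and $\sqrt{a}-\sqrt{b}>\sqrt{2}/\sqrt{x(1-\tilde\delta)}$ (for the conditional sum to vanish); the $(\sqrt{2}-1)$ is simply the paper's choice of $\tilde\delta$ to balance these. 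Your Bhattacharyya route would sidestep this balancing entirely and would \emph{not} reproduce the stated constant (it would give a cleaner, and in fact sharper, threshold), so it is a different theorem, not a proof of this one. If you want to prove the theorem as stated, you need to make the Chernoff-parameter trade-off explicit rather than attribute the constants to combinatorics.
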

 
Despite the fact that the Bayesian mechanism provides pure edge DP, one disadvantage is that it requires the knowledge of $(a, b)$ for computing the posterior distribution. To this end, we present and analyze the exponential sampling mechanism in Algorithm \ref{algo:expo_1}, where we sample from a distribution over the labels which can be computed directly from the graph and does not require the knowledge of $(a,b)$. Specifically, for any label vector $\bm{\sigma}$ (partition of the graph in two communities), the $\text{score}(\bm{\sigma}) = {E}_{\text{inter}}(G, \bm{\sigma})$ is defined as the set of cross-community edges in the partition $\bm{\sigma}$, the corresponding sampling probability is computed as a function of this score and the privacy budget. 

\begin{algorithm}
  \caption{$\mathcal{M}_{\text{Expo.}}(G)$: Exponential Mechanism
  }
  \begin{algorithmic}[1]
         \STATE {\bfseries Input:} $G(\mathcal{V}, E) \in \mathcal{G}$
         \STATE {\bfseries Output:} A labelling vector $\hat{\bm{\sigma}} \in \mathcal{L}$.
    \STATE For every $\bm{\sigma} \in \mathcal{L}$, calculate $\text{score}(\bm{\sigma}) = {E}_{\text{inter}}(G, \bm{\sigma})$ 
    \STATE Sample and output a labelling $\hat{\bm{\sigma}}\in \mathcal{L}$ with probability $\exp(-\epsilon\times \text{score}(\bm{\sigma}))$
          \label{algo:expo_1}
  \end{algorithmic}
\end{algorithm}

\begin{theorem}\label{thm:utility_exponential}
  The exponential sampling mechanism  $\mathcal{M}_{\text{Expo.}}(G)$ satisfies $\epsilon$-edge DP and for $r=2$ communities, performs exact recovery if 
  \begin{align}
     \sqrt{a} - \sqrt{b} & > \max \left[\sqrt{2}, \frac{{2}}{(\sqrt{2} - 1) \epsilon} \right].
 \end{align}
\end{theorem}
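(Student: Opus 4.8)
The plan is to handle the two assertions separately: the $\epsilon$-edge DP guarantee, which follows from the exponential-mechanism template, and the exact-recovery threshold, which is the substantive part. For privacy, I would first note that for every fixed labeling $\bm{\sigma}\in\mathcal{L}$ the score $\text{score}(\bm{\sigma})=E_{\text{inter}}(G,\bm{\sigma})$ has edge-sensitivity one: flipping a single entry $A_{i,j}$ changes the cross-community edge count by $1$ when $i,j$ lie in different classes of $\bm{\sigma}$ and by $0$ otherwise. Sampling $\hat{\bm{\sigma}}$ with probability proportional to $\exp(-\epsilon\,\text{score}(\bm{\sigma}))$ is then exactly the exponential mechanism with a sensitivity-one utility, so the standard exponential-mechanism guarantee gives the claimed $\epsilon$-edge DP; concretely, for neighboring $\mathbf{A},\mathbf{A}'$ the ratio $\Pr(\hat{\bm{\sigma}}=\bm{\sigma}\mid\mathbf{A})/\Pr(\hat{\bm{\sigma}}=\bm{\sigma}\mid\mathbf{A}')$ factors into a numerator term and a normalizer term, each controlled through the unit sensitivity.

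For utility, the starting point is the output law itself: since $\bm{\sigma}^{*}$ is one of the candidates, dropping all but the $\bm{\sigma}^{*}$ term from the normalizer gives
\[
\Pr(\hat{\bm{\sigma}}\neq\bm{\sigma}^{*}\mid G)\;\le\;\sum_{\bm{\sigma}\neq\bm{\sigma}^{*}}\exp\!\big(-\epsilon\,[E_{\text{inter}}(G,\bm{\sigma})-E_{\text{inter}}(G,\bm{\sigma}^{*})]\big).
\]
I would then group the alternative labelings by the number $m$ of swapped pairs needed to reach them from $\bm{\sigma}^{*}$ while keeping the partition balanced, so that there are $\binom{n/2}{m}^{2}$ labelings at ``distance'' $m$, and write $\Delta E_{m}(\bm{\sigma}):=E_{\text{inter}}(G,\bm{\sigma})-E_{\text{inter}}(G,\bm{\sigma}^{*})$. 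A short edge-bookkeeping argument shows $\Delta E_{m}$ is a signed sum of independent Bernoulli edge variables with mean $\approx 2m(\tfrac{n}{2}-m)(p-q)>0$, so the truth is favored in expectation.

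The heart of the proof is a \emph{deterministic} (high-probability over $G$) uniform lower bound $\Delta E_{m}(\bm{\sigma})\ge \gamma\,m\log n$ holding simultaneously for all $\bm{\sigma}$, with a rate $\gamma$ proportional to $(\sqrt{a}-\sqrt{b})$. Granting this, the displayed sum is at most $\sum_{m}\binom{n/2}{m}^{2}n^{-\epsilon\gamma m}\le\sum_{m}n^{m(2-\epsilon\gamma)}$, which is $o(1)$ once $\epsilon\gamma>2$; with $\gamma=(\sqrt{2}-1)(\sqrt{a}-\sqrt{b})$ this is exactly $\sqrt{a}-\sqrt{b}>\tfrac{2}{(\sqrt{2}-1)\epsilon}$. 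The separate requirement $\sqrt{a}-\sqrt{b}>\sqrt{2}$ is what forces $\bm{\sigma}^{*}$ to be the strict min-bisection, so that $\gamma$ is positive at all, which is why the threshold appears as a $\max[\,\cdot,\cdot\,]$.

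I expect establishing the uniform gap deterministically to be the main obstacle, and it cannot be replaced by a naive moment bound. Bounding $\E_{G}[\exp(-\epsilon\Delta E_{m})]=[(1-p+pe^{-\epsilon})(1-q+qe^{\epsilon})]^{2m(n/2-m)}$ and summing yields the condition $a(1-e^{-\epsilon})-b(e^{\epsilon}-1)>2$, which is adequate for small $\epsilon$ but \emph{diverges} for large constant $\epsilon$, since the $e^{\epsilon}$ tilt on the $q$-edges inflates the normalizer to $n^{b e^{\epsilon}/2}$ even though the mechanism only improves as $\epsilon$ grows. Reconciling this forces conditioning on a high-probability ``good graph'' event on which $\bm{\sigma}^{*}$ is the min-bisection with a quantitative per-swap margin, and then weighing the fixed-$\epsilon$ Gibbs weights against that deterministic margin. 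Controlling the margin uniformly over the $n^{\Theta(m)}$ competing labelings, rather than at an optimized Chernoff tilt (which is unavailable here because $\epsilon$ is pinned by the privacy constraint), is precisely where the $(\sqrt{2}-1)$ constant and the $1/\epsilon$ scaling are paid for.
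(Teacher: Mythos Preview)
Your proposal is correct and follows essentially the same route as the paper: exponential-mechanism privacy via unit sensitivity of $E_{\text{inter}}(G,\bm{\sigma})$, then bounding $\Pr(\hat{\bm{\sigma}}\neq\bm{\sigma}^*)$ by $\sum_{k}\binom{n/2}{k}^2\exp(-\epsilon\,\Delta E_k)$, conditioning on a Chernoff-type high-probability event on which the cut gaps $\Delta E_k$ are uniformly at least a constant times $k\log n$, and balancing the Chernoff slack against the combinatorial factor to obtain the $(\sqrt{2}-1)^{-1}\epsilon^{-1}$ term. The paper carries out exactly this conditioning-and-balancing with an explicit slack parameter $\tilde{\delta}$ (your single rate $\gamma$ bundles the optimized $\tilde{\delta}$), and your observation that the raw MGF bound $\E_G[e^{-\epsilon\Delta E_k}]$ blows up for large $\epsilon$ is precisely why both arguments must pass through the good-graph event rather than average over $G$ directly.
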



\noindent
\emph{Complexity and comparison with stability based mechanisms.}
A key advantage of the sampling based mechanisms over stability based mechanisms is that they give $\epsilon$-DP solutions. However, implementing the sampling step in these mechanisms takes exponential time, as no efficient algorithm is known for sampling $\bm{\sigma}$ with probability depending on its utility.

\subsection{Graph Perturbation Mechanisms}

In this section, we present and analyze randomized response (RR) based mechanism for private community detection. The basic idea is to perturb the edges of the random graph (i.e., the adjacency matrix $\mathbf{A}$), where each element $A_{i,j}$ is perturbed independently to satisfy $\epsilon$-edge DP. For a graph with an adjacency matrix $\mathbf{A}$, the perturbed matrix is denoted as $\tilde{\mathbf{A}}$, where $\mu = \operatorname{Pr}(\tilde{A}_{i,j} = 1 | A_{i,j} = 0) =  \operatorname{Pr}(\tilde{A}_{i,j} = 0 | A_{i,j} = 1) $. By picking $\mu = \frac{1}{e^{\epsilon} + 1}$, it can be readily shown that the mechanism satisfies $\epsilon$-edge DP. One can then apply any community recovery algorithm (MLE, SDP or spectral methods) on the perturbed matrix $\tilde{\mathbf{A}}$. This mechanism is presented in Algorithm \ref{algo:randomized_response}.

\begin{algorithm}
  \caption{$\mathcal{M}^{\hat{\bm{\sigma}}}_{\text{RR}}(G)$: Graph Perturbation Mechanism via Randomized Response
  }
  \begin{algorithmic}[1]
         \STATE {\bfseries Input:} $G(\mathcal{V}, E) \in \mathcal{G}$
         \STATE {\bfseries Output:} A labelling vector $\hat{\bm{\sigma}} \in \mathcal{L}$.
         \STATE Perturb $\mathbf{A} \rightarrow \tilde{\mathbf{A}} $ via randomized response mechanism
         \STATE Apply community detection algorithm on $\tilde{\mathbf{A}}$ 
         \STATE Output $\hat{\bm{\sigma}}(\tilde{\mathbf{A}})$
          \label{algo:randomized_response}
  \end{algorithmic}
\end{algorithm}

From the perspective of computational complexity, this Algorithm is faster compared to the stability and sampling based approaches. However, in the next Theorem, we state our main result which shows that RR based mechanism achieves exact recovery if $\epsilon = \Omega (\log(n))$, i.e., it requires the privacy leakage to grow with $n$ for exact recovery. 

\begin{theorem} \label{thm:private_threshold_condition} The mechanism $\mathcal{M}^{\text{SDP}}_{\operatorname{RR}}(G)$ satisfies $\epsilon$-edge DP,  $\forall \epsilon \geq \epsilon_{n} = \Omega(\log(n))$, and for $r=2$ communities,  satisfies exact recovery if 
\begin{align}
    \sqrt{a} - \sqrt{b} > \sqrt{2} \times \frac{\sqrt{e^{\epsilon} + 1}}{\sqrt{e^{\epsilon} - 1}} + \frac{1}{\sqrt{e^{\epsilon} - 1}}.
\end{align}
\end{theorem}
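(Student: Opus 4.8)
The plan is to separate the argument into an (easy) privacy part and a (substantive) utility part, linked by the observation that randomized response converts the original SBM into a \emph{new} SBM on which the non-private SDP recovery guarantee of \cite{hajek2016achieving} can be invoked. For privacy, I would first note that flipping each entry $A_{i,j}$ independently with probability $\mu=\frac{1}{e^{\epsilon}+1}$ makes the released bit satisfy $\frac{\Pr(\tilde A_{i,j}=1\mid A_{i,j}=1)}{\Pr(\tilde A_{i,j}=1\mid A_{i,j}=0)}=\frac{1-\mu}{\mu}=e^{\epsilon}$, so the per-edge mechanism is $\epsilon$-DP. Since two neighboring adjacency matrices differ in exactly one entry while the remaining entries are perturbed independently, releasing $\tilde{\mathbf A}$ is $\epsilon$-edge DP, and running the SDP estimator on $\tilde{\mathbf A}$ is post-processing, which preserves DP \cite{dwork2014algorithmic}. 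This establishes the $\epsilon$-edge DP claim for every $\epsilon$.

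For utility, the key structural step is to show that $\tilde{\mathbf A}$ is itself drawn from a symmetric SBM with the \emph{same} community assignment $\bm{\sigma}^*$ but perturbed parameters. A direct computation gives, for within- and across-community pairs respectively, $\tilde p=\mu+(1-2\mu)p$ and $\tilde q=\mu+(1-2\mu)q$, where $1-2\mu=\frac{e^{\epsilon}-1}{e^{\epsilon}+1}$ and $\mu=\frac{1}{e^{\epsilon}+1}$. I would then apply the general SDP exact-recovery threshold for a two-community SBM with in/out probabilities $\tilde p>\tilde q$, namely $\sqrt{\tilde p}-\sqrt{\tilde q}>\sqrt{2\log n/n}$ (which specializes to $\sqrt a-\sqrt b>\sqrt 2$ in the clean $\Theta(\log n/n)$ regime). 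Lower-bounding the left side by factoring out $\sqrt{1-2\mu}$ and using $\sqrt{\nu+p}-\sqrt{\nu+q}\ge(\sqrt p-\sqrt q)-\sqrt{\nu}$ with $\nu=\frac{\mu}{1-2\mu}=\frac{1}{e^{\epsilon}-1}$, then substituting $\sqrt p-\sqrt q=\sqrt{\log n/n}\,(\sqrt a-\sqrt b)$ and rearranging, yields a sufficient condition of exactly the stated shape: the $\sqrt 2\,\frac{\sqrt{e^{\epsilon}+1}}{\sqrt{e^{\epsilon}-1}}$ term arises from the $\sqrt 2$ threshold scaled by $1/\sqrt{1-2\mu}$, and the additive term arises from the noise floor $\sqrt{\nu}=\frac{1}{\sqrt{e^{\epsilon}-1}}$.

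The main obstacle is twofold. First, because $\mu$ is a \emph{constant} whenever $\epsilon$ is constant, the noise floor dominates the signal ($\tilde p,\tilde q\approx\mu$), pushing the effective SBM out of the logarithmic-degree regime in which exact recovery is information-theoretically possible; keeping $\tilde p,\tilde q=\Theta(\log n/n)$—equivalently keeping the right-hand side bounded so that recovery is feasible with constant separation—forces $\mu=O(\log n/n)$, i.e. $e^{\epsilon}-1=\Omega(n/\log n)$. This is precisely the source of the requirement $\epsilon\ge\epsilon_n=\Omega(\log n)$, and I would present it as a consequence of the recovery bound rather than as a separate converse. Second, one must verify that the SDP guarantee of \cite{hajek2016achieving}, ordinarily stated for fixed $(a,b)$, remains valid after the substitution $(p,q)\mapsto(\tilde p,\tilde q)$, whose effective constants may drift with $n$; I expect this to reduce to re-checking the concentration and dual-certificate conditions under the perturbed parameters. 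Carefully controlling the additive noise term—and confirming that the $\Omega(\log n)$ scaling is genuinely forced, not an artifact of a loose bound—is the delicate accounting at the heart of the argument, whereas the privacy claim and the SBM reduction are routine.
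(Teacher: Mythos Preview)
Your proposal is correct and follows essentially the same route as the paper. Both arguments (i) establish $\epsilon$-edge DP of randomized response with $\mu=\frac{1}{e^\epsilon+1}$ and invoke post-processing, (ii) observe that $\tilde{\mathbf A}$ is again an SBM on $\bm\sigma^*$ with parameters $\tilde p=\mu+(1-2\mu)p$, $\tilde q=\mu+(1-2\mu)q$, (iii) reduce exact recovery to the SDP threshold $\frac{a_n+b_n}{2}-\sqrt{a_n b_n}>1$ for the perturbed parameters, and (iv) simplify via the same inequality you use, namely $\sqrt{\nu+a}-\sqrt{\nu+b}\ge \sqrt a-\sqrt b-\sqrt\nu$ with $\nu=\tfrac{1}{e^\epsilon-1}$, to obtain the displayed sufficient condition.

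The one place where the paper goes beyond your sketch is exactly the obstacle you flag: rather than invoking \cite{hajek2016achieving} as a black box, the paper re-runs the dual-certificate/KKT analysis of the SDP directly on $\tilde{\mathbf A}$, tracking the $n$-dependent constants $a_n,b_n$. The $\epsilon=\Omega(\log n)$ requirement then appears concretely as the condition needed to control an auxiliary factor $\zeta_n$ (arising from a term of the form $e^{\Theta(a_n^2\log^2 n/n)}$) in the union-bound error estimate, which is precisely the ``effective constants drift with $n$'' issue you anticipated. Your plan to ``re-check the concentration and dual-certificate conditions under the perturbed parameters'' is thus exactly what the paper carries out.
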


In order to understand the intuition behind the worse privacy leakage of RR mechanism for exact recovery, it is instructive to consider the statistics of the perturbed adjacency matrix $\tilde{\mathbf{A}}$ as a function of $\epsilon$. Specifically, the perturbed elements in the adjacency matrix $\tilde{\mathbf{A}}$ are distributed as follows
  $   \tilde{A}_{i,j} \sim  \operatorname{Bern}(\tilde{p}),  i < j, \text{if} \hspace{0.05in} \sigma_{i} = \sigma_{j}$, and $\tilde{A}_{i,j} \sim \operatorname{Bern}(\tilde{q}),  i < j, \text{if} \hspace{0.05in} \sigma_{i} \neq \sigma_{j}$, where
 \begin{align}
     \tilde{p} & =  \underbrace{\left[\frac{n}{(e^{\epsilon} + 1) \times \log(n)} + \frac{e^{\epsilon} -1}{e^{\epsilon} + 1} \times  a \right]}_{a_{n}} \times \frac{\log (n)}{n}, \nonumber \\
     \tilde{q} & =  \underbrace{\left[\frac{n}{(e^{\epsilon} + 1) \times \log(n)} + \frac{e^{\epsilon} -1}{e^{\epsilon} + 1} \times  b \right]}_{b_{n}} \times \frac{\log (n)}{n}.
 \end{align}
Note that $\tilde{p}$ and $\tilde{q}$ are the intra- and inter- community connection probabilities for the perturbed matrix. From the above, we note that if $\epsilon$ is chosen as a constant, and as $n$ grows, then $\lim_{n\rightarrow \infty} \tilde{p} = \lim_{n\rightarrow \infty} \tilde{q}$, i.e., if we insist on constant $\epsilon$, then asymptotically, the statistics of the inter- and intra-community edges are the same and exact recovery is impossible. The result of Theorem \ref{thm:private_threshold_condition} shows that one can indeed get exact recovery by allowing the leakage to grow logarithmically with $n$.

\vspace{-7pt}
\section{Numerical Experiments}
\label{sec:empirical_results}

\begin{figure*}[t]
\centering
    \begin{minipage}{.24\textwidth}
	\centering
	 \subcaptionbox{Impact of changing $a$ where $r = 2$ and $n = 100$ vertices.}
	{\includegraphics[width=\linewidth]{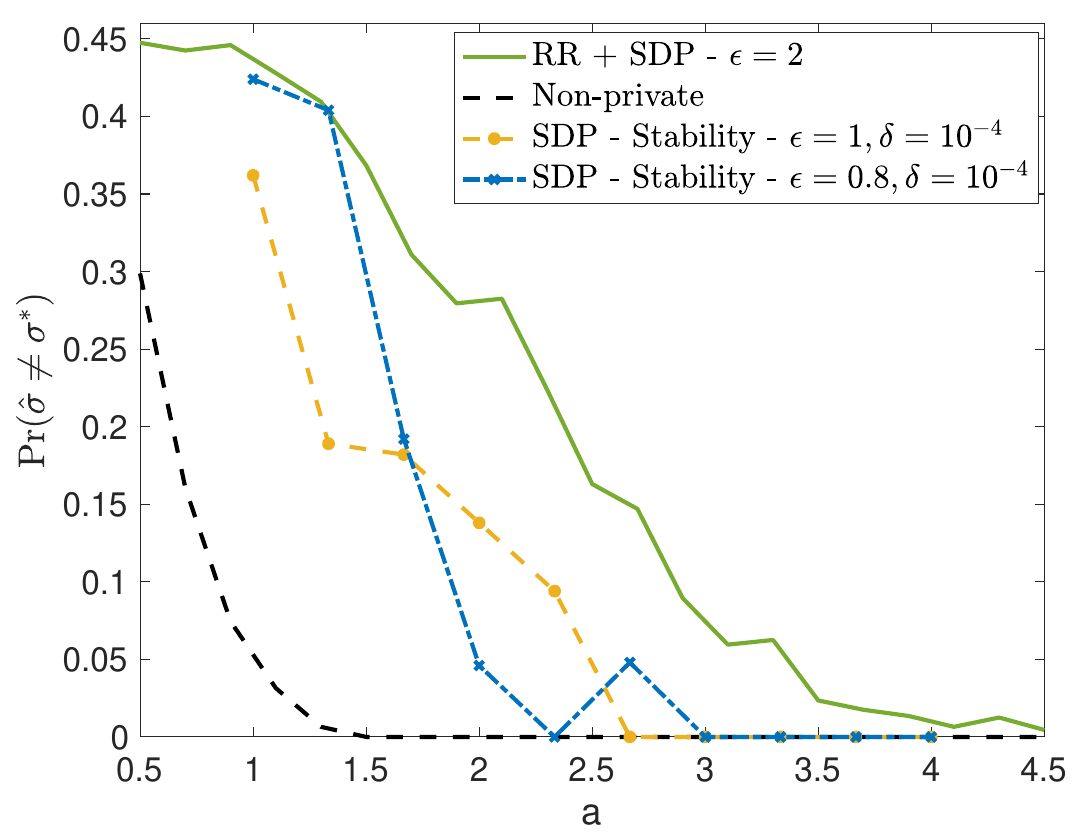}}	
	\end{minipage}
	    \begin{minipage}{.24\textwidth}
	\centering
	 \subcaptionbox{Impact of $\epsilon$ where $r = 2$ and $n = 200$ vertices.}
	{\includegraphics[width=\linewidth]{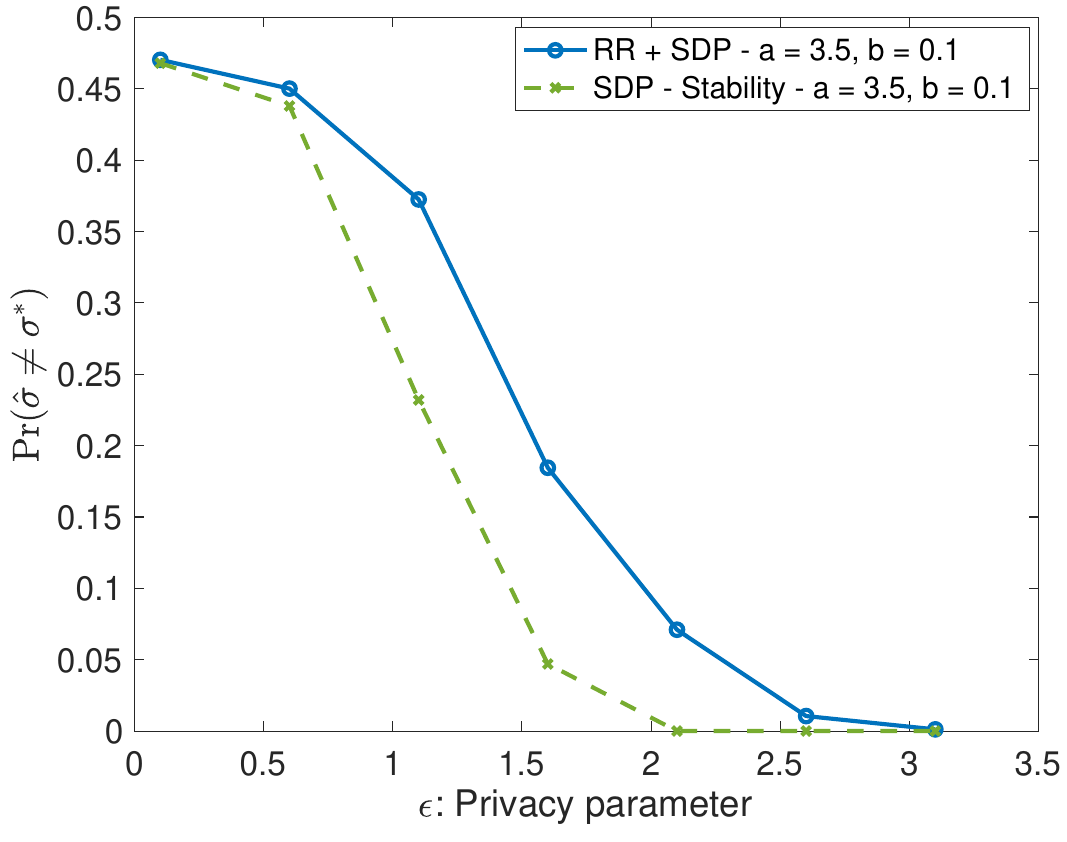}}
	\end{minipage}
	  \begin{minipage}{.24\textwidth}
	\centering
	 \subcaptionbox{Impact of $\epsilon$ where $r = 3$ and $n = 200$ vertices.}
	{\includegraphics[width=\linewidth]{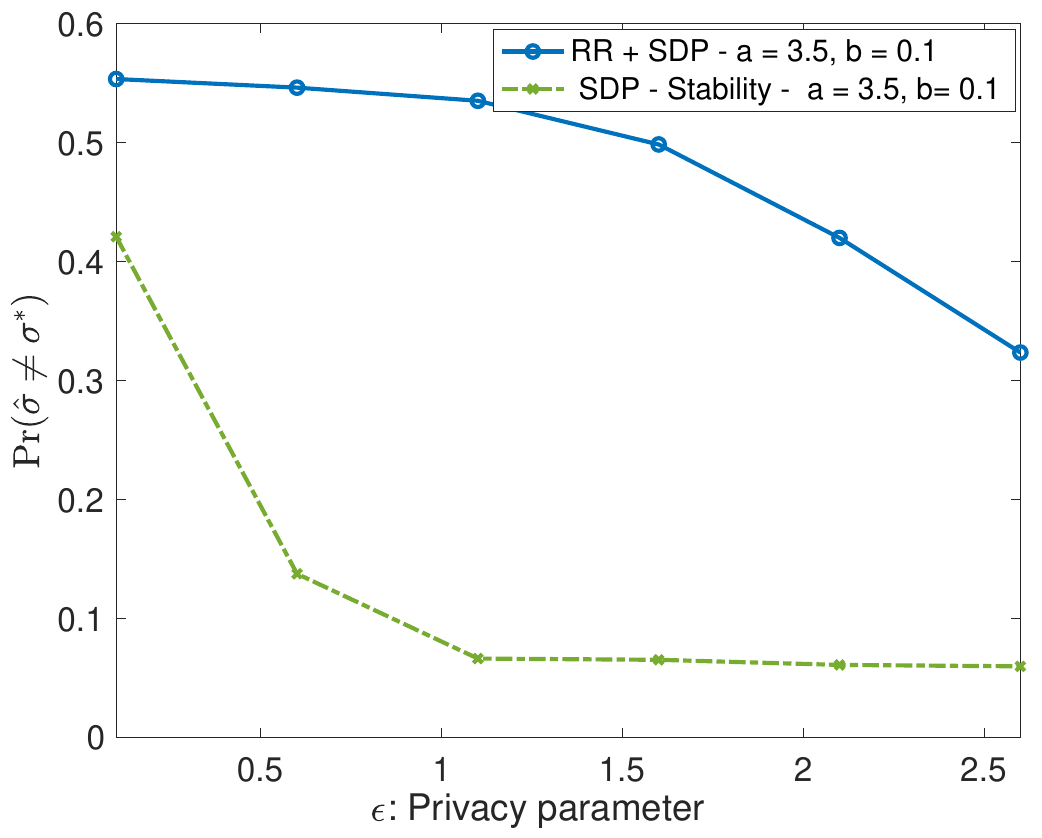}}
	\end{minipage}
	  	  \begin{minipage}{.24\textwidth}
	\centering
	 \subcaptionbox{SDP vs. Spectral method for $r = 2$ communities.}
	{\includegraphics[width=\linewidth]{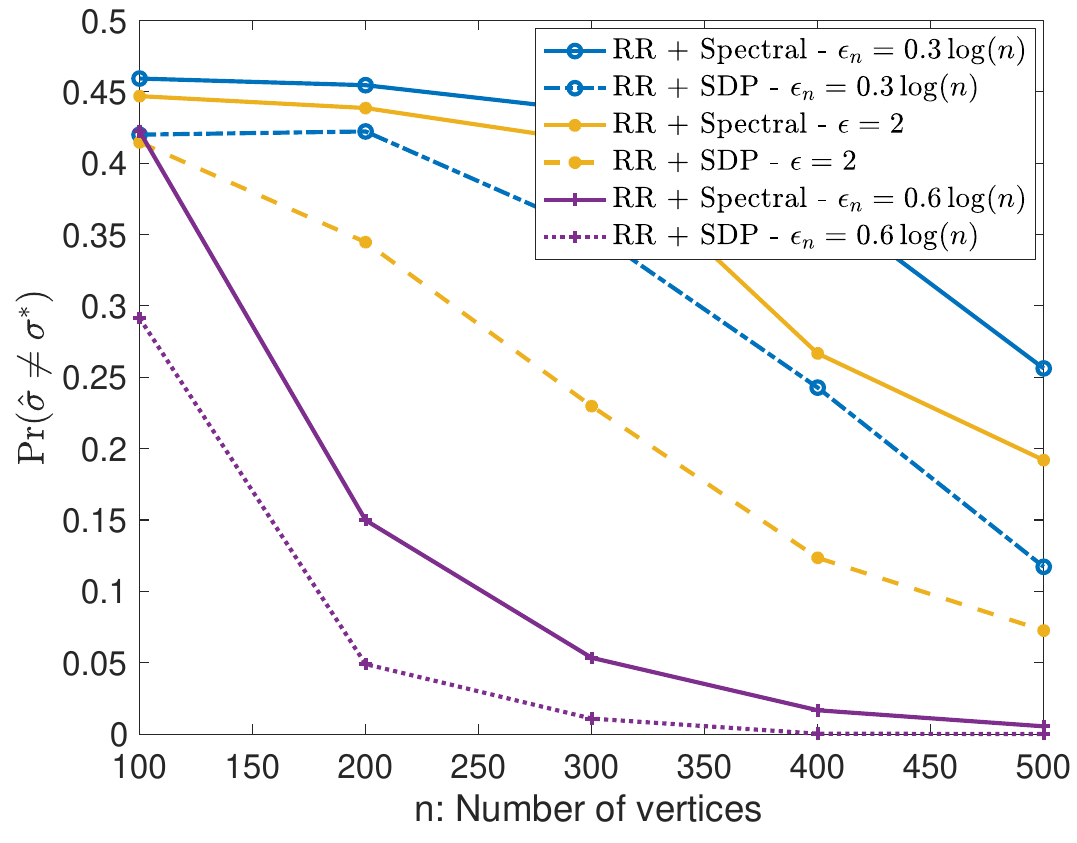}}
	\end{minipage}
 \begin{minipage}{.24\textwidth}
	\centering
	 \subcaptionbox{Comparison between stability and RR + SDP where $r = 2$.}
	{\includegraphics[width=\linewidth]{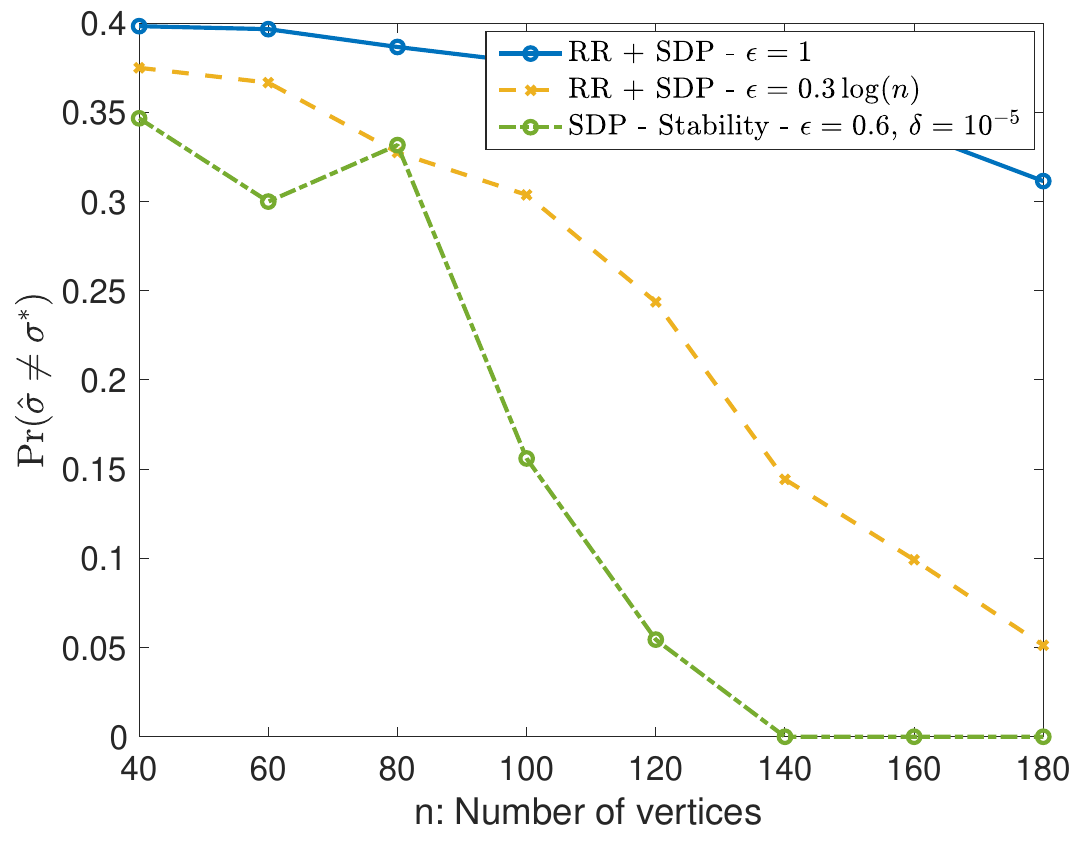}}
	\end{minipage}
	   \begin{minipage}{.24\textwidth}
	\centering
 	 \subcaptionbox{Comparison between stability and RR + SDP where $r = 3$.}
	{\includegraphics[width=\linewidth]{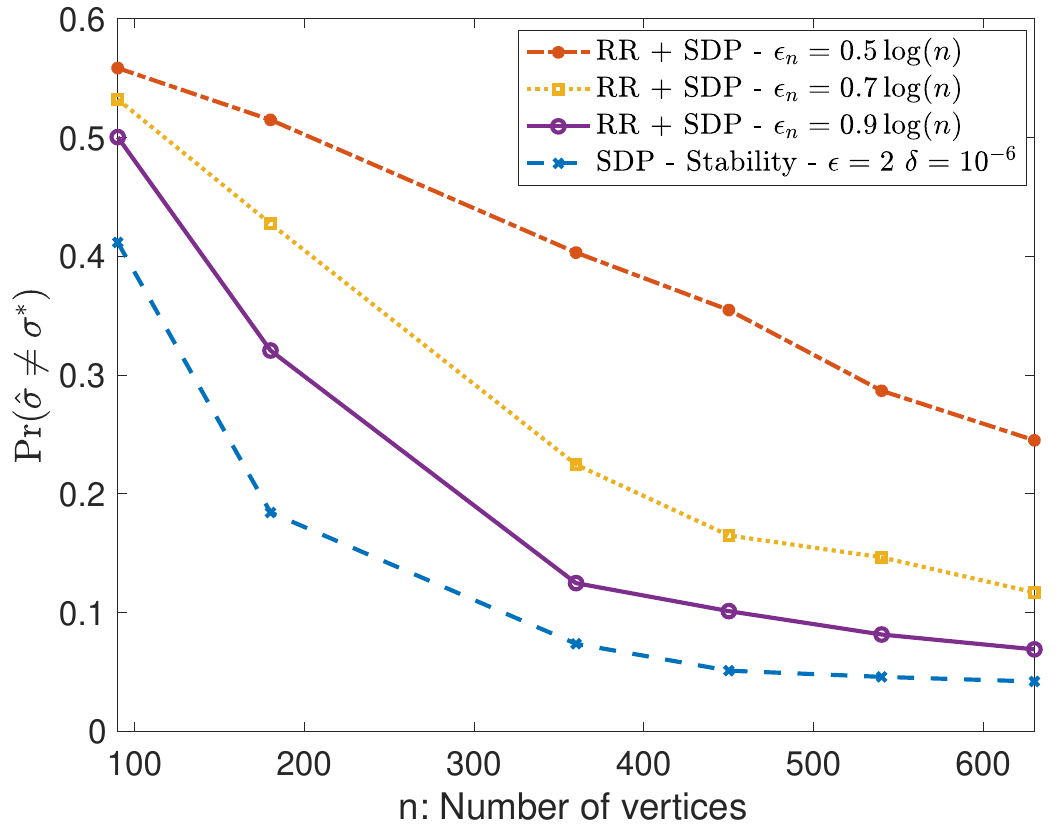}}
 	\end{minipage}
 		  \begin{minipage}{.24\textwidth}
 	\centering
	 \subcaptionbox{Impact of $\epsilon$ for Karate Club dataset where $r = 2$. }
	{\includegraphics[width=\linewidth]{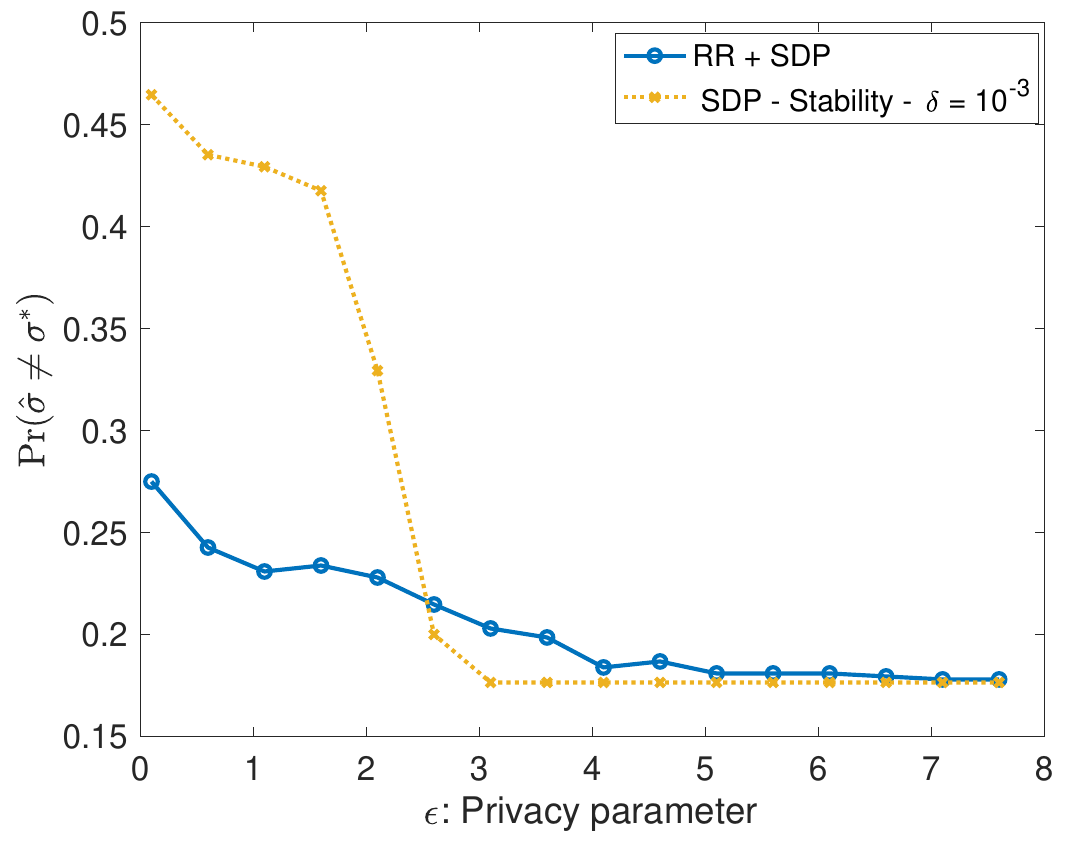}}
	\end{minipage}
 	\begin{minipage}{.24\textwidth}
 	\centering
	 \subcaptionbox{Impact of $\epsilon$ for Political Blogosphere dataset where $r =2$.}
	{\includegraphics[width=\linewidth]{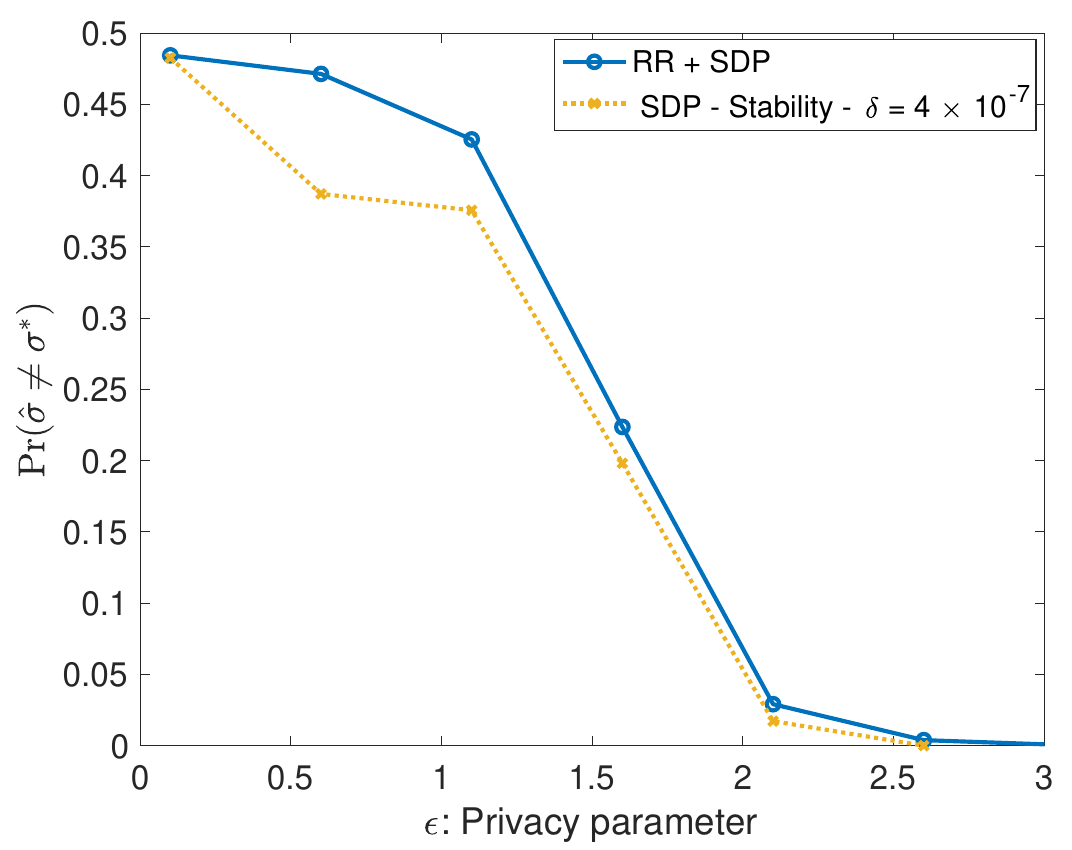}}
	\end{minipage}
	\caption{Synopsis of Numerical results: (a) Shows the impact of changing $a$ for fixed $b, \epsilon$. (b)-(c) Show the impact of $\epsilon$ for $r = 2, 3$, respectively. (d)-(f) Show the error probability as function of $n$. (g)-(h) Show the performance on real-world datasets.}
    \label{fig:impact_changing_parameters}
     \vspace{-10pt}
\end{figure*}

In this section, we present experimental results to assess the performance of our proposed private community detection algorithms, and the associated tradeoffs between privacy and community recovery for both synthetically generated graphs (SBMs) as well as real-world graphs. The proposed mechanisms are implemented in MATLAB 2020b and the optimization (SDP) is done through CVX solver \cite{grant2009cvx}. In the numerical results, we perform Monte Carlo simulations, where in each iteration we compute the normalized hamming distance between $\bm{\sigma}$ and $\hat{\bm{\sigma}}$ as an estimate for the error probability. Our numerical experiments address the following questions:

\textbf{Q1: How does the error probability change with $a$ and $b$?} 
We first study community recovery on synthetic graphs (SBM) with $n = 100$ vertices, $r = 2$ communities, $b = 0.1$ and vary the parameter $a$. Fig. \ref{fig:impact_changing_parameters}(a) shows the impact of increasing $a$ on the error probability of (i) non-private recovery; (ii) SDP-stability mechanism and (iii) randomized-response SDP mechanism. For a fixed privacy budget $\epsilon$, we observe that when the difference between $a$ and $b$ increases, the error probabilities for all private mechanisms decrease but are no better than the non-private case. For a fixed $\epsilon$, the SDP-stability mechanism achieves a smaller error probability compared to RR+SDP mechanism, however, this comes at the expense of approximate edge DP guarantee. 

\textbf{Q2: What is the impact of $\epsilon$ on the error probability?} In Figs. \ref{fig:impact_changing_parameters} (b) and (c), we fix $n=200$, $a=3.5$, $b=0.1$ and study the impact of privacy budget $\epsilon$ on the error probability for the case of $r=2$ and $r=3$ communities. Specifically, for $r=2$, we observe that the SDP-stability mechanism (with $\delta = 10^{-5}$) outperforms RR+SDP; furthermore, as $\epsilon$ increases beyond a certain threshold, error probability for both converge to $0$. For $r=3$ communities, we can observe that the difference in performance between SDP-stability and RR+SDP is even more pronounced. In this setting, however, we do not expect the error probability to converge to $0$ even if $\epsilon\rightarrow \infty$ since the chosen values $(a=3.5, b=0.1)$ do not satisfy the exact recovery threshold $(\sqrt{a}-\sqrt{b}> \sqrt{r})$. 


\textbf{Q3: What is the impact of the problem size on the accuracy (SDP-Stability, RR+SDP, RR+Spectral)?}  In Fig. \ref{fig:impact_changing_parameters}  (d), we compare the performance of SDP relaxation based recovery versus spectral method proposed in \cite{hehir2021consistency}, both under randomized response for $a = 3.5$, $b = 0.1$ and $r = 2$ communities. We can observe that RR+SDP has less probability of error as a function of $n$ compared with the RR-Spectral method; however, RR+SDP has more computational complexity. In Fig. \ref{fig:impact_changing_parameters}(e), we show the error probability behavior as a function of $n$, the number of vertices for $r = 2$ communities and different privacy levels. From the figure, we observe that for the RR based approach, the privacy level should scale as $\Omega(\log(n))$ to achieve exact recovery, which is consistent with our theoretical findings. On the other hand, the stability based mechanisms can still provide exact recovery for finite $\epsilon$. We can draw similar conclusions for the case of $r = 3$ communities in Fig. \ref{fig:impact_changing_parameters}(f). 

\textbf{Q4: How do the \emph{private} community detection mechanisms perform on real-world datasets?}
We now discuss our results for two real-world datasets (shown in Figs. \ref{fig:impact_changing_parameters} (g) \& (h)): $(1)$ Zachary's Karate Club dataset which contains a social network of friendships between $34$ members of a karate club at a US university in the 1970s. \cite{girvan2002community} and $(2)$ The Political Blogosphere dataset \cite{adamic2005political} which consists of $1490$ political blogs captured during $2004$ US elections. Each blog is classified as left/liberal or right/conservative, i.e., $r=2$ and links between blogs were automatically extracted from a crawl of the front page of the blog. 
For the smaller size Karate club dataset ($n=34$), we 
observe from Fig. \ref{fig:impact_changing_parameters}(g), we can observe the impact of choosing $\delta$ on SDP-stability mechanism. Specifically, when $\delta= 34^{-2} \approx 10^{-3}$, then RR+SDP has lower error probability for smaller $\epsilon$ compared to SDP-stability. For the larger Political Blogosphere dataset ($n=1490$), the SDP-stability mechanism outperforms RR+SDP for all values of $\epsilon$ and $\delta = 1490^{-2} \approx 4 \times 10^{-7}$. We can observe that SDP-Stability performs better than RR+SDP for both datasets.

\textbf{Q5: How tight are the obtained bounds?} 
 We have plotted the phase transition behavior for both RR+SDP and SDP-stability mechanisms (see Fig. \ref{fig:phase_transition_mechanisms}). We observe that our theoretical bound (red line) is quite tight, and the threshold region obtained from empirical success probability is close to this bound.
 
 \begin{figure}[t]
\centering
	\centering
	{\includegraphics[width=0.5\columnwidth]{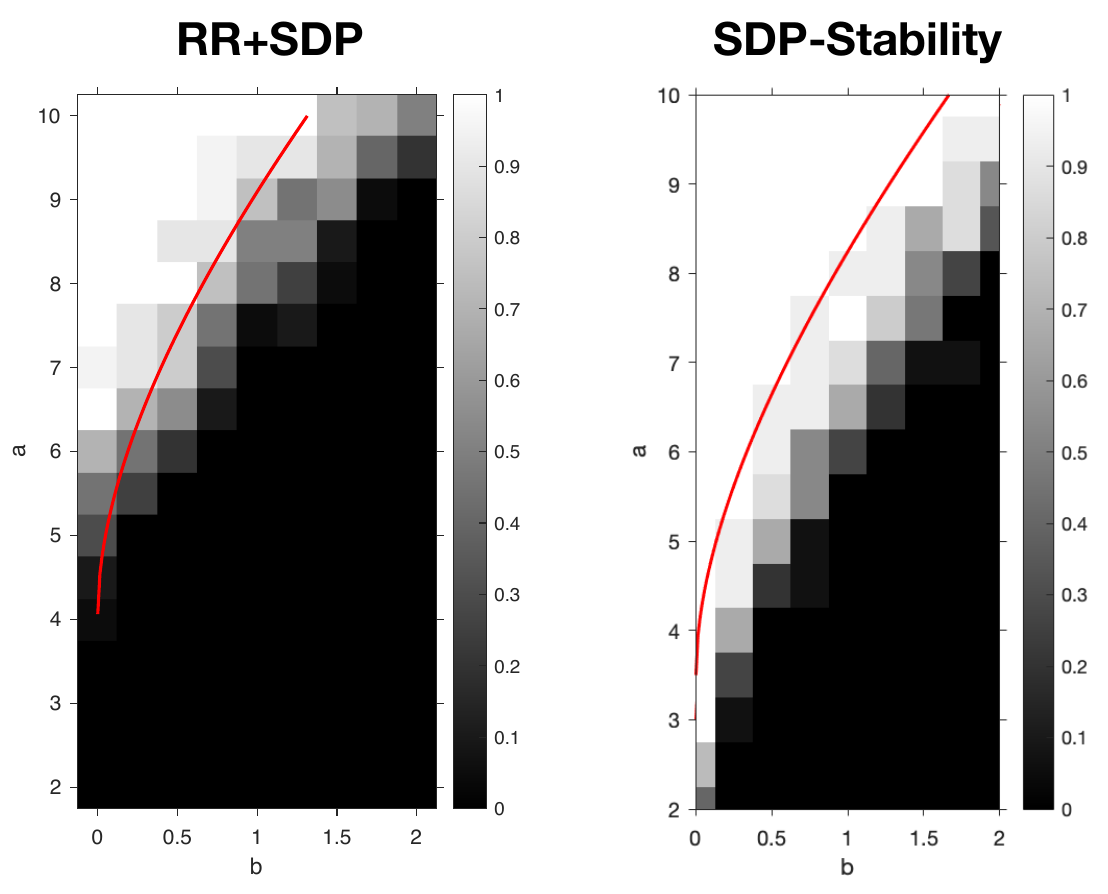}}	
	\caption{\small{Phase transition for Randomized response (RR)-SDP and SDP-Stability mechanisms: darker pixels represent lower empirical probability of success for $n = 50$ nodes, $\epsilon = 2$, $\delta = 4 \times 10^{-4}$, and $r = 2$ communities. The red lines represent the theoretical bounds presented in Theorems $3.3$ and $3.8$.}}
    \label{fig:phase_transition_mechanisms}
    \vspace{-0.2in}
\end{figure}



\vspace{-7pt}
\section{Conclusion}
In this paper, we studied the problem of community detection for SBMs subject to edge differential privacy. We presented and analyzed three classes of privacy-preserving mechanisms (stability, sampling, and randomized response) and studied the tradeoffs for exact recovery as a function of the connectivity $(a, b)$ and privacy parameters $(\epsilon, \delta)$. From our results, we deduce the following conclusions: the stability based mechanisms can achieve $(\epsilon, \delta)$-DP for any $\epsilon>0$ and require $\delta = n^{-t}$, $t > 0$. The sampling based mechanisms can instead achieve pure $(\epsilon, 0)$-DP; however, they require a larger separation between $(a, b)$ for exact recovery compared to the stability based methods. Among the three mechanisms, the randomized response mechanism, while least computationally complex, requires $\epsilon$ to scale as $\Omega(\log(n))$ for exact recovery. We also presented simulation results on both synthetic and real-world graphs to validate our theoretical findings.  There are several interesting open problems: a) obtaining converse results (necessary conditions) to assess the optimality (or gap to optimality) of the proposed mechanisms; b) generalization of the mechanisms and the associated analysis to degree-corrected SBMs; c) studying the impact of privacy on other recovery notions (such as weak recovery \cite{abbe2017community}); and d) design and analysis of efficient algorithms for stability based mechanisms.     





 \section*{Acknowledgements}
 
 We thank the anonymous ICML reviewers for their insightful suggestions. The work of M. Seif and R. Tandon was supported by NSF grants CAREER 1651492, CNS 1715947, CCF 2100013 and CNS 2209951. 
The work of D. Nguyen and A. Vullikanti was partially supported by NSF grants CCF-1918656, IIS-1931628, IIS-1955797, and NIH grant R01GM109718.

\bibliography{icml2022_ref}
\bibliographystyle{icml2022}

\newpage
\appendix
\onecolumn
\section{Appendix}


In this Appendix, we provide the proofs of our results presented in Section 3. We provide auxiliary results that are used throughout the proofs at end of this document.  We summarize the notations and symbols used in Table \ref{table:symbols_table}.

\begin{table}[h]
\centering
\begin{tabular}{l l}
\hline
$n$ & Number of vertices\\
$\mathbf{A}$ & Adjacency matrix\\ 
$\tilde{\mathbf{A}}$ & Perturbed adjacency matrix \\ 
$\epsilon$ & Privacy budget\\ 
$\bm{\sigma}^{*}$ & Ground truth labels \\
$\hat{\bm{\sigma}}$ & Estimated labels \\
$\mathcal{L}$ & Set of all possible labels \\ 
$E_{\text{inter}}(G, \bm{\sigma})$ & set of cross-community edges of graph $G$ \\
$E(C_{1},C_{2})$ & Number of edges between $C_{1}$ and $C_{2}$ \\
$\mathbf{I}$ & Identity matrix \\ 
$\mathbf{J}$ & All ones matrix\\
$\mathbf{1}$ & All ones vector\\
\hline
\end{tabular}
\caption{List of Symbols.}
\label{table:symbols_table}
\end{table}

\section{Proof of Lemma 3.2 ($\mathcal{M}^{\text{MLE}}_{\operatorname{stability}}(G)$ satisfies $(\epsilon, \delta)$-edge DP)} \label{appendix:privacy_guarantee_stability}

\begin{algorithm}
  \caption{$\mathcal{M}^{\hat{\bm{\sigma}}}_{\operatorname{Stability}}(G)$: Stability  Based Mechanism}
  \label{algo:stability}
  \begin{algorithmic}[1]
     \STATE {\bfseries Input:} $G(\mathcal{V}, E) \in \mathcal{G}$
     \STATE {\bfseries Output:} labelling vector $\hat{\bm{\sigma}}_{\text{Private}}$. 
    \STATE $d_{\hat{\bm{\sigma}}}(G) \leftarrow$ stability of $\hat{\bm{\sigma}}$ with respect to graph $G$
    \STATE $\tilde{d}(G) \leftarrow d_{\hat{\bm{\sigma}}}(G) + \operatorname{Lap}(1/\epsilon)$
    \IF{$\tilde{d}(G)  > \frac{\log{1/\delta}}{\epsilon}$}
    \STATE Output $\hat{\bm{\sigma}}(G)$
    \ELSE
    \STATE Output $\perp$ (random label) 
    \ENDIF
  \end{algorithmic}
\end{algorithm}

The proof that the stability based mechanism satisfies $(\epsilon, \delta))$-edge DP follows directly from \cite{dwork2014algorithmic}, and we include the proof here for the sake of completeness by adapting it to the community detection problem. In this analysis, we drop the subscript $\hat{\sigma}$ when the context is clear. Given a pair of neighbor graphs $G\sim G'$, $d(G)$ denotes the distance from $G$ to its nearest unstable instance and $d(G')$ is the distance from $G'$ to its nearest unstable instance. Due to the triangle inequality, $| d(G)-d(G')| \leq 1$, hence the sensitivity of $d$: $\Delta_d = 1$. Adding a Laplacian noise of magnitude of $1/\epsilon$ guarantees $\epsilon$-differential privacy for $\tilde{d}$. In order to verify $(\epsilon, \delta)$-edge DP for the overall mechanism, we consider two scenarios: the first one, when the output of the mechanism is $\perp$. In this case, we have:


 \begin{align}
    \operatorname{Pr}[\mathcal{M}_{\operatorname{Stability}}(G) = \perp] &= \operatorname{Pr} \left[\tilde{d} (G) \leq  \frac{\log{1/\delta}}{\epsilon} \right] \nonumber \\
                               &\overset{(a)}{\leq} e^\epsilon \operatorname{Pr} \left[\tilde{d}(G') \leq \frac{\log{1/\delta}}{\epsilon} \right] \nonumber \\
    &=e^\epsilon \operatorname{Pr}[\mathcal{M}_{\operatorname{Stability}}(G') = \perp].
  \end{align}
where $(a)$ follows from the fact that $\tilde{d}$ satisfies $\epsilon$-DP. For the second scenario, when the output of the mechanism is some label vector $\bm{\sigma}$, we have to analyze two cases. 

   The remaining part of the proof, we prove that output $\hat{\bm{\sigma}}(G)$ in line $6$ satisfies $(\epsilon, \delta)$-differential privacy to fullfill the proof of the theorem. We analyze two cases (1) $d(G) = 0$ and (2) $d(G) > 0$.

   \textbf{Case 1. $d(G)=0$}, we have $\Pr[\tilde{d}(G) > \frac{\log{1/\delta}}{\epsilon}] = \Pr[Lap(1/\epsilon) > \frac{\log{1/\delta}}{\epsilon}] \leq e^{-\log{1/\delta}} = \delta$. For any set of output $S \subseteq (Range(\hat{\sigma}) \cup \{\perp\})$, we have
   \begin{align*}
     \Pr[\mathcal{M}_{Stability}(G) \in S] &\leq \Pr[\mathcal{M}_{Stability}(G) \in (S\cup \{\perp\})] \\
     &\leq \Pr[\mathcal{M}_{Stability}(G) \in (S \cap \{\perp\})] + \Pr[\mathcal{M}_{Stability}(G) \neq \perp] \\
     &\leq \Pr[\mathcal{M}_{Stability}(G) \in (S \cap \{\perp\})] + \delta \\
     &\leq e^\epsilon\Pr[\mathcal{M}_{Stability}(G') \in (S \cap \{\perp\})] + \delta \\
     &\leq e^\epsilon\Pr[\mathcal{M}_{Stability}(G') \in S] + \delta, 
   \end{align*}
   where the third inequality is because $\Pr[M_{Stability}(G) \neq \perp] = \Pr[\tilde{d}(G) > \frac{\log{1/\delta}}{\epsilon}] \leq \delta$ and the fourth inequality is because $S\cap \{\perp\}$ is \textbf{(a)} $\emptyset$ or \textbf{(b)} $\{\perp\}$. When $\textbf{(a)}$ happens, $\Pr[M_{Stability}(G) \in \emptyset] = \Pr[M_{Stability}(G') \in \emptyset] = 0$ and when $\textbf{(b)}$ happens, it follows above proof that $\Pr[M_{Stability}(G) = \perp] \leq e^\epsilon\Pr[M_{Stability}(G') = \perp]$.
 
   \textbf{Case 2. $d(G)>0$}. In this case, $G$ is at least $1$-stable, which means: $\bm{\sigma}(G) = \bm{\sigma}(G') = \sigma$, we have:

   \begin{align*}
     \Pr[M_{Stability}(G) = \sigma] &= \Pr[\tilde{d}(G) > \frac{\log{1/\delta}}{\epsilon}] \\
     &\leq e^\epsilon \Pr[\tilde{d}(G') > \frac{\log{1/\delta}}{\epsilon}] \\
                                     &= e^\epsilon\Pr[M_{Stability}(G') = \sigma],
   \end{align*}

   and the Lemma follows.
  


\section{Proof of Theorem 3.3 (Exact Recovery Threshold for $\mathcal{M}^{\text{MLE}}_{\operatorname{Stability}}(G)$ for $r = 2$)}  \label{appendix:error_probability_proof_stability}
The error probability for the stability mechanism can be expressed as 
\begin{align}
 \operatorname{Pr}(\mathcal{M}_{\operatorname{Stability}}(G)\neq \bm{\sigma}^{*})  & = \operatorname{Pr}(\mathcal{M}_{\operatorname{Stability}}(G) = \perp) \times \operatorname{Pr}(\mathcal{M}_{\operatorname{Stability}}(G) \neq \bm{\sigma}^{*} | \mathcal{M}_{\operatorname{Stability}}(G) = \perp) \nonumber\\
 & \hspace{0.1in} + \operatorname{Pr}(\mathcal{M}_{\operatorname{Stability}}(G) = \hat{\bm{\sigma}}_{\text{ML}}) \times \operatorname{Pr}(\mathcal{M}_{\operatorname{Stability}}(G)\neq \bm{\sigma}^{*} | \hat{\bm{\sigma}}_{\text{stb.}} = \hat{\bm{\sigma}}_{\text{ML}}) \nonumber \\ 
 & \leq \operatorname{Pr}(\hat{\bm{\sigma}}_{\operatorname{Stability}} = \perp) \times 1 + 1 \times \operatorname{Pr}(\hat{\bm{\sigma}}_{\operatorname{Stability}} \neq \bm{\sigma}^{*} | \mathcal{M}_{\operatorname{Stability}}(G) = \hat{\bm{\sigma}}_{\text{ML}}) \nonumber \\ 
 & \leq \operatorname{Pr}(\mathcal{M}_{\operatorname{Stability}}(G) = \perp) + \operatorname{Pr}(\hat{\bm{\sigma}}_{\text{ML}} \neq \bm{\sigma}^{*}),
\end{align}
where the probability is taken over the randomness of the Laplacian mechanism and over the randomness graph generated from SBM. We further upper bound the first term in the above equation as follows:
\begin{align}
    \operatorname{Pr}(\mathcal{M}_{\operatorname{Stability}}(G)= \perp) & = \operatorname{Pr}\left[\tilde{d}(G) < \frac{\log(1/\delta)}{\epsilon}\right] \nonumber \\
    & = \operatorname{Pr}\left[d(G) + \text{Lap}(1/\epsilon) < \frac{\log(1/\delta)}{\epsilon} \right] \nonumber \\ 
    & = \operatorname{Pr}\left[\text{Lap}(1/\epsilon) < \frac{\log(1/\delta)}{\epsilon} - d(G)\right] \nonumber \\
     & \leq \operatorname{Pr}\left[d(G) <   f_{n,\epsilon} \right] \times \operatorname{Pr}\left[\text{Lap}(1/\epsilon) < \frac{\log(1/\delta)}{\epsilon} \right] \nonumber \\
     & \hspace{0.1in} +  \operatorname{Pr}\left[d(G) \geq  f_{n,\epsilon} \right] \times \operatorname{Pr}\left[\text{Lap}(1/\epsilon) < \frac{\log(1/\delta)}{\epsilon} -   f_{n,\epsilon}\right] \nonumber \\ 
     & \leq \underbrace{\operatorname{Pr}\left[d(G) <   f_{n,\epsilon} \right]}_{\text{Term 1}} + \underbrace{\operatorname{Pr}\left[\text{Lap}(1/\epsilon) < \frac{\log(1/\delta)}{\epsilon} -   f_{n,\epsilon}\right]}_{\text{Term 2}}. \label{eqn:prob_outputting_null}
\end{align}

\textbf{Bounding Term 2:} By picking $f_{n,\epsilon} = (t+1) \log(n)/\epsilon$ and $\delta = n^{-t}$ for any positive $t$, it can be readily shown that $\operatorname{Pr}\left[\text{Lap}(1/\epsilon) < \frac{\log(1/\delta)}{\epsilon} -   f_{n,\epsilon} = -\log{(n)}/\epsilon\right] = o(1)$.
To upper bound Term $1$, we introduce an intermediate lemma which gives a lower bound on $d(G)$ as follows.


\begin{lemma}
  \label{lemma:mle-stable_lower_bound} 
  Let $(R^{\text{ML}}, B^{\text{ML}})$ be the output of $\text{MLE}(G)$. Let $d(G)$ be the distance from $G$ to the nearest unstable instance, then $d(G)$ is lower bounded by
\begin{align}
     d(G) \geq  \min_{(R, B)\neq (R^{\text{ML}}, B^{\text{ML}})}{{E}^{(G)}(R, B) - {E}^{(G)}(R^{\text{ML}}, B^{\text{ML}})}.
\end{align}
where ${E}^{(G)}(R, B)$ denotes the number of edges between partitions $R$ and $B$ of the graph $G$.
\end{lemma}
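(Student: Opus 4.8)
The plan is to exploit the fact, recalled earlier in the excerpt, that for $r=2$ the MLE coincides with the minimum balanced bisection, so $(R^{\text{ML}},B^{\text{ML}})$ is precisely the balanced partition minimizing the cut size $E^{(G)}(R,B)$. Under this reformulation, an \emph{unstable instance} is a graph $G'$ whose minimum bisection differs from $(R^{\text{ML}},B^{\text{ML}})$, and $d(G)$ is the least number of edge additions/deletions needed to reach such a $G'$. I would introduce, for every balanced partition $P=(R,B)\neq P^{\star}$, where $P^{\star}=(R^{\text{ML}},B^{\text{ML}})$, the margin
\begin{align}
D_G(P) = E^{(G)}(R,B) - E^{(G)}(R^{\text{ML}},B^{\text{ML}}) \geq 0,
\end{align}
so that the right-hand side of the lemma is exactly $\Delta := \min_{P\neq P^{\star}} D_G(P)$. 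The goal then becomes: any $G'$ reachable from $G$ in fewer than $\Delta$ edge flips still has $P^{\star}$ as its minimum bisection, hence is stable.

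The heart of the argument is a sensitivity claim: flipping (adding or deleting) a single edge $e=(u,v)$ changes $D(P)$ by at most one, for every fixed $P$. I would prove this by a short case analysis on whether $e$ crosses $P$ and whether it crosses $P^{\star}$. The crucial case is when $e$ crosses \emph{both} partitions: then the flip changes $E^{(G)}(P)$ and $E^{(G)}(P^{\star})$ by the same signed amount, so their difference $D(P)$ is left unchanged. When $e$ crosses exactly one of the two partitions, exactly one of the two cut values changes by $\pm 1$ and hence $D(P)$ changes by $\pm 1$; when $e$ crosses neither, $D(P)$ is unchanged. This cancellation in the ``crosses both'' case is precisely what yields the sharp bound $\Delta$, rather than the weaker $\Delta/2$ that one would obtain by bounding each individual cut value separately.

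Given the sensitivity claim, I would conclude by a counting argument. Suppose $G'$ is obtained from $G$ by $k$ edge flips and its minimum bisection is some $P\neq P^{\star}$; this requires $D_{G'}(P)\leq 0$. Since $D_G(P)\geq \Delta$ and each of the $k$ flips moves $D(P)$ by at most one, the triangle inequality gives $D_{G'}(P)\geq D_G(P)-k \geq \Delta - k$, and forcing $D_{G'}(P)\le 0$ requires $k\geq \Delta$. As this holds for the particular $P$ that became optimal, we get $k \ge D_G(P)\ge \min_{P'\ne P^{\star}}D_G(P')=\Delta$, so every unstable instance lies at distance at least $\Delta$ from $G$, i.e. $d(G)\geq \Delta$. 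The main obstacle I anticipate is the careful bookkeeping of the case analysis together with the tie-breaking convention defining ``unstable'' (whether a competing partition must strictly beat or merely tie $P^{\star}$); I would fix the convention so that instability requires $D_{G'}(P)\le 0$, which is consistent with the per-flip sensitivity bound and keeps the counting step tight.
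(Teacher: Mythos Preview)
Your argument is correct and is essentially the same as the paper's: both show that $k$ edge flips can decrease the margin $D_G(P)=E^{(G)}(P)-E^{(G)}(P^\star)$ by at most $k$, so any competing bisection $P$ needs at least $D_G(P)\geq\Delta$ flips to overtake $P^\star$. The only cosmetic difference is that the paper proceeds by contradiction and bounds the two cut values separately via the split into $m_{\mathrm{add}}$ added and $m_{\mathrm{remove}}$ removed edges (yielding $E^{(\tilde G)}(P^\star)\le E^{(G)}(P^\star)+m_{\mathrm{add}}$ and $E^{(\tilde G)}(P)\ge E^{(G)}(P)-m_{\mathrm{remove}}$), whereas you bound the difference directly via the per-flip case analysis with the cancellation observation; both routes give the identical bound $D_{G'}(P)\ge D_G(P)-k$.
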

\begin{proof}   W.l.o.g, let's consider two equal sized communities $R$ and $B$. We prove this Lemma by contradiction. For a fixed graph $G$, let us assume that we have $\min_{(R, B)\neq (R^{\text{ML}}, B^{\text{ML}})}{E}^{(G)}(R, B)
- {E}^{(G)}(R^{\text{ML}}, B^{\text{ML}}) = \tilde{d} > d$. Let $\tilde{G}$ is the nearest graph from $G$ that we have $\text{MLE}(G) \neq \text{MLE}(\tilde{G})$. We know that $\text{dist}(G, \tilde{G}) = d $ so we can obtain $\tilde{G}$ by adding $m_{add}$ edges to and removing $m_{\text{remove}}$ edges from $G$ and $m_{\text{add}} + m_{\text{remove}} = d$. For any labelling $(R, B) \neq (R^{\text{ML}}, B^{\text{ML}})$, we have
   \begin{align}
     {E}^{(\tilde{G})}(R^{\text{ML}}, B^{\text{ML}}) &\leq {E}^{(G)}(R^{\text{ML}}, B^{\text{ML}}) + m_{\text{add}} \nonumber \\
                                &\leq \min_{(R, B)\neq (R^{\text{ML}}, B^{\text{ML}})}{E}^{(G)}(R, B) - \tilde{d} + m_{\text{add}} \nonumber  \\
                                &\leq \min_{(R, B)\neq (R^{\text{ML}}, B^{\text{ML}})}{E}^{(G)}(R, B) + m_{\text{remove}} + m_{\text{add}} - \tilde{d} \nonumber  \\
                                &\leq \min_{(R, B)\neq (R^{\text{ML}}, B^{\text{ML}})}{E}^{(G)}(R, B) + d - \tilde{d} \nonumber \\
                                 & < \min_{(R, B)\neq (R^{\text{ML}}, B^{\text{ML}})}{E}^{(G)}(R, B) + d
   \end{align}
   which implies $(R^{ML}, B^{ML}) = \text{MLE}(\tilde{G})$, which contradicts that $\text{MLE}(G) \neq \text{MLE}(\tilde{G})$.
\end{proof}

With this Lemma, we now return to analyze the first term in \eqref{eqn:prob_outputting_null} as follows. \\

\textbf{Bounding Term 1:} By expanding the probability by law of total of probability theorem, we get the following sequences of steps:
\begin{align}
    & \operatorname{Pr} \left[ d(G) < f_{n, \epsilon}\right]   
\nonumber \\
    &\overset{(a)}{\leq}  \operatorname{Pr } \left[ E(R^{\min},B^{\min}) - E(R^{\text{ML}}, B^{\text{ML}}) < f_{n,\epsilon}\right] \nonumber \\
    & = \operatorname{Pr } \left[ E(R^{\min},B^{\min}) - E(R^{\text{ML}}, B^{\text{ML}}) < f_{n,\epsilon}| E(R^{\text{ML}}, B^{\text{ML}}) = E(R^{*}, B^{*})\right] \times  \operatorname{Pr} \left[E(R^{\text{ML}}, B^{\text{ML}}) = E(R^{*}, B^{*}) \right]  \nonumber\\
    & \hspace{0.1in} +  \Pr \left[ E(R^{\min},B^{\min}) - E(R^{\text{ML}}, B^{\text{ML}}) < f_{n,\epsilon} | E(R^{\text{ML}}, B^{\text{ML}}) \neq E(R^{*}, B^{*})\right] \times \operatorname{Pr} \left[E(R^{\text{ML}}, B^{\text{ML}}) \neq  E(R^{*}, B^{*}) \right] \nonumber \\ 
       & \overset{(b)} \leq \operatorname{Pr } \left[  E(R^{\min},B^{\min}) - E(R^{*}, B^{*}) < f_{n,\epsilon}\right] \times 1 + 1 \times o(1) \nonumber\\
       & = \operatorname{Pr } \left[  E(R^{\min},B^{\min}) - E(R^{*}, B^{*}) < f_{n,\epsilon}\right]  + o(1),
\end{align}
where (a) follows from Lemma $2.1$, and we have defined $(R^{\min},B^{\min})$ as the solution of the minimization $\min_{(R, B)\neq (R^{\text{ML}}, B^{\text{ML}})}{E}^{(G)}(R, B)$; in step (b), $\operatorname{Pr} \left[E(R^{\text{ML}}, B^{\text{ML}}) \neq  E(R^{*}, B^{*}) \right] = o(1)$, when $\sqrt{a} - \sqrt{b} > \sqrt{2}$ \cite{abbe2015exact}.  Note that any two communities $(R,B) \neq (R^{*}, B^{*})$, $(R,B)$ can be expressed as 
\begin{align}
    B & = B^{*} - S_{2} + S_{1}, \nonumber \\ 
    R & = R^{*} - S_{1} + S_{2},
\end{align}
where $S_{1}$ and $S_{2}$ are the set of mis-classified labels with respect to $R^{*}$ and $B^{*}$, respectively. By the construction of symmetric communities, we have $|S_{1}| = |S_{2}|$. We can further write the $E(R^{\min}, B^{\min})- E(R^{*}, B^{*})$ as 
  \begin{align}
E(R^{\min}, B^{\min}) - E(R^*, B^*)  &= E(S_{1}, R^{*} \backslash S_{1}) + E(S_{2}, B^{*} \backslash S_{2}) \nonumber \\ 
& \hspace{0.1in} - E(S_{1}, B^{*} \backslash S_{2})  - E(S_{2}, R^{*} \backslash S_{1}).
 \end{align} 
 Given $S_{1}$, $S_{2}$ both of size $k$, $k \in [1, \frac{n}{2}]$, we have 
 \begin{align}
 \operatorname{Pr}(E(R^{\min}, B^{\min}) - E(R^*, B^*) \leq f_{n, \epsilon} | |S_{1}| = k, |S_{2}| = k) = \operatorname{Pr}(\tilde{m}_{1}^{(k)} - \tilde{m}_{2}^{(k)} < f_{n,\epsilon}), \label{eqn:difference_binomial_for_specific_k}
 \end{align}
 where $\tilde{m}_{1}^{(k)} \sim \operatorname{Bin}(2k(n-k), p)$, and $\tilde{m}_{2}^{(k)} \sim \operatorname{Bin}(2k(n-k), q)$. Applying Chernoff's bounds, we get 
 \begin{align}
     \operatorname{Pr}(\tilde{m}_{1}^{(k)} - \tilde{m}_{2}^{(k)} < f_{n,\epsilon}) & \leq \min_{\lambda > 0} e^{\lambda f_{n,\epsilon}} \times \mathds{E} \left[e^{-\lambda(m_{1}^{(k)} - m_{2}^{(k)} )} \right] \nonumber \\ 
     & = \min_{\lambda > 0} e^{\lambda f_{n,\epsilon}} \times \mathds{E} \left[e^{-\lambda m_{1}^{(k)} } \right] \times \mathds{E} \left[e^{\lambda m_{2}^{(k)}} \right] \nonumber \\
     & = \min_{\lambda > 0} e^{\lambda f_{n,\epsilon}} \times (1- p (1- e^{-\lambda}))^{2k(n-k)} \times  (1- q (1- e^{\lambda}))^{2k(n-k)} \nonumber \\
     & \leq n^{-g}, 
     \label{eqn:chernoff_bound_given_k}
 \end{align}
 where $   g = (a+b) \times \frac{2 k (n-k)}{n} - \gamma - \frac{\alpha}{2} \log \left[ \frac{(\gamma - \alpha)a}{(\gamma + \alpha) b} \right] $ and  $\gamma = \sqrt{\frac{(t+1)^2}{\epsilon^{2}} + 4 \times \frac{4 k^{2}(n-k)^{2}}{n^{2}} ab}$ and $\alpha=(t+1)/\epsilon$. The upper bound in \eqref{eqn:chernoff_bound_given_k} is invoked from an existing result \cite{hajek2016achieving_extensions} (stated in Lemma  \ref{lemma:optimized_tail_bound}).  We further lower bound $g$ as follows:
\begin{align}
    g & = (a+b) \times \frac{2 k (n-k)}{n} - \gamma - \frac{\alpha}{2} \log \left[ \frac{(\gamma - \alpha)a}{(\gamma + \alpha) b} \right] \nonumber \\
    & \geq (a+b) \times \frac{2 k (n-k)}{n} - \gamma - \frac{\alpha}{2} \log \bigg(\frac{a}{b} \bigg) \nonumber \\ 
    & = (a+b) \times \frac{2 k (n-k)}{n} - \sqrt{\frac{(t+1)^2}{\epsilon^{2}} + 4 \times \frac{4 k^{2}(n-k)^{2}}{n^{2}} ab} - \frac{(t+1)}{2\epsilon} \log \bigg(\frac{a}{b} \bigg) \nonumber \\ 
    & = (a+b) \times \frac{2 k (n-k)}{n} - \frac{2 k (n-k)}{n}  \times \sqrt{\frac{n^{2}}{4k^{2}(n-k)^{2}} \times  \frac{(t+1)^2}{\epsilon^{2}} + 4 a b} - \frac{t+1}{2\epsilon} \log \bigg(\frac{a}{b} \bigg) \nonumber \\ 
     & \overset{(a)} \geq  (a+b) \times \frac{2 k (n-k)}{n} - \frac{2 k (n-k)}{n}  \times \sqrt{\frac{(t+1)^2}{\epsilon^{2}} + 4 a b} - \frac{t+1}{2\epsilon} \log \bigg(\frac{a}{b} \bigg) \nonumber \\ 
        & = (a+b) \times \frac{2 k (n-k)}{n} - \frac{2 k (n-k)}{n} \times 2  \times \sqrt{\frac{(t+1)^2}{4\epsilon^{2}} + a b} - \frac{(t+1)}{2\epsilon} \log \bigg(\frac{a}{b} \bigg) \nonumber \\ 
           & = \frac{4 k (n-k)}{n} \times \underbrace{\left[ \frac{a+b}{2} - \sqrt{\frac{(t+1)^2}{4\epsilon^{2}} + a b} \right]}_{C_{a,b}} - \frac{t+1}{2\epsilon} \log \bigg(\frac{a}{b} \bigg),
\end{align}
where in step (a), we have the term $\frac{n^{2}}{k^{2} (n-k)^{2}} \leq  4, \forall k \in [1:\frac{n}{2}]$. Applying the union bound and assuming that $C_{a,b} > 1$, we have 
\begin{align}
    & \Pr (E(R^{\min}, B^{\min}) - E(R^*, B^*) < f_{n, \epsilon} )  \leq \sum_{k=1}^{\frac{n}{2}} {n \choose k}^{2} \times \exp \left[ - \frac{\log(n)}{n} \times 4 k (n-k) C_{a,b} + \frac{t+1}{2\epsilon} \log \bigg(\frac{a}{b} \bigg) \right] \nonumber \\ 
    & \leq \sum_{k=1}^{\frac{n}{2}} \bigg( \frac{n e}{k}\bigg)^{2k} \times \exp \left[ - \frac{\log(n)}{n} \times 4 k (n-k) C_{a,b} + \frac{t+1}{2\epsilon} \log \bigg(\frac{a}{b} \bigg) \right] \nonumber \\ 
    & = \sum_{k=1}^{\frac{n}{2}} \exp \left[2 k \bigg(\log\bigg(\frac{n}{k}\bigg) +1 \bigg) - \frac{\log(n)}{n} \times 4 k (n-k) C_{a,b} + \frac{t+1}{2\epsilon} \log \bigg(\frac{a}{b} \bigg) \right] \nonumber \\ 
    & = \sum_{k=1}^{\frac{n}{2}} \exp \left[ 2k \bigg(\log(n) - \log(k) + 1 - \bigg(2 - \frac{2k}{n}\bigg) C_{a,b} \log(n)\bigg) + \frac{t+1}{2\epsilon} \log \bigg(\frac{a}{b} \bigg) \right] \nonumber 
     \end{align}
    \begin{align}
    & \leq \sum_{k=1}^{\frac{n}{2}} \exp \left[ 2k \bigg(\log(n) - \log(k) + 1 - 2\times \bigg(1 - \frac{k}{n} \bigg) \times 1 \times  \log(n)\bigg) + \frac{t+1}{2\epsilon} \log \bigg(\frac{a}{b} \bigg) \ \right] \nonumber \\ 
            & = \sum_{k=1}^{\frac{n}{2}} \exp \left[2k \bigg(- \log(n) - \log(k) + 1 + \frac{2k}{n} \log(n) \bigg) + \frac{t+1}{2\epsilon} \log \bigg(\frac{a}{b} \bigg)  \right] \nonumber \\ 
    & =  \sum_{k=1}^{\frac{n}{2}} \exp \left[2k \bigg(- \log(k) + 1 + \frac{2k}{n} \log(n) \bigg) \right] \times \exp (-2k \log(n)) \times \exp \left[ \frac{t+1}{2\epsilon} \log \bigg( \frac{a}{b}\bigg) \right] \nonumber \\ 
    & < n^{-2} \times \bigg( \frac{a}{b}\bigg)^{(t+1)/2\epsilon}  \times \sum_{k=1}^{\frac{n}{2}} \exp \left[2k \bigg(- \log(k) + 1 + \frac{2k}{n} \log(n) \bigg) \right] \nonumber \\ 
    & = n^{-2} \times \bigg( \frac{a}{b}\bigg)^{(t+1)/2\epsilon} \times \sum_{k=1}^{\frac{n}{2}} \exp \left[- 2k \bigg( \log(k)  - \frac{2k}{n} \log(n) - 1 \bigg) \right] \nonumber \\ 
    & \overset{(a)} \leq  n^{-2} \times \bigg( \frac{a}{b}\bigg)^{(t+1)/2\epsilon}  \times \sum_{k=1}^{\frac{n}{2}} \exp \left[- 2k \bigg( \frac{1}{3} \log(k) - 1 \bigg) \right] \nonumber \\ 
    & \overset{(b)} = o(1), \nonumber 
\end{align}
where in step (a), we have that $\log(k)  - \frac{2k}{n} \log(n) > \frac{1}{3} \log(k)$ for sufficiently large $n$ \cite{abbe2015exact}. In step (b),  we have that $\big( \frac{a}{b}\big)^{(t+1)/2\epsilon}  \times \sum_{k=1}^{\frac{n}{2}} \exp \left[- 2k \bigg( \frac{1}{3} \log(k) - 1 \bigg) \right] = O(1)$. To this end, we have the following conditions on $a$ and $b$: 
\begin{align}
    & a - b > \frac{t+1}{\epsilon} \Rightarrow \sqrt{a - b} > \frac{\sqrt{t+1}}{\sqrt{\epsilon}} \overset{(a)} \Rightarrow \sqrt{a} - \sqrt{b} > \frac{\sqrt{t+1}}{\sqrt{\epsilon}},\\
    & \frac{a+b}{2} - \sqrt{\frac{(t+1)^2}{4\epsilon^{2}} + ab} > 1 \overset{(b)} \Rightarrow \sqrt{a} - \sqrt{b} > \sqrt{2} \times  \sqrt{1 + \frac{t+1}{2\epsilon}},
\end{align}
where in (a), we have that $\sqrt{a -b} > \sqrt{a} - \sqrt{b}$ where $a > b$, while in (b), we used the fact that $\sqrt{a + b} < \sqrt{a} + \sqrt{b}$. Therefore, a sufficient condition to make Term 1 behave as $o(1)$ will be
\begin{align}
    \sqrt{a} - \sqrt{b} & > \sqrt{2} \times \max \left[ \frac{\sqrt{t+1}}{\sqrt{2\epsilon}}, \sqrt{1 + \frac{t+1}{2\epsilon}} \right] \nonumber \\ 
    & = \sqrt{2} \times \sqrt{1 + \frac{t+1}{2\epsilon}}. 
\end{align}

This completes the proof of Theorem 3.3.

\section{Proof of Theorem 3.4  (Threshold condition for $\mathcal{M}^{\text{MLE}}_{\operatorname{Stability}}(G)$ for $r > 2$)} \label{appendix:proof_multiple_communities}


The proof steps follow on similar lines as the $r=2$ case. Specifically, the error probability boils down to establishing an upper bound on $  \operatorname{Pr}\big(d(G)\leq \frac{(t+1)\log(n)}{\epsilon} \big) $ (similar to Term $1$ in the proof of Theorem $3.3$) as follows.
\begin{align}
      \operatorname{Pr}\big(d(G)\leq \frac{(t+1)\log(n)}{\epsilon} \big) 
    & \overset{(a)} \leq  rn \times \operatorname{Pr}\bigg(\operatorname{Bin}\bigg(\frac{n}{r}, p\bigg) -\operatorname{Bin}\bigg(\frac{n}{r}, q\bigg)   \leq \frac{(t+1)\log(n)}{\epsilon}\bigg) \leq n^{-g}, \label{eqn:tail_difference_multiple}
\end{align}
where step (a) follows from applying the union bound. In order to further upper bound \eqref{eqn:tail_difference_multiple}, we invoke Lemma \ref{lemma:optimized_tail_bound}. Define $\gamma = \sqrt{\frac{(t+1)^2}{\epsilon^{2}} + 4 \times \frac{ab}{r^{2}} }$ and $\alpha = \frac{t+1}{\epsilon}$. The function $g$ is lower bounded as follows:
\begin{align}
    g & = \frac{a+b}{r} - \gamma - \frac{\alpha}{2} \log \left[ \frac{(\gamma - \alpha)a}{(\gamma + \alpha) b} \right] \nonumber \\
    & \geq \frac{a+b}{r}  - \gamma - \frac{\alpha}{2} \log \bigg(\frac{a}{b} \bigg) \nonumber \\ 
    & = \frac{a+b}{r} - \sqrt{\frac{(t+1)^2}{\epsilon^{2}} + 4 \times \frac{ab}{r^{2}}} - \frac{t+1}{2\epsilon} \log \bigg(\frac{a}{b} \bigg) \nonumber \\ 
    & = \frac{a+b}{r} - \frac{2}{r} \sqrt{\frac{r^2(t+1)^2}{4\epsilon^{2}} +  ab} - \frac{t+1}{2\epsilon} \log \bigg(\frac{a}{b} \bigg) \nonumber \\ 
      & = \frac{1}{r} \times  \left[{a+b} - 2 \sqrt{\frac{r^2(t+1)^2}{4\epsilon^{2}} +  ab} - \frac{r(t+1)}{2\epsilon} \log \bigg(\frac{a}{b} \bigg) \right]\nonumber \\
        & \geq \frac{1}{r} \times  \left[{a+b} - 2 \sqrt{ab} -  \frac{(t+1)r}{\epsilon}  - \frac{r(t+1)}{2\epsilon} \log \bigg(\frac{a}{b} \bigg) \right]\nonumber \\
         & \geq \frac{1}{r} \times  \left[{a+b} - 2 \sqrt{ab} -  \frac{r}{\epsilon} \times \bigg(t+1  +  \frac{t+1}{2}\log \bigg(\frac{a}{b} \bigg)\bigg) \right]\nonumber \\
           & = \frac{1}{r} \times  \left[(\sqrt{a} - \sqrt{b})^{2} -  \frac{r}{\epsilon} \times \bigg(t+1  + \frac{t+1}{2} \log \bigg(\frac{a}{b} \bigg)\bigg) \right]\nonumber \\
           & = \frac{(\sqrt{a} - \sqrt{b})^{2} }{r} - \frac{1}{\epsilon} \times \bigg(t+1  + \frac{t+1}{2} \log \bigg(\frac{a}{b} \bigg)\bigg)
\end{align}
We have the following conditions on $a$ and $b$: 
\begin{align}
   & \frac{(t+1) \log(n)}{\epsilon} \leq \frac{n}{r} \times (a-b) \times \frac{\log(n)}{n} \nonumber \\ 
    & \Rightarrow a - b \geq \frac{(t+1) r}{\epsilon} \nonumber \\ 
    & \Rightarrow \sqrt{a} - \sqrt{b} \geq \frac{\sqrt{t+1}}{\sqrt{\epsilon}} \times \sqrt{r}.
\end{align}
Also, we require that 
\begin{align}
   & 1 - \frac{(\sqrt{a} - \sqrt{b})^{2} }{r} + \frac{1}{\epsilon} \times \bigg(t+1  +  \frac{t+1}{2}\log \bigg(\frac{a}{b} \bigg)\bigg) < 0 \nonumber \\ 
   & \Rightarrow \frac{(\sqrt{a} - \sqrt{b})^{2} }{r}  > 1 + \frac{1}{\epsilon} \times \bigg(t+1  + \frac{t+1}{2} \log \bigg(\frac{a}{b} \bigg)\bigg) \nonumber \\ 
   & \Rightarrow \sqrt{a} - \sqrt{b} > \sqrt{r} \times \sqrt{ 1 + \frac{1}{\epsilon} \times \bigg(t+1  + \frac{t+1}{2} \log \bigg(\frac{a}{b} \bigg)\bigg)}, \label{eqn:condition_multiple_communities}
\end{align}
where \eqref{eqn:condition_multiple_communities} leads to a sufficient condition on $a$ and $b$ for exact recovery. This completes the proof of Theorem 3.4.

Before delving into proving Theorem 3.5, we first present the optimization problem for SDP relaxation for $r = 2$ and $r > 2$ communities as follows.

\section*{SDP Relaxation Recovery Algorithm:}
 Let us first define $\mathbf{Y} = {\bm{\sigma}} {\bm{\sigma}}^{T}$, where $Y_{i,i} = 1, \forall i \in [n]$, and $\mathbf{J}$ as all ones matrix. Our goal is to solve the following optimization problem:
\begin{align}
    \hat{\mathbf{Y}}_{\text{SDP}} = \max_{\mathbf{Y}} &  \hspace{0.1in}\text{tr}({\mathbf{A}} \mathbf{Y}) \nonumber \\ 
     \text{s.t.} & \hspace{0.1in}  \mathbf{Y} \succcurlyeq \mathbf{0} \nonumber \\ 
    & Y_{i, i} = 1, \forall i \in [n] \nonumber \\ 
    & \text{tr}(\mathbf{J} \mathbf{Y}) = 0. \label{eqn:SDP_relaxation}
\end{align}
It has been shown that if $\sqrt{a} - \sqrt{b} > \sqrt{2}$, then $\operatorname{Pr}(\hat{\mathbf{Y}}_{\text{SDP}} = \mathbf{Y}^{*}) = 1 - o(1)$.  For $r$ communities each of size $\frac{n}{r}$, the ML estimator \cite{hajek2016achieving} is given as: 
\begin{align}
       \hat{\mathbf{Z}}_{\text{SDP}} = \max_{\mathbf{Z}} &  \hspace{0.1in}\text{tr}({\mathbf{A}} \mathbf{Z}) \nonumber \\ 
     \text{s.t.} & \hspace{0.1in}  \mathbf{Z} \succcurlyeq \mathbf{0} \nonumber \\ 
    & Z_{i, i} = 1, \forall i \in [n] \nonumber \\
    & Z_{i,j} \geq 0, i,j \in [n] \nonumber \\ 
    & \mathbf{Z} \mathbf{1} = \frac{n}{r} \mathbf{1}, \label{eqn:SDP_relaxation_r_communities}
\end{align}
where $\mathbf{Z}^{*} = \sum_{k=1}^{r} \xi_{k}^{*} (\xi_{k}^{*})^{T}$, and $\xi_{k}^{*}$ is a binary vector that is an indicator function for community $k$, such that $\xi_{k}(i) = 1$ if vertex $i$ is in community $k$ and $\xi_{k}(i) = 0$, otherwise. It has been shown that if $\sqrt{a} - \sqrt{b} > \sqrt{r}$, then $\operatorname{Pr}(\hat{\mathbf{Z}}_{\text{SDP}} = \mathbf{Z}^{*}) = 1 - o(1)$.


\section{Proof of Theorem 3.5  (Threshold condition for $\mathcal{M}^{\text{SDP}}_{\operatorname{Stability}}(G)$ for $r \geq 2$)}



\begin{lemma}
  \label{lemma:f-stable}
  Given any function $f: \mathcal{G} \rightarrow \mathcal{R}$, the $f$-based Stability algorithm $\mathcal{M}^f_{\operatorname{Stability}}$ with $\delta = n^{-t}$ for any positive $t$ outputs $f(G)$ with probability at least $1-\mathcal{O}(n^{-1})$, if $G$ is $\frac{(t+1)\log{n}}{\epsilon}$-stable under $f$ with probability at least $1-\mathcal{O}(n^{-1})$.
\end{lemma}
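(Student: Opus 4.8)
The plan is to show that the $f$-based stability mechanism deviates from $f(G)$ only when it emits the random label $\perp$, and then to bound the probability of that branch by splitting on whether $G$ is genuinely $\frac{2\log n}{\epsilon}$-stable. Reading off Algorithm~1, the output equals $f(G)=\hat{\bm{\sigma}}(G)$ exactly when $\tilde{d}=d_f(G)+\operatorname{Lap}(1/\epsilon)>\frac{\log(1/\delta)}{\epsilon}$, and equals $\perp$ otherwise; since any deviation from $f(G)$ is contained in the $\perp$-branch, the only quantity I need to control is
\[
\operatorname{Pr}\Big[\,d_f(G)+\operatorname{Lap}(1/\epsilon)\le \tfrac{\log(1/\delta)}{\epsilon}\,\Big],
\]
where the probability is over both the draw of $G$ from the SBM and the independent Laplacian noise.

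First I would condition on the stability event $S=\{\,d_f(G)\ge \frac{2\log n}{\epsilon}\,\}$, for which the hypothesis gives $\operatorname{Pr}[S^{c}]=\mathcal{O}(n^{-1})$. On $S^{c}$ I bound the conditional error by $1$ and absorb it into $\operatorname{Pr}[S^{c}]$. On $S$, I use $\delta\ge 1/n$ (the regime in which the mechanism is run) so that the release threshold obeys $\frac{\log(1/\delta)}{\epsilon}\le \frac{\log n}{\epsilon}$; combined with $d_f(G)\ge \frac{2\log n}{\epsilon}$, emitting $\perp$ forces
\[
\operatorname{Lap}(1/\epsilon)\le \tfrac{\log(1/\delta)}{\epsilon}-d_f(G)\le \tfrac{\log n}{\epsilon}-\tfrac{2\log n}{\epsilon}=-\tfrac{\log n}{\epsilon}.
\]

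The crucial point is that the Laplacian noise is drawn independently of $G$, so conditioning on $S$ leaves its distribution unchanged and I may invoke the standard lower tail $\operatorname{Pr}[\operatorname{Lap}(1/\epsilon)\le -t]=\tfrac12 e^{-\epsilon t}$ with $t=\frac{\log n}{\epsilon}$, giving $\tfrac12 e^{-\log n}=\tfrac{1}{2n}$. Finally, by the law of total probability,
\[
\operatorname{Pr}[\text{output}\ne f(G)]\le \operatorname{Pr}[S^{c}]+\operatorname{Pr}\big[\operatorname{Lap}(1/\epsilon)\le -\tfrac{\log n}{\epsilon}\big]\le \mathcal{O}(n^{-1})+\tfrac{1}{2n}=\mathcal{O}(n^{-1})=o(1),
\]
so the mechanism outputs $f(G)$ with high probability.

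Rather than a deep obstacle, this argument is mostly bookkeeping; the single design choice that makes it close is that the stability parameter $\frac{2\log n}{\epsilon}$ is exactly twice the worst-case release threshold $\frac{\log n}{\epsilon}$, leaving a slack of $\frac{\log n}{\epsilon}$ that the Laplacian tail converts into the $\tfrac{1}{2n}$ term. The only places where one could slip are in keeping the two randomness sources separate---the bound on $\operatorname{Pr}[S^{c}]$ is over the draw of $G$, while the tail bound is over the independent noise, so that the contributions genuinely add---and in using $\delta\ge 1/n$ to control the threshold; I would state both explicitly.
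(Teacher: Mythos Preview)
Your proposal is correct and follows essentially the same approach as the paper: both split on the event $\{d_f(G)\ge \tfrac{2\log n}{\epsilon}\}$, bound the complementary event by the hypothesis $\mathcal{O}(n^{-1})$, and on the good event use $\delta\ge 1/n$ together with the Laplace tail at $-\tfrac{\log n}{\epsilon}$ to get an additional $\mathcal{O}(n^{-1})$ contribution. The only cosmetic difference is that you invoke the exact one-sided tail $\tfrac12 e^{-\epsilon t}$ (yielding $\tfrac{1}{2n}$) whereas the paper passes through the two-sided bound $\Pr[|\operatorname{Lap}(1/\epsilon)|\ge \tfrac{\log n}{\epsilon}]\le n^{-1}$.
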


\begin{proof} 


  Because $G$ is $\frac{(t+1)\log{n}}{\epsilon}$-stable under $f$ with probability at least $1-\mathcal{O}(n^{-1})$, $d(G) \geq \frac{(t+1)\log{n}}{\epsilon}$ with probability at least $1-\mathcal{O}(n^{-1})$. We drop the parameter $(G)$ when the context is clear. The probability that $\mathcal{M}^f_{Stability}(G)$ does not output $f(G)$ is:
  \begin{align}
    \operatorname{Pr}\left[\mathcal{M}^{f}_{\operatorname{Stability}}(G) \neq f(G) \right] &= \operatorname{Pr} \left[\tilde{d} \leq \frac{\log{1/\delta}}{\epsilon} \right]\\
                                       &= \operatorname{Pr} \left[d + \operatorname{Lap}(1/\epsilon) \leq  \frac{\log{1/\delta}}{\epsilon} \right]  \\
                                       &= \operatorname{Pr}\left[\operatorname{Lap}(1/\epsilon) \leq  \frac{\log{1/\delta}}{\epsilon} - d \right]  \\
                                       &= \operatorname{Pr}\left[\operatorname{Lap}(1/\epsilon) \leq  \frac{\log{1/\delta}}{\epsilon} - d \bigg|d>\frac{(t+1)\log{n}}{\epsilon}\right]\operatorname{Pr}\left[d>\frac{(t+1)\log{n}}{\epsilon}\right] \nonumber \\
                                       & \hspace{0.1in} + \operatorname{Pr}\left[\operatorname{Lap}(1/\epsilon) \leq  \frac{\log{1/\delta}}{\epsilon} - d \bigg|d\leq\frac{(t+1)\log{n}}{\epsilon}\right] \operatorname{Pr}\left[d \leq\frac{(t+1)\log{n}}{\epsilon}\right] \\
                                       &\leq \operatorname{Pr}\left[\operatorname{Lap}(1/\epsilon) \leq  \frac{\log{1/\delta} - (t+1)\log{n}}{\epsilon}\right] \operatorname{Pr}\left[d>\frac{(t+1)\log{n}}{\epsilon} \right] + \operatorname{Pr}\left[d <\frac{(t+1)\log{n}}{\epsilon}\right] \\
                                           \end{align}
                                       \begin{align}
                                       &\leq \operatorname{Pr}\left[\operatorname{Lap}(1/\epsilon) \leq  \frac{t\log{n} - (t+1)\log{n}}{\epsilon}\right] + \mathcal{O}(n^{-1}) \\ 
                                       &\leq \operatorname{Pr}\left[\operatorname{Lap}(1/\epsilon) \leq  \frac{-\log{n}}{\epsilon} \right] + \mathcal{O}(n^{-1}) \\
                                       &\leq \operatorname{Pr}\left[|\operatorname{Lap}(1/\epsilon)| \geq  \frac{\log{n}}{\epsilon} \right] + \mathcal{O}(n^{-1}) \\
                                       &\leq n^{-1}+ \mathcal{O}(n^{-1}) \\
                                       &= \mathcal{O}(n^{-1})
  \end{align}
  
  \noindent
  Finally, we have that $\mathcal{M}^f_{\operatorname{Stability}}(G)$ outputs $f(G)$ with probability at least $1-\mathcal{O}(n^{-1})$.
\end{proof}

Proving the optimality of the SDP based algorithm is very challenging; \cite{hajek2016achieving} use a sophisticated dual certificate, and use it to show that the SDP solution is optimal, with high probability (Theorem 4 of~\cite{hajek2016achieving_extensions}). However, this probability (which is $1-1/n^{O(1)}$) is not high enough for a union bound to ensure stability for all graphs within $\mathcal{O}(\log{n})$ distance. Our main technical contribution in this analysis is a slightly different certificate, which ensures that the SDP solution is optimal with probability $\bm{1}$; we refer to this certificate as  ``concentration''.

\begin{definition} A graph $G$ is called $(c_1, c_2, c_3, c_4)$-concentrated if it satisfies all four (4) conditions below
  
  \begin{itemize}
  \item $min_{i\in V(G)} (s_i-r_i) > c_1 \log{n}$
  \item $\Vert A - \E[A] \Vert_2 \leq c_2\sqrt{\log{n}}$
  \item $max_{k\in[r]}\frac{1}{K}\sum_{i\in C_k}r_i \leq Kq + c_3\sqrt{\log{n}}$, \mbox{where $K=n/r$},
  \item $e(C_k, C_{k'}) \geq K^2q -(3/4)K\sqrt{\log{n}}-c_4\log{n}$
  \end{itemize}
  
in which $s_i$ is the number of same-community neighbors of node $i$ and $r_i$ is the maximum number of neighbors of i in one of the other communities; $e(C_k, C_{k'})$ is the number of inter-community edges between communities $k$ and $k' \neq k$.
\end{definition}

We note that the bound in the first condition is stronger than the one in Lemma 4 of~\cite{hajek2016achieving_extensions}.

Next we prove that in some regimes of the SBM, the concentration of a graph $G$ generated by the SBM holds with high probability.~\cite{hajek2016achieving_extensions} proves the second and the third conditions holds with probability at least $1-1/poly(n)$. We will prove that the first and the last condition will hold with probability at least $1-O(1/n)$ to complete the Lemma.

\begin{lemma}
  \label{lemma:sdp-concen-high-prob}
  A graph $G$ generated by an SBM is $(c_1, c_2, c_3, c_4)$-concentrated with probability at least $1-\mathcal{O}(n^{-1})$  for some constants $(c_1, c_2, c_3, c_4)$ with $0 < c_1 \leq \frac{(\sqrt{a}-\sqrt{b})^2/r - 2}{1 + \frac{1}{2}\log{\frac{a}{b}}}$ and $\sqrt{a} - \sqrt{b} >\sqrt{2r}$. 
\end{lemma}

\begin{proof}
The second condition follows from Theorem 5 of \cite{hajek2016achieving}.
The third condition has been shown to hold with high probability in Lemma 5 of~\cite{hajek2016achieving_extensions}. 
Therefore, we only need to prove the first and last conditions will hold with high probability to complete the lemma.

  \emph{The first condition's proof.}
  Let $r_i(k)$ be the number of cross-community neighbors of node $i$ in community $k$ ($i$ is not in community $k$). We have $\forall k: r_i(k) \leq r_i$.
  
  
  

  
  
  Fix a node $i$. First we notice that $s_i \sim Binom(K, a\log{n})$ and $r_i(k) \sim Binom(K, b\log{n})$. Applying the result of Lemma~\ref{lemma:optimized_tail_bound}, substituting $m_1 = m_2 = K = n/r$, $0 < f_n = c_1\log{n} \leq \frac{a-b}{r}\log{n}$ we have $\Pr[s_i - r_i(k) < c_1\log{n}] \leq n^{-g(1/r, 1/r, a, b, c_1)}$. Our goal is to find $c_1$ such that $g(\ldots) \geq 2$, so that we can apply union bound over $r$ communities and $n$ nodes to complete the statement.
  
  We have for any $0 < c_1 \leq \frac{(\sqrt{a}-\sqrt{b})^2/r - 2}{1 + \frac{1}{2}\log{\frac{a}{b}}}$:
  
  \begin{align}
    g(1/r,1/r,a, b, c_1) &= \frac{a+b}{r}- \sqrt{c_1^2 + 4(K^2/n^2)ab} - (c_1/2)\log{\frac{(\gamma-\alpha)a}{(\gamma+\alpha)b}} \\
    &\geq \frac{a+b}{r} - c_1 -\frac{2\sqrt{ab}}{r} - \frac{c_1}{2}\log{\frac{a}{b}} \\
    &= \frac{(\sqrt{a} - \sqrt{b})^2}{r}  - c_1\left(1 + \frac{1}{2}\log{\frac{a}{b}}\right) \\
    &\geq  \frac{(\sqrt{a} - \sqrt{b})^2}{r}  - \frac{(\sqrt{a}-\sqrt{b})^2/r - 2}{1 + \frac{1}{2}\log{\frac{a}{b}}}\left(1 + \frac{1}{2}\log{\frac{a}{b}}\right) \\
    &=2,
  \end{align}

 where the first inequality holds because of the inequality $\sqrt{x} + \sqrt{y} > \sqrt{x + y}$ and $\log{\frac{(\gamma-\alpha)a}{(\gamma+\alpha)b}}\leq \log{\frac{a}{b}}$, and the second inequality holds because $c_1 \leq \frac{(\sqrt{a}-\sqrt{b})^2/r - 2}{1 + \frac{1}{2}\log{\frac{a}{b}}}$.
 Now we have that $\Pr[s_i - r_i(k) < c_1\log{n}] \leq n^{-2}$. We also note that since $(\sqrt{a}-\sqrt{b})^2 < a-b$ for $a > b > 0$, $\frac{(\sqrt{a}-\sqrt{b})^2/r - 2}{1 + \frac{1}{2}\log{\frac{a}{b}}} < \frac{a-b}{r}$, hence $c_1 < \frac{a-b}{r}$ which satisfies the condition of Lemma~\ref{lemma:optimized_tail_bound}.
  

  Taking the union bound over all $r$ communities, we have:
  \begin{align}
  \Pr\left[ s_i - r_i < c_1\log{n}\right] &\leq rn^{-2}
  \end{align}
  
  Taking the union bound on all node $i$, the lemma follows that $\Pr\left[ \forall i \in V: s_i - r_i < c_1\log{n}\right] < rn^{-1}$ where $0 < c_1 \leq \frac{(\sqrt{a}-\sqrt{b})^2/r - 2}{1 + \frac{1}{2}\log{\frac{a}{b}}}$.

  \emph{The last condition's proof.} We first notice that $e(C_k, C_{k'}) \sim Binom (K^2, q)$, since there are $K^2$ pairs of nodes and the probability of edges between each pair is $q$. By Chernoff's bound, we have:
  
  \begin{align}
      \Pr\left[e(C_k, C_{k'}) < (1-\alpha)K^2q \right] &\leq e^{-\alpha^2K^2q/2}
  \end{align}
  
  Set $\alpha = \frac{(3/4)K\sqrt{\log{n}} + c_4\log{n}}{K^2q}$. We notice that $|c_4\log{n} | \ll (K/4)\sqrt{\log{n}}$, therefore $\alpha \geq \frac{(K/2)\sqrt{\log{n}}}{K^2q} = \frac{\sqrt{\log{n}}}{2Kq}$
  
  Therefore we have:
  \begin{align}
      \Pr\left[e(C_k, C_{k'}) < \left(1 - \frac{(3/4)K\sqrt{\log{n}} + c_4\log{n}}{K^2q}\right)K^2q\right] &\leq e^{-\left(\frac{(3/4)K\sqrt{\log{n}} + c_4\log{n}}{K^2q}\right)^2K^2q/2} \\
      \Pr\left[ e(C_k, C_{k'}) < K^2q - (3/4)K\sqrt{\log{n}} - c_4\log{n}\right] &\leq e^{-(\frac{\sqrt{\log{n}}}{2Kq})^2K^2q/2} \\
      &= e^{-\frac{\log{n}}{8q}} \\
      &= n^{-\frac{1}{8q}}
  \end{align}
  
  Taking the union bound over all $k$ and $k'$, we have that $e(C_k, C_{k'}) \geq K^2q -(3/4)K\sqrt{\log{n}}-c_4\log{n}$ with probability at least $1-r^2n^{-\mathcal{O}(\frac{n}{\log{n}})} > 1-\mathcal{O}(n^{-1})$.
  
  Taking the union bound over all four conditions, the Lemma follows.
\end{proof}

Next we prove that the concentration persists under $\Omega(\log{n})$ edge perturbations. Specifically, we prove that if a graph is concentrated, a graph obtained by flipping up to $\Omega(\log{n})$ connections of the original one is also concentrated, albeit with slightly different tuple of constants.

\begin{lemma}
  \label{lemma:sdp-concen-r}
  If a graph $G$ is $(c_1, c_2, c_3, c_4)$-concentrated, all graphs $G'$ at distance at most $c\log{n}/\epsilon$ are $(c_1', c_2', c_3', c_4')$-concentrated with $c_1' = c_1 - c/\epsilon, c_2' = c_2 + \sqrt{2c/\epsilon}, c_3' = c_3 + c/\epsilon, c_4' = c_4 + c/\epsilon$.
\end{lemma}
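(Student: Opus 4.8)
The plan is to regard $G'$ as obtained from $G$ by flipping a set $F$ of at most $d := c\log n/\epsilon$ edges, and to verify the three concentration conditions one at a time. The first thing I would stress is that the ground-truth partition and the SBM parameters are fixed, so $\E[A]$, the block size $K$, and $q$ do \emph{not} move under edge flips; only the realized adjacency matrix changes. For a vertex $i$ let $m_i$ denote the number of edges of $F$ incident to $i$. The elementary counting facts I would use throughout are that each incident flip alters exactly one of $s_i$ (if same-community) or one cross-community count $r_i(k)$ (if cross-community) by one unit, that $m_i \leq |F| \leq d$ for every single vertex, and that $\sum_i m_i = 2|F| \leq 2d$.

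For the first condition I would bound, for each $i$, $s_i' \geq s_i - m_i^{s}$ and $r_i' \leq r_i + m_i^{c}$, where $m_i^{s}, m_i^{c}$ count the same- and cross-community flips at $i$ (so no flip is double-counted and $m_i^{s}+m_i^{c}=m_i$). This gives $s_i' - r_i' \geq (s_i - r_i) - m_i \geq (s_i - r_i) - d$, and combined with $s_i - r_i > c_1\log n$ it yields $s_i' - r_i' > (c_1 - c/\epsilon)\log n$, i.e. $c_1' = c_1 - c/\epsilon$. The third condition uses the same per-vertex estimate $r_i' \leq r_i + m_i$, summed over $C_k$: since $\sum_{i\in C_k} m_i \leq 2d$, I get $\frac{1}{K}\sum_{i\in C_k} r_i' \leq Kq + c_3\sqrt{\log n} + \frac{2d}{K}$. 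Because $\frac{2d}{K} = \frac{2rc\log n}{n\epsilon}$ is far below $(c/\epsilon)\sqrt{\log n}$ for large $n$, the choice $c_3' = c_3 + c/\epsilon$ is comfortably sufficient (indeed with large slack).

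The delicate step is the spectral condition. Here I would write $A' - \E[A] = (A - \E[A]) + (A' - A)$ and apply the triangle inequality for the operator norm, reducing the task to controlling $\Vert A' - A\Vert_2$. Since $A' - A$ is symmetric with exactly two nonzero entries ($\pm 1$) per flipped edge, I would bound the operator norm by the Frobenius norm, $\Vert A' - A\Vert_2 \leq \Vert A' - A\Vert_F = \sqrt{2|F|} \leq \sqrt{2d} = \sqrt{2c}\,\sqrt{\log n/\epsilon}$. This is precisely where the hypothesis $c \geq 2$ is consumed: it guarantees $\sqrt{2c} \leq c$, so $\Vert A' - A\Vert_2 \leq (c/\sqrt{\epsilon})\sqrt{\log n}$ and hence $\Vert A' - \E[A]\Vert_2 \leq (c_2 + c/\sqrt{\epsilon})\sqrt{\log n}$, giving $c_2' = c_2 + c/\sqrt{\epsilon}$.

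I expect the main obstacle to be exactly this operator-norm estimate: the crude Frobenius bound is tight enough only because of the $c \geq 2$ assumption, and the temptation to chase a sharper worst-case bound (e.g. a star-graph $\sqrt{d}$ estimate) would only complicate the constants without being needed. Everything else reduces to the incidence-counting inequalities above, so the crux of the argument is arranging that each perturbation—degree gap, spectral norm, and average cross-degree—is absorbed into its respective budget $c_1', c_2', c_3'$, with the spectral term being the one that forces $c \geq 2$.
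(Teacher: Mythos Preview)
Your proposal is correct and follows essentially the same route as the paper: per-vertex degree bounds for the first and third conditions, and triangle inequality plus the Frobenius bound $\Vert A'-A\Vert_2\le\Vert A'-A\Vert_F=\sqrt{2|F|}$ for the spectral condition, with $c\ge 2$ used exactly to turn $\sqrt{2c}\le c$. Your incidence-counting is in fact slightly tidier than the paper's (you track $m_i$ and $\sum_i m_i\le 2|F|$ explicitly), but the structure of the argument is identical.
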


\begin{proof}
  \noindent
  For the first condition:
  
  \begin{align}
    \min_{i\in V(G)} (s_i - r_i) &\geq \min_{i\in V(G)}\left( s_i - r_i - \frac{c\log{n}}{\epsilon}\right) \\
                                &\geq \min_{i\in V(G)} (s_i - r_i) - \frac{c\log{n}}{\epsilon} \\
                                &\geq (c_1 - c/\epsilon)\log{n}\\
                                &= c_1' \log{n}
  \end{align}
  
  \noindent
  Second, let $\mathbf{A}'$ be the adjacency matrix of graph $G'$. We have $\E[\mathbf{A}'] = \E[\mathbf{A}] = \bar{\mathbf{A}}$ with the assumption that both $G$ and $G'$ are generated by the same SBM. We have:
  
  \begin{align}
    \Vert \mathbf{A}' -\bar{\mathbf{A}} \Vert_2 &\leq \Vert \mathbf{A} + (\mathbf{A}' - \mathbf{A}) -\bar{\mathbf{A}} \Vert_2 \\
                              &\leq \Vert \mathbf{A} - \bar{\mathbf{A}} \Vert_2 + \Vert \mathbf{A}' - \mathbf{A} \Vert_2 \\
                              &\leq c_2\sqrt{\log{n}} + \Vert \mathbf{A}' - \mathbf{A} \Vert_F \\
                              &\leq c_2\sqrt{\log{n}} + \sqrt{\frac{2c\log{n}}{\epsilon}} \\
                              &= (c_2 + \sqrt{2c/\epsilon})\sqrt{\log{n}} \\ 
                              &= c_2'\sqrt{\log{n}}
  \end{align}
  
  \noindent
  Third, the third condition is:
  \begin{align}
    \max_{k\in[r]} \frac{1}{K}\sum_{i\in C_k} r_i' &\leq \max_{k\in[r]} \frac{1}{K}(\sum_{i\in C_k} r_i + \frac{c\log{n}}{\epsilon}) \\
                                                  &\leq \max_{k\in[r]} \frac{1}{K}\sum_{i\in C_k} r_i + \frac{1}{K}\frac{c\log{n}}{\epsilon} \\
                                                  &\leq Kq + c_3\sqrt{\log{n}} + \frac{cr\log{n}}{n\epsilon} \\
                                                  &= Kq + (c_3 + c\frac{r\sqrt{\log{n}}}{\epsilon n})\sqrt{\log{n}}\\
                                                  &\leq Kq + (c_3 + c/\epsilon)\sqrt{\log{n}} \\
                                                  &= Kq + c_3'\sqrt{\log{n}}
  \end{align}
  
  Finally, the last condition is:
  
  \begin{align}
      e(C'_k, C'_{k'}) &\geq e(C_k, C_{k'}) - \frac{c\log{n}}{\epsilon} \\
      &= K^2q -(3/4)K\sqrt{\log{n}}-c_4\log{n} -\frac{c\log{n}}{\epsilon} \\
      &= K^2q -(3/4)K\sqrt{\log{n}} - (c_4 + c/\epsilon) \log{n} \\
      &= K^2q -(3/4)K\sqrt{\log{n}} - c_4'\log{n}
  \end{align}
\end{proof}

\noindent
  We follow~\cite{hajek2016achieving_extensions} to prove that when a graph is $(c_1, c_2, c_3, c_4)$-concentrated for some constants $c_i$, the SDP relaxation (SDP for short) outputs the optimal ground truth community vector. First we note that the SDP can be presented by the following form:

  \begin{align}
    \mbox{maximize  }&\langle \mathbf{A}, \mathbf{Z} \rangle \\
    \mbox{subsect to  } \mathbf{Z} &\curlyeqsucc 0\\
    Z_{ii} &= 1, \forall i \in [n] \\
    Z_{ij} &\geq 0, \forall i, j \in [n]\\
    \mathbf{Z}\bm{1} &= K\bm{1},
  \end{align}
with $K = n/r$ is the size of each community.

\noindent
Then the following Lemma provides the condition for a dual certificate (deterministically). Intuitively, if we can construct a positive semi-definite matrix $S^*$ by the following formula without violating the two constraints below, the SDP is uniquely optimal at $\mathbf{Z}^*$ constructed by the ground truth community label (We say SDP(G) is optimal at the ground truth community label for short).

\begin{lemma}
  \label{lemma:sdp-r-optimal-condition}
  \textbf{Lemma 6 of~\cite{hajek2016achieving_extensions}}.
  Suppose there exists $\mathbf{D}^* = diag(d_i^*)$ with $d_i^*>0$  for all $i, \mathbf{B}^*\in\mathcal{S}^n$ with $\mathbf{B}^*\geq \mathbf{0}$ and $\mathbf{B^*}_{ij}>0$ whenever $i$ and $j$ are in distinct clusters, and $\lambda^*\in\mathbb{R}^n$ such that $\mathbf{S}^* \triangleq \mathbf{D}^* - \mathbf{B}^* - \mathbf{A} + \mathbf{\lambda}^*\bm{1}^T + \bm{1}(\mathbf{\lambda}^*)^T$ satisfies $\mathbf{S}^*\curlyeqsucc0$ and

  \begin{align}
    \mathbf{S}^*\mathbf{\xi}^*_k &= 0, \forall k\in[r]\\
    B_{ij}^*Z_{ij}^* &= 0, \forall i,j \in[n]
  \end{align}

  Then $\text{SDP}(G) = \mathbf{Z}^*$ is the unique solution for the SDP.
\end{lemma}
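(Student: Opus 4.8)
The plan is to treat this as a direct dual-certificate (KKT) verification. The objects $(D^*, B^*, \lambda^*)$ together with $S^*$ play the role of the dual variables associated, respectively, with the diagonal constraints $Z_{ii}=1$, the entrywise constraints $Z_{ij}\geq 0$, the row-sum constraints $Z\bm{1}=K\bm{1}$, and the PSD constraint $Z\succeq 0$. The defining identity $A = D^* - B^* - S^* + \lambda^*\bm{1}^T + \bm{1}(\lambda^*)^T$ is then exactly the stationarity/dual-feasibility relation, and the two displayed equations are complementary slackness. Rather than invoking strong duality, I would prove optimality by a self-contained weak-duality computation: lower-bound $\langle A, Z^* - Z\rangle$ for an arbitrary feasible $Z$.

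First I would substitute the expression for $A$ into $\langle A, Z^* - Z \rangle$ and evaluate the resulting terms. The contribution $\langle D^*, Z^*-Z\rangle$ vanishes because $Z^*_{ii}=Z_{ii}=1$ for every $i$; the two rank-one terms $\langle \lambda^*\bm{1}^T, Z^*-Z\rangle$ and $\langle \bm{1}(\lambda^*)^T, Z^*-Z\rangle$ vanish because $(Z^*-Z)\bm{1}=K\bm{1}-K\bm{1}=0$. For the $B^*$ term, complementary slackness $B^*_{ij}Z^*_{ij}=0$ gives $\langle B^*, Z^*\rangle=0$, so $-\langle B^*, Z^*-Z\rangle = \langle B^*, Z\rangle \geq 0$ since $B^*\geq 0$ and $Z\geq 0$ entrywise. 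For the $S^*$ term, $S^*\xi_k^*=0$ for all $k$ yields $\langle S^*, Z^*\rangle = \sum_k (\xi_k^*)^T S^* \xi_k^* = 0$, so $-\langle S^*, Z^*-Z\rangle = \langle S^*, Z\rangle \geq 0$ because the trace inner product of two PSD matrices is nonnegative. Summing, $\langle A, Z^*-Z\rangle = \langle B^*, Z\rangle + \langle S^*, Z\rangle \geq 0$, establishing that $Z^*$ is optimal.

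To upgrade optimality to \emph{uniqueness} I would trace the equality conditions. If a feasible $Z$ also attains the optimum, then both nonnegative terms must vanish: $\langle B^*, Z\rangle = 0$ and $\langle S^*, Z\rangle = 0$. The first, combined with $B^*_{ij}>0$ for every cross-cluster pair and $Z_{ij}\geq 0$, forces $Z_{ij}=0$ whenever $i$ and $j$ lie in distinct clusters. The second, with $S^*, Z \succeq 0$, forces $S^* Z = 0$, so every column of $Z$ lies in $\mathrm{null}(S^*)$. Assuming this null space is exactly $\mathrm{span}\{\xi_1^*,\dots,\xi_r^*\}$, I would expand each column of $Z$ in the disjoint-indicator basis: the cross-cluster zeros kill the off-block coefficients, the constraint $Z_{ii}=1$ fixes each diagonal block to the value $1$, and the row-sum constraint leaves no remaining freedom, forcing $Z = \sum_k \xi_k^*(\xi_k^*)^T = Z^*$.

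The hard part is precisely the step that $\mathrm{null}(S^*)=\mathrm{span}\{\xi_k^*\}$, i.e.\ that $\mathrm{rank}(S^*)=n-r$. The hypotheses $S^*\succeq 0$ and $S^*\xi_k^*=0$ give only the inclusion $\mathrm{null}(S^*)\supseteq \mathrm{span}\{\xi_k^*\}$; to close the argument one needs the $(r+1)$-th smallest eigenvalue of $S^*$ to be strictly positive. This strict spectral gap is not delivered by the abstract lemma but by the explicit construction of $(D^*, B^*, \lambda^*)$ together with the $(c_1,c_2,c_3)$-concentration bounds established above, which is where the quantitative separation in $\sqrt{a}-\sqrt{b}$ ultimately enters.
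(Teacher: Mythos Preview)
The paper does not prove this lemma; it is quoted verbatim as Lemma~6 of \cite{hajek2016achieving_extensions} and used as a black box. So there is nothing to compare against directly, but your argument can be assessed on its own.

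Your optimality step is correct and is exactly the standard dual-certificate computation: substituting $A = D^* - B^* - S^* + \lambda^*\bm{1}^T + \bm{1}(\lambda^*)^T$ into $\langle A, Z^*-Z\rangle$ and checking each term against the primal constraints and the complementary-slackness relations yields $\langle A, Z^*\rangle \geq \langle A, Z\rangle$ for every feasible $Z$.

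Where you go astray is in the uniqueness step. You route the argument through $\mathrm{null}(S^*)=\mathrm{span}\{\xi_k^*\}$ and then flag this rank condition as a gap not supplied by the hypotheses. In fact the rank condition is unnecessary: once $\langle B^*,Z\rangle=0$ forces $Z_{ij}=0$ for every cross-cluster pair $(i,j)$, the remaining primal constraints alone pin $Z$ down. Indeed, $Z\succeq 0$ together with $Z_{ii}=1$ gives $|Z_{ij}|\leq 1$ via the $2\times 2$ principal minors, so every entry of $Z$ lies in $[0,1]$. For a fixed row $i$ in cluster $C_k$, the constraint $(Z\bm{1})_i=K$ reads $\sum_{j\in C_k} Z_{ij}=K$ (the cross-cluster entries having been zeroed out); this is a sum of $K$ numbers in $[0,1]$ equalling $K$, hence every $Z_{ij}=1$ for $j\in C_k$. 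Thus $Z=\sum_k \xi_k^*(\xi_k^*)^T=Z^*$. The condition $\langle S^*,Z\rangle=0$ and the null-space analysis are never invoked, so the ``hard part'' you isolate is not actually needed for the lemma as stated.
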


In the following statement, we claim that when the concentration holds, the SDP outputs the uniquely optimal ground truth community deterministically. We closely follow the proof of Theorem 4 of~\cite{hajek2016achieving_extensions} but replacing their high probability bounds by our concentration conditions. We prove that the concentration of the input graph implies the existence of a positive semi-definite matrix $S^*$ satisfies Lemma~\ref{lemma:sdp-r-optimal-condition}.

\begin{lemma}
  \label{lemma:sdp-optimal-r}
  When a graph $G$ is $(c_1, c_2, c_3, c_4)$-concentrated for some constants $c_j$ and $c_1>0$, the SDP outputs optimal ground truth community. 
\end{lemma}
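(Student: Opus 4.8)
The plan is to invoke the dual-certificate characterization of Lemma~\ref{lemma:sdp-r-optimal-condition} and to exhibit, \emph{deterministically}, a feasible certificate $(D^*, B^*, \lambda^*)$ whenever $G$ is $(c_1,c_2,c_3)$-concentrated. Following the construction in the proof of Theorem~4 of~\cite{hajek2016achieving_extensions}, I would define the certificate directly from the graph: set $D^* = \operatorname{diag}(d_i^*)$ with $d_i^*$ determined by the same-community degree $s_i$, choose $\lambda^*$ from the per-community cross-degrees, and then \emph{define} $B^*$ on the cross-community pairs so that the two equality constraints of Lemma~\ref{lemma:sdp-r-optimal-condition}, namely $S^*\xi_k^*=0$ for all $k$ and $B^*_{ij}Z^*_{ij}=0$, hold by construction. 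The support condition $B^*_{ij}Z^*_{ij}=0$ is immediate because $B^*$ is placed only on pairs $(i,j)$ in distinct communities (where $Z^*_{ij}=0$), while $S^*\xi_k^*=0$ fixes the diagonal of $D^*$ and the row-sums of $B^*$.

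Two things must then be checked. First, \textbf{feasibility of $B^*$}, i.e. $B^*\geq 0$ entrywise with $B^*_{ij}>0$ across clusters: this is exactly where the first concentration condition $\min_i(s_i-r_i)>c_1\log n$ enters, since the prescribed cross-community entries are nonnegative precisely when every node has a strict same-community degree surplus over its largest cross-community degree $r_i$. Second, and this is the crux, \textbf{$S^*\succcurlyeq 0$ with the correct null space}. Here I would split any test vector as $x=x_\parallel+x_\perp$ with $x_\parallel\in\operatorname{span}\{\xi_k^*\}$; since $S^*\xi_k^*=0$ we get $x^TS^*x=x_\perp^T S^* x_\perp$, so it suffices to show $x_\perp^TS^*x_\perp>0$ for every nonzero $x_\perp\perp\operatorname{span}\{\xi_k^*\}$.

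The main obstacle is this positive-definiteness on the orthogonal complement, and the key idea is a separation of scales. I would write $A=\E[A]+(A-\E[A])$ and correspondingly $S^* = \big(D^*-B^*+\lambda^*\bm{1}^T+\bm{1}(\lambda^*)^T-\E[A]\big) - (A-\E[A])$. The ``mean'' part is analyzed combinatorially: using the first and third concentration conditions one shows that on the complement of the community space it is bounded below by a positive multiple of $\log n$ times the identity (the gap $s_i-r_i\gtrsim\log n$ supplies this margin, while the third condition keeps $\lambda^*$ and the average of $B^*$ controlled). The fluctuation part contributes at most $\|A-\E[A]\|_2\,\|x_\perp\|^2\leq c_2\sqrt{\log n}\,\|x_\perp\|^2$ by the second concentration condition. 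Since $\log n$ dominates $\sqrt{\log n}$ for large $n$, the margin beats the perturbation and $x_\perp^TS^*x_\perp>0$ strictly, yielding both $S^*\succcurlyeq 0$ and uniqueness; Lemma~\ref{lemma:sdp-r-optimal-condition} then gives $\mathrm{SDP}(G)=Z^*$, and crucially the whole argument uses only the deterministic concentration hypotheses rather than any high-probability event. The delicate points I expect to fight with are the $B^*\geq 0$ bookkeeping (where the \emph{strengthened} first condition, stronger than Lemma~4 of~\cite{hajek2016achieving_extensions}, is precisely what makes the certificate deterministic) and verifying that the $-B^*+\lambda^*\bm{1}^T+\bm{1}(\lambda^*)^T$ terms do not erode the $\Omega(\log n)$ spectral margin on the complement.
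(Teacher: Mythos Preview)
Your proposal is correct and follows essentially the same approach as the paper: invoke the dual certificate of Lemma~\ref{lemma:sdp-r-optimal-condition}, use the explicit $(D^*,B^*,\lambda^*)$ construction from Theorem~4 of~\cite{hajek2016achieving_extensions}, and replace their high-probability bounds by the three deterministic concentration hypotheses to get an $\Omega(\log n)$ margin beating the $O(\sqrt{\log n})$ spectral perturbation. The one simplification the paper exploits, which dissolves your last ``delicate point'', is that Hajek et al.'s specific $B^*$ satisfies $x^TB^*x=0$ for every $x\perp\operatorname{span}\{\xi_k^*\}$, so the $B^*$ term drops out entirely from the quadratic form on the complement and the computation reduces cleanly to $x^TS^*x \geq \lambda_{\min}(D^*) + p - \|A-\E[A]\|_2 \geq c_1\log n - (c_2+c_3+1)\sqrt{\log n}$.
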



\begin{proof}
  \noindent
  By the result of Lemma~\ref{lemma:sdp-r-optimal-condition}, we will prove that there exists  $\mathbf{S}^*\curlyeqsucc0$ which satisfies the two constraints above, and the Lemma follows. The main idea is to specify a way to construct $\mathbf{B^*}$, $\mathbf{D^*}$, and $\lambda^*$ that satisfy all properties above. Theorem 4 of~\cite{hajek2016achieving_extensions} defines $\mathbf{B^*}$, $\mathbf{D^*}$, and $\lambda^*$ as follows, for $i\in C_k$ and $j\in C_{k'}$:
  
  \begin{align}
    u_{kk'} &= \frac{1}{2K}\left(\frac{e(C_k, C_{k'})}{K} - Kq + \sqrt{\log{n}}\right) \\
    y^*_{kk'}(i) &= \frac{1}{K}(r_i - e(i, C_{k'})) + u_{kk'} \\
    z^*_{kk'}(j) &= \frac{1}{K}(r_j - e(j, C_{k})) + u_{kk'} \\
    \mathbf{B^*_{C_k\times C_{k'}}}(i,j) &= y^*_{kk'} (i) + z^*_{kk'}(j), \forall, 1 \leq k \leq k' \leq r \\
    \alpha_k &= \frac{1}{2}(Kq - \sqrt{\log{n}}) \\
    d^*_i &= e(i, C_k) - r_i + 2\alpha_k - \frac{1}{K}\sum_{i\in C_k} r_i \\
    \lambda^*_i &= \frac{1}{K}(r_i - \alpha_k) \mbox{  , for $i \in C_k$} 
  \end{align}
  
  \noindent
  We follow the proof of Theorem 4 of~\cite{hajek2016achieving_extensions} to prove $\mathbf{S}^*\curlyeqsucc0$. We note that the main difference between our proof and Theorem 4 of~\cite{hajek2016achieving_extensions} is that our proof proves the statement (about $\mathbf{S}^*$) is always true under the concentration condition while~\cite{hajek2016achieving_extensions} proves the statement is true with high probability over the SBM. We also need a different bound on $\min_{i\in V}(s_i-r_i)$ from~\cite{hajek2016achieving_extensions} to tolerate the change of up to $c\log{n}/\epsilon$ connections later.

  \noindent
  Let $E$ be the subspace spanned by vectors $\{\xi_k^*\}_k\in[r]$, i.e., $E = \operatorname{span}(\mathbf{\xi}_k^*: k\in[r])$. We show that
  \begin{align}
    x^T \mathbf{S}^* x > 0\mbox{ } {\forall x \perp E, \Vert x \Vert_2 = 1}
  \end{align}

  when the input graph $G$ is $(c_1, c_2, c_3, c_4)$-concentrated and $n$ is large enough. Note that this is sufficient to imply that $x^TS^*x\geq 0$ for all $x$ because $S^*\xi^*_k=0$ for all $k$ (as shown in the proof of Theorem 4 of~\cite{hajek2016achieving_extensions}), which implies $S^*x=0$ for all $x\in E$. 

  \noindent
  Note that $\E[\mathbf{A}] = (p-q) \mathbf{Z}^* + q\bm{J} - p\bm{I}$ and $\mathbf{Z}^* = \sum_{k\in[r]}\xi_k^*(\xi_k^*)^T$. For any $x$ such that $x \perp E$ and $\Vert x \Vert_2 = 1$,

  \begin{align}
    x^T \mathbf{S}^* x &= x^T \mathbf{D}^* x - x^T \E[\mathbf{A}] x - x^T \mathbf{B}^*x + 2x^T\lambda^*\bm{1}^Tx - x^T(\mathbf{A} - \E[\mathbf{A}])x \\
    &= x^T \mathbf{D}^* x - (p-q)x^T \mathbf{Z^*} x - qx^T \mathbf{J}x + p - x^T \mathbf{B}^*x - x^T(\mathbf{A} - \E[\mathbf{A}])x \\
            &= x^T\mathbf{D}^*x + p - x^T\mathbf{B}^*x - x^T(\mathbf{A} - \E[\mathbf{A}])x \\
            &\geq x^T\mathbf{D}^*x + p - x^T\mathbf{B}^*x - \Vert \mathbf{A} - \E[\mathbf{A}]\Vert_2 \\
            &\geq x^T\mathbf{D}^*x + p - x^T\mathbf{B}^*x - c_2\sqrt{\log{n}},
  \end{align}
  
  where the second equality holds because $\E[\mathbf{A}] = (p-q) \mathbf{Z}^* + q\bm{J} - p\bm{I}$ and $x \perp \mathbf{1}$; and the third equality holds because $\langle x, \xi^*_k \rangle = 0$ and $x \perp \mathbf{1}$.

\noindent
Theorem 4 of~\cite{hajek2016achieving_extensions} shows that  $\mathbf{B}^*$ can be chosen such that for any $x \perp E$, we have $x^T \mathbf{B}^* x = 0$ where both constraints of Lemma~\ref{lemma:sdp-r-optimal-condition} are satisfied. 


From the definition of $D^*$, we have
  \begin{align}
    \lambda_{\min}(D^*) &\geq \min_i d_i^* \\
                        &= \min_i e(i, C_k) - r_i + 2\alpha_k - \frac{1}{K}\sum_{i\in C_k} r_i \\
                        &= \min_i e(i, C_k) - r_i + Kq - \sqrt{\log{n}} - \frac{1}{K}\sum_{i\in C_k} r_i \\
                        &= \min_i e(i, C_k) - r_i - \sqrt{\log{n}} - \left(\frac{1}{K}\sum_{i\in C_k} r_i - Kq\right)\\
                       &\geq \min_i (e(i, C_k) - r_i) - (c_3+1)\sqrt{\log{n}} \\
    &\geq c_1\log{n} - (c_3+1)\sqrt{\log{n}},
  \end{align}
where the second inequality holds because from the third condition of concentration, $\frac{1}{K}\sum_{i\in C_k} r_i - Kq \leq c_3\sqrt{\log{n}}$; and the last inequality holds because from the first condition of concentration, $e(i, C_k) - r_i) \geq c_1\log{n}$ .
  We then have:

  \begin{align}
    x^T \mathbf{D}^* x &\geq \lambda_{\min}(\mathbf{D}^*)\Vert x \Vert^2_2 \\
            &\geq \lambda_{\min}(\mathbf{D}^*)\\
            &\geq \min_i d_i^* \\
    &\geq c_1\log{n} - (c_3+1)\sqrt{\log{n}}.
  \end{align}

  \noindent

  With that, we simplify $x^T\mathbf{S}^*x$ with the assumption that the graph is $(c_1, c_2, c_3, c_4)$-concentrated:

  \begin{align}
    x^T\mathbf{S}^*x &\geq c_1\log{n} - (c_2+c_3+1)\sqrt{\log{n}} \\
    &\geq 0 \mbox{  when $n$ is large enough.}
  \end{align}
  
  Finally, to guarantee that $\mathbf{B^*}_{ij}>0$ whenever $i$ and $j$ are in distinct clusters, we will prove $y^*_{kk'}(i) \text{ and }z^*_{kk}(j) > 0$. From their definitions, we see that $r_i - e(i, C_{k'}) \geq 0$ and $r_j - e(j, C_{k}) \geq 0$, therefore we need to prove $u_{kk'} > 0$ for all $k$ and $k'\neq k$ to complete the proof, i.e., to prove $e(C_k, C_{k'}) > K^2q - K\sqrt{\log{n}}$. The fourth condition of $(c_1, c_2, c_3, c_4)$-concentration says that there is a constant $c_4$ such that $e(C_k, C_{k'}) \geq K^2q - (3/4)K\sqrt{\log{n}} - c_4\log{n}$ for all $k, k'$. Since $c_4\log{n} \ll (K/4)\sqrt{\log{n}}$ for $n$ large enough, it confirms that $e(C_k, C_{k'}) > K^2q - K\sqrt{\log{n}}$, $\mathbf{B^*}$ satisfies above conditions.
  
  Apply the result of Lemma~\ref{lemma:sdp-r-optimal-condition}, the Lemma follows.
\end{proof}
  
Next we prove that if a graph is appropriately concentrated, it is also stable under the SDP relaxation.
  
\begin{lemma}
  \label{lemma:sdp-stable-r}
  When a graph $G$ is $(c_1, c_2, c_3, c_4)$-concentrated and $c_1 - c/\epsilon > 0$, it is also $\frac{c\log{n}}{\epsilon}$-stable.
\end{lemma}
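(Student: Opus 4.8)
The plan is to chain together the two structural lemmas just established---persistence of concentration under perturbation (Lemma~\ref{lemma:sdp-concen-r}) and deterministic optimality of the SDP under concentration (Lemma~\ref{lemma:sdp-optimal-r})---so as to show that the SDP estimate is invariant over the entire ball of radius $\frac{c\log n}{\epsilon}$ around $G$. Recall from \eqref{eqn:stab-sigma} that proving $G$ is $\frac{c\log n}{\epsilon}$-stable amounts to showing that every graph $G'$ with $\text{dist}(G,G')\leq \frac{c\log n}{\epsilon}$ yields the same output, i.e.\ $\hat{\bm{\sigma}}_{\text{SDP}}(G')=\hat{\bm{\sigma}}_{\text{SDP}}(G)$.

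First I would apply Lemma~\ref{lemma:sdp-optimal-r} to $G$ itself: since $G$ is $(c_1,c_2,c_3)$-concentrated and $c_1>0$ (which is implied by the hypothesis $c_1-c/\epsilon>0$), the SDP is uniquely optimal at the ground-truth matrix $\mathbf{Z}^*$, so $\hat{\bm{\sigma}}_{\text{SDP}}(G)=\bm{\sigma}^*$. Next, I would fix an arbitrary $G'$ with $\text{dist}(G,G')\leq \frac{c\log n}{\epsilon}$. By Lemma~\ref{lemma:sdp-concen-r}, $G'$ is $(c_1',c_2',c_3')$-concentrated with respect to the same partition, where $c_1'=c_1-c/\epsilon$. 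The crux is that the hypothesis $c_1-c/\epsilon>0$ gives exactly $c_1'>0$, which is precisely the condition needed to invoke Lemma~\ref{lemma:sdp-optimal-r} a second time, now for $G'$. Hence the SDP is again uniquely optimal at $\mathbf{Z}^*$ on input $G'$, giving $\hat{\bm{\sigma}}_{\text{SDP}}(G')=\bm{\sigma}^*=\hat{\bm{\sigma}}_{\text{SDP}}(G)$.

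Since $G'$ was arbitrary in the ball, no perturbation of at most $\frac{c\log n}{\epsilon}$ edges changes the estimate, so $d_{\hat{\bm{\sigma}}_{\text{SDP}}}(G)>\frac{c\log n}{\epsilon}$ and $G$ is $\frac{c\log n}{\epsilon}$-stable. The argument is therefore essentially bookkeeping, because the two genuinely difficult steps have already been isolated: converting the optimality certificate into a deterministic statement (Lemma~\ref{lemma:sdp-optimal-r}) and showing that concentration degrades only linearly in the number of flipped edges (Lemma~\ref{lemma:sdp-concen-r}).

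The main point requiring care is not a deep obstacle but a consistency check on the setup. Concentration is defined relative to the fixed ground-truth partition $\bm{\sigma}^*$, so I must be explicit that ``the SDP outputs the optimal ground-truth community'' refers to the \emph{same} label vector for both $G$ and $G'$; keeping $\bm{\sigma}^*$ fixed while only the edges move is exactly the regime Lemma~\ref{lemma:sdp-concen-r} was built for. I would also verify that the constant regime $c\geq 2$ needed in Lemma~\ref{lemma:sdp-concen-r} is compatible with the stability radius claimed here, so that the second invocation of the concentration lemma is legitimate for the stated range of $c$.
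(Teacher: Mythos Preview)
Your proposal is correct and matches the paper's proof essentially line for line: both invoke Lemma~\ref{lemma:sdp-concen-r} to propagate concentration to every $G'$ in the ball (yielding $c_1'=c_1-c/\epsilon>0$), then apply Lemma~\ref{lemma:sdp-optimal-r} to conclude $SDP(G')=\bm{\sigma}^*=SDP(G)$. Your additional remarks about keeping $\bm{\sigma}^*$ fixed and checking the $c\geq 2$ compatibility are sound and slightly more careful than the paper's terse version.
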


\begin{proof}

  \noindent
  When a graph $G$ is $(c_1, c_2, c_3, c_4)$-concentrated, all graph $G'$ at distance at most $\frac{c\log{n}}{\epsilon}$, formally $d(G', G)\leq \frac{c\log{n}}{\epsilon}$, are $(c_1',c_2',c_3', c_4')$-concentrated with $c_1' = c_1 -c/\epsilon > 0$ as the result of Lemma~\ref{lemma:sdp-concen-r}. Hence, in the radius of $c\log{n}/\epsilon$ from $G$, all graph $G'$ has that $SDP(G')$ is unique and optimal at $\mathbf{Z^*}$ constructed by the ground truth communities with $n$ large enough.

  \noindent
  From that, for all graphs $G'$ such that $d(G, G') \leq \frac{c\log{n}}{\epsilon}$, we have $\text{SDP}(G) = \text{SDP}(G')$ and the lemma follows. 
\end{proof}

\begin{lemma}
  \label{lemma:sdp-stable-high-prob}
%
  A graph $G$ generated by an SBM is $\Omega(\frac{c\log{n}}{\epsilon})$-stable with respect to $SDP$ function with probability at least $1-\mathcal{O}(n^{-1})$ if $\sqrt{a} - \sqrt{b} > \sqrt{r}\sqrt{2 +\frac{c}{\epsilon}(1+\frac{1}{2}\log{\frac{a}{b}})})$
\end{lemma}

\begin{proof}
  \noindent
  By the result of Lemma~\ref{lemma:sdp-concen-high-prob} and Lemma~\ref{lemma:sdp-stable-r}, the lemma follows. We note that from Lemma~\ref{lemma:sdp-concen-high-prob}, we can select $c_1 = \frac{(\sqrt{a}-\sqrt{b})^2/r - 2}{1 + \frac{1}{2}\log{\frac{a}{b}}}$ and hence we need $\frac{(\sqrt{a}-\sqrt{b})^2/r - 2}{1 + \frac{1}{2}\log{\frac{a}{b}}}> \frac{c}{\epsilon}$ to satisfy the condition of Lemma~\ref{lemma:sdp-stable-r}:
  
 \begin{align}
     &\frac{(\sqrt{a}-\sqrt{b})^2/r - 2}{1 + \frac{1}{2}\log{\frac{a}{b}}} > \frac{c}{\epsilon} \\
   \iff  &\frac{(\sqrt{a}-\sqrt{b})^2}{r} > 2 +  \frac{c}{\epsilon}\left( 1 + \frac{1}{2}\log{\frac{a}{b}}\right) \\
   \iff  &\sqrt{a}-\sqrt{b} > \sqrt{r}\sqrt{2 +  \frac{c}{\epsilon}\left( 1 + \frac{1}{2}\log{\frac{a}{b}}\right)}
 \end{align} 
\end{proof}

Finally, we prove that our mechanism outputs the ground-truth community label with high probability if the SBM satisfies two conditions for the stability. The first condition allows the concentration to holds with high probability and the second condition makes the concentration to persist under edge perturbation (of up to $\Omega(\log{n})$ connections).


\begin{theorem}
%
  Given a graph $G$ is generated by an $r$-community SBM model with $\sqrt{a} - \sqrt{b} > \sqrt{r}(\sqrt{2 +\frac{t+1}{\epsilon}(1+\frac{1}{2}\log{\frac{a}{b}})})$, $\mathcal{M}^{SDP}_{\operatorname{Stability}}$ with $\delta=n^{-t}$ outputs $\mathbf{Z^*}$ constructed by the ground-truth community vector w.h.p., \\
  i.e. $\Pr[\mathcal{M}_{Stability\_SDP}(G) \neq \mathbf{Z^*}] = o(1)$. 
  
\end{theorem}

\begin{proof}
  \noindent
  
  Lemma~\ref{lemma:sdp-stable-high-prob} states that a graph $G$ generated by the $r$-community SBM is $(c\log{n}/\epsilon)$-stable under $SDP$ w.h.p.. By applying Lemma~\ref{lemma:f-stable}, substituting the generic function $f$ by $SDP$ and $c=t+1$, $\mathcal{M}^{SDP}_{\operatorname{Stability}}$ outputs $SDP(G)$ w.h.p.. Given that $\Pr[SDP(G)\neq \mathbf{Z^*}] = o(1)$\cite{hajek2016achieving}, the Theorem follows.
  
  
\end{proof}




 \section{Proof of Theorem 3.6  (Threshold condition for $\mathcal{M}_{\operatorname{Bayesian}}(G)$ for $r = 2$)} \label{appendix:proof_Bayesian_sampling}

We first prove that $\mathcal{M}_{\operatorname{Bayesian}}(G)$ satisfies $\epsilon$-edge DP for $\epsilon \geq \log(a/b)$. For a fixed graph $G$, w.l.o.g., let us assume $G = G' + e$, where $e$ is an edge. We define $E_{\text{intra}}(G, \bm{\sigma})$ as the set of same-community edges in graph $G$ and $E_{\text{inter}}(G, \bm{\sigma})$ is the set of cross-community edges of graph $G$, with respect to a labeling $\bm{\sigma}$. We analyze two cases: $(1)$ $e\in E_{\text{intra}}(G, \bm{\sigma})$ (1) and $e\in E_{\text{inter}}(G, \bm{\sigma})$. For each of them, we analyze the likelihood ratio of releasing a label vector $\bm{\sigma}$ if the input graph is $G$ or $G'$. We start with the first case: 
  
 - \textbf{Case $1$: $e\in E_{\text{intra}}(G, \bm{\sigma})$}
  \begin{align}
    R_{1} &= \frac{\operatorname{Pr}(\bm{\sigma}|G')}{\operatorname{Pr}(\bm{\sigma}|G)}
              = \frac{\operatorname{Pr}(G'|\bm{\mathbf{\sigma}}) }{\operatorname{Pr}(G | \bm{\sigma})} \times \frac{\operatorname{Pr}(G)}{\operatorname{Pr}(G')}  \nonumber \\ 
                &= \frac{1-p}{p} \times \frac{\operatorname{Pr}(G)}{\operatorname{Pr}(G')} \nonumber \\
                &= \frac{1-p}{p} \times \frac{\sum_{\bm{\sigma}'} \operatorname{Pr}(\bm{\sigma}')\operatorname{Pr}(G|\bm{\sigma}')}{\sum_{\bm{\sigma}'} \operatorname{Pr}(\bm{\sigma}')\operatorname{Pr}(G'|\bm{\sigma}')} \nonumber \\
                & \overset{(a)} =  \frac{1-p}{p} \times \frac{\sum_{\bm{\sigma}'}\operatorname{Pr}(\bm{\sigma}')\Pr(G'|\bm{\sigma}') \times \frac{p}{1-p}}{\sum_{\bm{\sigma}'}\Pr(\bm{\sigma}')\Pr(G'|\bm{\sigma}')} = 1,
  \end{align}
  where step (a) follows from the fact that $\operatorname{Pr}(G | \bm{\sigma}) = \frac{p}{1-p} \times \operatorname{Pr}(G'|\bm{\sigma})$. Note that, the distribution $p(G|\sigma)$ is given as 
\begin{align}
 & p(G|\sigma)  = \prod_{i < j} p(A_{i,j}|\sigma)  \overset{(a)} = \prod_{i < j} p(A_{i,j}|\sigma_{i}, \sigma_{j}) \nonumber \\ 
 & \overset{(b)} = \prod_{i < j} \left[ \frac{P(A_{i,j}) + Q(A_{i,j})}{2} + \frac{P(A_{i,j}) - Q(A_{i,j})}{2} \sigma_{i} \sigma_{j} \right], \nonumber 
\end{align}
where in step (a), the random variable $A_{i,j}$ only depends on the labels $\sigma_{i}$ and $\sigma_{j}$. In step (b), we have $P(A_{i,j}) = p^{A_{i,j}} (1-p)^{1-A_{i,j}}$, and $Q(A_{i,j}) = q^{A_{i,j}} (1-q)^{1-A_{i,j}}$, where $A_{i,j} \in \{0,1\}$. The distribution $p(\sigma) = (\frac{1}{r})^{n}$, while $p(G) = \sum_{\sigma'} p(G|\sigma')p(\sigma') = (\frac{1}{r})^{n} \sum_{\sigma'} \prod_{i < j} \left[ \frac{P(A_{i,j}) + Q(A_{i,j})}{2} + \frac{P(A_{i,j}) - Q(A_{i,j})}{2} \sigma_{i}' \sigma_{j}' \right]$.

  We next analyze the other ratio as follows.
  \begin{align}
 R_{2}  &= \frac{\operatorname{Pr}(\bm{\sigma}|G)}{\operatorname{Pr}(\bm{\sigma}|G')}
              = \frac{\operatorname{Pr}(G|\bm{\mathbf{\sigma}}) }{\operatorname{Pr}(G' | \bm{\sigma})} \times \frac{\operatorname{Pr}(G')}{\operatorname{Pr}(G)}  \nonumber \\ 
                         &= \frac{p}{1-p} \times \frac{\operatorname{Pr}(G')}{\operatorname{Pr}(G)} \nonumber \\
                &= \frac{p}{1-p} \times \frac{\sum_{\bm{\sigma}'} \operatorname{Pr}(\bm{\sigma}')\operatorname{Pr}(G'|\bm{\sigma}')}{\sum_{\bm{\sigma}'} \operatorname{Pr}(\bm{\sigma}')\operatorname{Pr}(G|\bm{\sigma}')} \nonumber \\
                     &= \frac{p}{1-p} \times \left[\frac{\sum_{\bm{\sigma}': e\in E_{\text{intra}}(\bm{\sigma}')} \operatorname{Pr}(\bm{\sigma}') \operatorname{Pr}(G'|\bm{\sigma}') + \sum_{\bm{\sigma}': e\in E_{\text{inter}}(L')}\operatorname{Pr}(\bm{\sigma}') \operatorname{Pr}(G'|\bm{\sigma}')}{\sum_{\bm{\sigma}'}\operatorname{Pr}(\bm{\sigma}') \operatorname{Pr}(G|\bm{\sigma}')}\right]  \nonumber \\
                     &= \frac{p}{1-p} \times \left[\frac{\sum_{\bm{\sigma}': e\in E_{\text{intra}}(\bm{\sigma}')} \operatorname{Pr}(\bm{\sigma}') \operatorname{Pr}(G|\bm{\sigma}') \times \frac{1-p}{p} + \sum_{\bm{\sigma}': e\in E_{\text{inter}}(\bm{\sigma}')}\operatorname{Pr}(\bm{\sigma}')\operatorname{Pr}(G|\bm{\sigma}')\times \frac{1-q}{q}}{\sum_{\bm{\sigma}'}\operatorname{Pr}(\bm{\sigma}') \operatorname{Pr}(G|\bm{\sigma}')}\right] \nonumber  \\
                     & \overset{(a)} \leq \frac{p}{1-p} \times \left[\frac{\sum_{\bm{\sigma}': e\in E_{\text{intra}}(\bm{\sigma}')} \operatorname{Pr}(\bm{\sigma}') \operatorname{Pr}(G|\bm{\sigma}')\times \frac{1-q}{q} + \sum_{\bm{\sigma}': e\in E_{\text{inter}}(\bm{\sigma}')}\operatorname{Pr}(\bm{\sigma}')\operatorname{Pr}(G|\bm{\sigma}') \times \frac{1-q}{q}}{\sum_{\bm{\sigma}'}\operatorname{Pr}(\bm{\sigma}') \operatorname{Pr}(G|\bm{\sigma}')}\right] \nonumber  \\
                     &= \frac{p(1-q)}{q(1-p)} = e^{\epsilon_0},
  \end{align}
  where step (a) follows that $\frac{1-q}{q} > \frac{1-p}{p}$ for $p > q$. Now, we analyze the second case as follows.
  
- \textbf{Case 2: $e\in E_{\text{inter}}(G, \bm{\sigma})$}
  
  \begin{align}
    R_{1} &= \frac{\operatorname{Pr}(\bm{\sigma}|G')}{\operatorname{Pr}(\bm{\sigma}|G)} \nonumber \\
                &=\frac{1-q}{q} \times \frac{\sum_{\bm{\sigma}'}\operatorname{Pr}(\bm{\sigma}') \operatorname{Pr}(G|\bm{\sigma}')}{\sum_{\bm{\sigma}'}\operatorname{Pr}(\bm{\sigma}')\operatorname{Pr}(G'|\bm{\sigma}')}\nonumber \\
                &\leq \frac{1-q}{q} \times \frac{\sum_{\bm{\sigma}'} \operatorname{Pr}(\bm{\sigma}')\operatorname{Pr}(G'|\bm{\sigma}') \times \frac{p}{1-p}}{\sum_{\bm{\sigma}'}\operatorname{Pr}(\bm{\sigma}')\operatorname{Pr}(G'|\bm{\sigma}')}\nonumber \\
                &=\frac{p(1-q)}{q(1-p)}.
  \end{align}
We next analyze the other ratio as follows:
\begin{align}
  R_{2} &= \frac{\operatorname{Pr}(\bm{\sigma}|G)}{\operatorname{Pr}(\bm{\sigma}|G')} \nonumber \\
                   &= \frac{q}{1-q} \times \frac{\sum_{\bm{\sigma}'}\operatorname{Pr}(\bm{\sigma}')\operatorname{Pr}(G'|\bm{\sigma}')}{\sum_{\bm{\sigma}'}\operatorname{Pr}(\bm{\sigma}']\operatorname{Pr}(G|\bm{\sigma}')} \nonumber \\
                   &= \frac{q}{1-q} \times \left[ \frac{\sum_{\bm{\sigma}': e\in E_{\text{intra}}(\bm{\sigma}')} \operatorname{Pr}(\bm{\sigma}')\operatorname{Pr}(G'|\bm{\sigma}') + \sum_{\bm{\sigma}': e\in E_{\text{inter}}(\bm{\sigma}')}\operatorname{Pr}(\bm{\sigma}') \operatorname{Pr}(G'|\bm{\sigma}')}{\sum_{\bm{\sigma}'}\operatorname{Pr}(\bm{\sigma}')\operatorname{Pr}(G|\bm{\sigma}')}\right] \nonumber \\
                   &= \frac{q}{1-q} \times \left[\frac{\sum_{\bm{\sigma}': e\in E_{\text{intra}}(\bm{\sigma}')}\operatorname{Pr}(\bm{\sigma}')\operatorname{Pr}(G|\bm{\sigma}') \times \frac{1-p}{p} + \sum_{\bm{\sigma}': e\in E_{\text{inter}}(\bm{\sigma}')}\operatorname{Pr}(\bm{\sigma}')\operatorname{Pr}(G|\bm{\sigma}') \times \frac{1-q}{q}}{\sum_{\bm{\sigma}'}\operatorname{Pr}(\bm{\sigma}')\operatorname{Pr}(G|\bm{\sigma}')}\right] \nonumber \\
                   &\leq \frac{q}{1-q} \times \left[\frac{\sum_{\bm{\sigma}': e\in E_{\text{intra}}(\bm{\sigma}')}\operatorname{Pr}(\bm{\sigma}')\operatorname{Pr}(G|\bm{\sigma}')\times \frac{1-q}{q} + \sum_{\bm{\sigma}': e\in E_{\text{inter}}(\bm{\sigma}')}\operatorname{Pr}(\bm{\sigma}')\operatorname{Pr}(G|\bm{\sigma}') \times \frac{1-q}{q}}{\sum_{\bm{\sigma}'}\operatorname{Pr}(\bm{\sigma}')\operatorname{Pr}(G|\bm{\sigma}')}\right] = 1.
\end{align}
From the above cases,  we conclude that the Bayesian sampling mechanism satisfies $\epsilon$-edge DP for all $\epsilon \geq  \log (\frac{p}{q}) + \log(\frac{1-q}{1-p}) \geq \log (\frac{p}{q}) =  \log (\frac{a}{b}) = \epsilon_{0}$.

We next analyze the error probability of the Bayesian mechanism. For a fixed graph $G$, our goal is to show that 
 \begin{align}
     \frac{\Pr(\hat{\bm{\sigma}}_{\text{Bayesian}}(G) \neq \bm{\sigma}^{*})}{\Pr(\bm{\sigma}^{*}|G)} & = \frac{\sum_{\bm{\sigma} \neq \bm{\sigma}^{*}} \Pr(\bm{\sigma} | G)}{\Pr(\bm{\sigma}^{*} | G)} \nonumber \\ 
     & =  \frac{\sum_{\bm{\sigma} \neq \bm{\sigma}^{*}} \Pr(G | \bm{\sigma} )}{\Pr(G | \bm{\sigma}^{*})} \leq o(1) 
 \end{align}
 which is equivalent to $\Pr(\hat{\bm{\sigma}}_{\text{Bayesian}}(G) \neq \bm{\sigma}^{*}) \leq o(1)$. Given the ground truth partitions $(R^{*}, B^{*})$, let us define the following variables:
\begin{align}
    m_{1} \triangleq E(R^{*}\backslash S_{1}, S_{1}), \nonumber \\ 
    m_{2} \triangleq E(B^{*}\backslash S_{2}, S_{1}), \nonumber \\ 
    m_{3} \triangleq E(B^{*}\backslash S_{2}, S_{2}), \nonumber \\ 
    m_{4} \triangleq E(R^{*} \backslash S_{1}, S_{2}),
\end{align}
where $S_{1} \subseteq R^{*}$, $S_{2} \subseteq B^{*}$ that represent the mis-classified nodes in both communities $(R,B)$. Given $S_{1}$, $S_{2}$ both of size $k$, $k \in [1, \frac{n}{2}]$, we have 
  \begin{align}
  R^{(k)} \triangleq \frac{\operatorname{Pr}(G|\bm{\sigma}, |S_{1}| = |S_{2}| = k )}{\operatorname{Pr}(G|\bm{\sigma^{*}}, |S_{1}| = |S_{2}| = k)}   &=  \underbrace{\bigg(\frac{q}{p}\bigg)^{m_1}\bigg(\frac{1-q}{1-p}\bigg)^{k(n-k)-m_1} \times \bigg(\frac{p}{q}\bigg)^{m_2} \bigg(\frac{1-p}{1-q}\bigg)^{k(n-k)-m_2}}_{=P^{(k)}_{1}} \label{term_1_coeff}\\
                          &\times \underbrace{\bigg(\frac{q}{p} \bigg)^{m_3} \bigg(\frac{1-q}{1-p} \bigg)^{k(n-k)-m_3} \times \bigg(\frac{p}{q} \bigg)^{m_4} \bigg(\frac{1-p}{1-q}\bigg)^{k(n-k)-m_4}}_{=P^{(k)}_{2}}. \label{term_2_coeff}
  \end{align}
  Note that, 
  \begin{align}
       \frac{\Pr(\hat{\bm{\sigma}}_{\text{Bayesian}}(G) \neq \bm{\sigma}^{*})}{\Pr(\bm{\sigma}^{*}|G)} & \leq  \sum_{k=1}^{\frac{n}{2}} {n \choose k}^{2} R^{(k)}.
  \end{align}
In order to bound the above ratio, we will first bound $R^{(k)}$ individually. To this end, let us define $\alpha = \frac{p}{q}, \beta = \frac{1-q}{1-p}$. We now simplify each term individually (i.e., \eqref{term_1_coeff} \& \eqref{term_2_coeff}) as follows:
  \begin{align}
    P_{1}^{(k)} &= \alpha^{-m_1}\times\beta^{k(n-k)-m_1}\times\alpha^{m_2}\times\beta^{-(k(n-k)-m_2)} \nonumber \\
                  &=\alpha^{m_2-m_1} \times \beta^{m_2-m_1} \nonumber \\
                  &={\left(\frac{p}{q} \times \frac{1-q}{1-p} \right)}^{m_2-m_1} \nonumber \\ 
                  & ={\left(\frac{q}{p} \times \frac{1-p}{1-q} \right)}^{m_1-m_2} \nonumber \\
                  &\overset{(a)} =(1- x)^{m_1 - m_2}, \nonumber \\
                  & \overset{(b)} \leq \exp\left[-x(m_1-m_2) \right],
  \end{align}
  where $x \triangleq 1 - \frac{q(1-p)}{p(1-q)}$ where $x \leq 1$. Step (b) follows that $(1-x) \leq e^{-x}, \forall x > 0$. Similarly, we have 
  \begin{align}
      P_{2}^{(k)} & \leq \exp\left[-x(m_3-m_4)\right].
  \end{align}
  Therefore, we have 
  \begin{align}
      R^{(k)} & \leq \exp \left[ -x (m_{1}+m_{3} - (m_{2}+m_{4})) \right] \nonumber \\ 
      &= \exp \left[-x (\tilde{m}_{1}^{(k)} - \tilde{m}_{2}^{(k)}) \right], 
  \end{align}
  where  $\tilde{m}_{1}^{(k)} \sim \operatorname{Bin}(2k(n-k), p)$, and $\tilde{m}_{2}^{(k)} \sim \operatorname{Bin}(2k(n-k), q)$.  For all $k \in [1:\frac{n}{2}]$, we have that w.h.p.
   \begin{align}
       \frac{\Pr(\hat{\bm{\sigma}}_{\text{Bayesian}}(G) \neq \bm{\sigma}^{*})}{\Pr(\bm{\sigma}^{*}|G)}  & = \sum_{k=1}^{\frac{n}{2}} {n \choose k}^{2} R^{(k)} \nonumber \\ 
    & \leq \sum_{k=1}^{\frac{n}{2}} \bigg( \frac{ne}{k}\bigg)^{2k} R^{(k)} \nonumber \\
    & = \sum_{k=1}^{\frac{n}{2}} \exp \left[ 2k (\log(n) - \log(k) + 1 ) - x (\tilde{m}_{1}^{(k)} - \tilde{m}_{2}^{(k)})\right] \nonumber \\ 
    &  = \sum_{k=1}^{\frac{n}{2}} \exp \left[ 2k \bigg(\log(n) - \log(k) + 1  - \frac{x}{2k} (\tilde{m}_{1}^{(k)} - \tilde{m}_{2}^{(k)}) \bigg) \right] \nonumber \\ 
       &  = \sum_{k=1}^{\frac{n}{2}} \exp \left[ - 2k \bigg(-\log(n) + \log(k) - 1  + \frac{x}{2k} (\tilde{m}_{1}^{(k)} - \tilde{m}_{2}^{(k)})\bigg) \right] \nonumber \\
       & = \sum_{k=1}^{\frac{n}{2}} \exp \left[ - 2k \bigg(\log(k) - 1 \bigg) \right] \times \exp \left[ - 2k \bigg(-\log(n)   + \frac{x}{2k} (\tilde{m}_{1}^{(k)} - \tilde{m}_{2}^{(k)})\bigg) \right]  \nonumber \\
       & \leq  \sum_{k=1}^{\frac{n}{2}} \exp \left[ - 2k \bigg(\log(k) - 1 \bigg) \right] \\ & \times \exp \left[ - 2k \bigg(-\log(n)   + \frac{x}{2k} \times (1-\tilde{\delta}) \times 2k (n-k) (a-b) \frac{\log(n)}{n}\right]  \nonumber \\
        & \leq  \sum_{k=1}^{\frac{n}{2}} \exp \left[ - 2k \bigg(\log(k) - 1 \bigg) \right] \times \exp \left[ - 2k \bigg(-\log(n)   + (1-\tilde{\delta})  \times  \frac{x}{2} \times  (a-b) \log(n) \bigg) \right]  \nonumber \\
                & \overset{(a)} =  \sum_{k=1}^{\frac{n}{2}} \exp \left[ - 2k \bigg(\log(k) - 1 \bigg) \right] \times \exp \left[ - 2k \bigg( \frac{x}{2} \times (1-\tilde{\delta}) \times  (a-b) -1  \bigg)\log(n) \right] = o(1),
  \end{align}
  where in step (a), we have 
 \begin{align}
\frac{x}{2} (1-\tilde{\delta}) (a-b) > 1 \Rightarrow \sqrt{a} - \sqrt{b} > \frac{\sqrt{2}}{\sqrt{x (1-\tilde{\delta})}}.     
 \end{align}
  
 To this end, the error probability of the Bayesian mechanism is 
 \begin{align}
     \Pr(\hat{\bm{\sigma}}_{\text{Bayesian}}(G) \neq \bm{\sigma}^{*}) & = \Pr(\hat{\bm{\sigma}}_{\text{Bayesian}}(G) \neq \bm{\sigma}^{*} | E_{S_{1}, S_{2}}) \times \Pr(E_{S_{1}, S_{2}}) \nonumber \\
   &     \hspace{0.2in} + \Pr(\hat{\bm{\sigma}}_{\text{Bayesian}}(G) \neq \bm{\sigma}^{*} | E_{S_{1}, S_{2}}^{c}) \times \Pr(E_{S_{1}, S_{2}}^{c}) \nonumber \\
     & \leq o(1) \times 1 + 1 \times n^{- \delta'},
 \end{align}
 where $E_{S_{1},S_{2}} \triangleq  \{ \tilde{m}_{1} - \tilde{m}_{2} \geq c' \log(n) \}$ and $E_{S_{1},S_{2}}^{c}$ denotes its complement, and $x = 1 - \frac{b (n - a \log(n))}{a (n - b \log(n) )} > 1 - \frac{b}{a}$. In order to make the error probability behave as $o(1)$, we have the following condition on $a$ and $b$:
 \begin{align}
     \sqrt{a} - \sqrt{b} & > \sqrt{2} \times \max \left[ \frac{\sqrt{2}}{{\tilde{\delta}}}, \frac{1}{\sqrt{x (1-\tilde{\delta})}} \right]. \label{eqn:lower_bound_Bayesian} 
 \end{align}
 Note that we showed the Bayesian mechanism satisfies $\epsilon \geq \epsilon_{0} = \log \big(\frac{a}{b} \big)$, therefore, we have 
 \begin{align}
     \frac{b}{a} = e^{-\epsilon_{0}}
 \end{align}
 We pick $\tilde{\delta}$ as $\tilde{\delta} = (\sqrt{2} - 1) \epsilon$ so that the lower bound in \eqref{eqn:lower_bound_Bayesian} is minimized when the two arguments are equal. Plugging the value of $\tilde{\delta}$ into \eqref{eqn:lower_bound_Bayesian} yields the following
 \begin{align}
     \sqrt{a} - \sqrt{b} > \frac{2}{(1-e^{-\epsilon_{0}})(\sqrt{2}-1)}.
 \end{align}
 This completes the proof of Theorem 3.6.

\section{Proof of Theorem 3.7 (Threshold condition for $\mathcal{M}_{\operatorname{Expo.}}(G)$ for $r = 2$)}\label{appendix:proof_exponential}

The privacy analysis of  $\mathcal{M}_{\operatorname{Expo.}}(G)$ is straightforward and follows on similar lines as in \cite{dwork2006calibrating}. 

We next analyze the error probability of $\mathcal{M}_{\operatorname{Expo.}}(G)$. The error probability analysis follows on similar lines as the Bayesian mechanism. 
  \begin{align}
    R^{(k)}  &= \frac{\exp(-\epsilon\times E_{\text{inter}}(G, \bm{\sigma}))}{\exp(-\epsilon\times E_{\text{inter}}(G, \bm{\sigma}^{*}))} \nonumber \\
                          &= \frac{\exp(-\epsilon\times (E_{\text{inter}}(G, \bm{\sigma}^{*}) + m_1 + m_3 - m_2 - m_4))}{\exp(-\epsilon\times E_{\text{inter}}(G, \bm{\sigma}^{*}))} \nonumber \\
                          &= \frac{\exp(-\epsilon\times E_{\text{inter}}(G, \bm{\sigma}^{*}))\times\exp(-\epsilon(m_1 + m_3 - m_2 - m_4))}{\exp(-\epsilon\times E_{\text{inter}}(G, \bm{\sigma}^{*}))} \nonumber \\
                          &= \exp(-\epsilon(m_1 + m_3 - m_2 - m_4)) \nonumber \\
                          & = \exp(-\epsilon (\tilde{m}_{1}^{(k)} - \tilde{m}_{2}^{(k)})),
  \end{align}
    where  $\tilde{m}_{1}^{(k)} \sim \operatorname{Bin}(2k(n-k), p)$, and $\tilde{m}_{2}^{(k)} \sim \operatorname{Bin}(2k(n-k), q)$. Now, we have 
  \begin{align}
         \frac{\Pr(\hat{\bm{\sigma}}_{\text{Expo.}}(G) \neq \bm{\sigma}^{*})}{\Pr(\bm{\sigma}^{*}|G)}  \leq  \sum_{k=1}^{n/2} {n \choose k}^{2}  R^{(k)}.
  \end{align}
Similarly, in order to make the error probability behaves as $o(1)$, we have 
 \begin{align}
     \sqrt{a} - \sqrt{b} & > \sqrt{2} \times \max \left[ \frac{\sqrt{2}}{{\tilde{\delta}}}, \frac{1}{\sqrt{ \epsilon (1-\tilde{\delta})}} \right].
 \end{align}
We pick $\tilde{\delta} = (\sqrt{2}-1) \epsilon$ and this yields 
\begin{align}
       \sqrt{a} - \sqrt{b} > \frac{2}{(\sqrt{2}-1) \epsilon}.
\end{align}
 This completes the proof of Theorem 3.7.

\section{Proof of Theorem 3.8 (Threshold condition for $\mathcal{M}_{\operatorname{RR}}(G)$ for $r = 2$)} \label{proof_for_SDP}


\subsection*{Error probability analysis of SDP recovery algorithm:}

For the ease of exposition, let us consider a graph $G$ with $2n$ vertices. The Lagrangian function is written as follows: 
\begin{align}
    \mathcal{L}(\tilde{\mathbf{A}}, \mathbf{Y}, \mathbf{S}, \mathbf{D}, \lambda) & = \operatorname{tr}(\tilde{\mathbf{A}} \mathbf{Y}) + \operatorname{tr}(\mathbf{S} \mathbf{Y}) - \operatorname{tr}(\mathbf{D} (\mathbf{Y}-\mathbf{I})) - \lambda \operatorname{tr}(\mathbf{J} \mathbf{Y}),  
\end{align}
where $\mathbf{S} \succcurlyeq \mathbf{0}$, $\mathbf{D} = \operatorname{diag}(d_{i})$ and $\lambda \in \mathds{R}$. Then, 
\begin{align}
    \nabla_{\mathbf{Y}} \mathcal{L} = \tilde{\mathbf{A}} + \mathbf{S} - \mathbf{D} - \lambda \mathbf{J} = \mathbf{0}. 
\end{align}
In order to satisfy the first order stationery condition, we have 
\begin{align}
    \mathbf{S}^{*} = \mathbf{D}^{*} - \tilde{\mathbf{A}} + \lambda^{*} \mathbf{J}.  
\end{align}
From the KKT conditions, we have the following: 
\begin{align}
     & \operatorname{tr}(\mathbf{D}, \mathbf{Y}-\mathbf{I}) = 0 \Rightarrow \mathbf{D}^{*} \mathbf{Y}^{*} = \mathbf{D}^{*} \mathbf{I}, \\
     & \lambda^{*} \operatorname{tr}(\mathbf{J} \mathbf{Y}^{*}) = 0, \\
     & \operatorname{tr}(\mathbf{S}^{*} \mathbf{Y}^{*}) = 0 \Rightarrow \mathbf{S} {\bm{\sigma}}^{*} = \mathbf{0}, \label{eqn:null_space_of_S}
\end{align}
where in eqn. \eqref{eqn:null_space_of_S}, ${\bm{\sigma}}^{*}$ is the null space of $\mathbf{S}$. In order to ensure that $\mathbf{Y}^{*}$ is the unique solution, we require that $\lambda_{2}(\mathbf{S}^{*}) > 0$, i.e., the second smallest eigenvalues of $\mathbf{S}$. This comes from the rank-nullity Theorem, i.e., 
\begin{align}
    \text{rank}(\mathbf{S}^{*}) + \text{Null}(\mathbf{S}^{*}) = 2n \Rightarrow \text{rank}(\mathbf{S}^{*}) = 2n -1. 
\end{align}
To this end, we have the following: 
\begin{align}
     \operatorname{tr}(\tilde{\mathbf{A}} \mathbf{Y}) \leq \mathcal{L}(\tilde{\mathbf{A}}, \mathbf{Y}^{*}, \mathbf{S}^{*}, \mathbf{D}^{*}, \lambda^{*}) & =  \operatorname{tr}(\mathbf{S}^{*} - \mathbf{D}^{*} + \tilde{\mathbf{A}} - \lambda^{*} \mathbf{J} Y) +  \operatorname{tr}(\mathbf{D}^{*} \mathbf{I}) \overset{(a)} =  \operatorname{tr}(\mathbf{D}^{*} \mathbf{I}) \nonumber \\
     & = \operatorname{tr}(\mathbf{D}^{*} \mathbf{Y}^{*}) \nonumber \\ 
     & = \operatorname{tr}(\mathbf{S}^{*} + \tilde{\mathbf{A}} - \lambda^{*} \mathbf{J}, \mathbf{Y}^{*}) = \operatorname{tr}(\tilde{\mathbf{A}} \mathbf{Y}^{*}). \label{eqn:lagrangian_function_complete}
\end{align}
Now, our goal is to prove that w.h.p. $\mathbf{S}^{*} \succcurlyeq 0$ with $\lambda_{2}(\mathbf{S}^{*}) > 0$. More specifically, we want to show that 
\begin{align}
    \operatorname{Pr} \left[\inf_{\mathbf{x}: \|\mathbf{x}\| = 1, \mathbf{x} \perp \bm{\sigma}^{*}} \mathbf{x}^{T} \mathbf{S}^{*} \mathbf{x} > 0 \right] \geq 1 - o(1).
\end{align}
Alternatively, 
\begin{align}
    \operatorname{Pr} \left[\inf_{\mathbf{x}: \|\mathbf{x}\| = 1, \mathbf{x} \perp \bm{\sigma}^{*}} \mathbf{x}^{T} \mathbf{S}^{*} \mathbf{x} \leq  0 \right] \leq o(1).
\end{align}

Before we proceed, we note that 
\begin{align}
    \mathds{E}[\tilde{\mathbf{A}}] & = \frac{\tilde{p}- \tilde{q}}{2} \mathbf{Y}^{*} + \frac{\tilde{p}+\tilde{q}}{2} \mathbf{J} -  \tilde{p} \mathbf{I}. 
\end{align}
Now, for any $\mathbf{x}$ such that $\|\mathbf{x}\| = 1, \mathbf{x} \perp \bm{\sigma}^{*}$ (i.e., $\bm{x}^{T} \bm{\sigma}^{*}$ = 0), and $\lambda^{*} \leq (\tilde{p}+\tilde{q})/2$,
\begin{align}
    \mathbf{x}^{T} \mathbf{S}^{*} \mathbf{x} & = \mathbf{x}^{T} \mathbf{D}^{*} \mathbf{x} -  \mathbf{x}^{T} \mathds{E}[\tilde{\mathbf{A}}] \mathbf{x} + \lambda^{*} \mathbf{x}^{T} \mathbf{J} \mathbf{x} - \mathbf{x}^{T} (\tilde{\mathbf{A}} - \mathds{E}[\tilde{\mathbf{A}}]) \mathbf{x} \nonumber \\ 
     & = \mathbf{x}^{T} \mathbf{D}^{*} \mathbf{x} - \frac{\tilde{p}-\tilde{q}}{2} \mathbf{x}^{T} \mathbf{Y}^{*} \mathbf{x} + \bigg(\lambda^{*} - \frac{\tilde{p}+\tilde{q}}{2}\bigg) \mathbf{x}^{T} \mathbf{J} \mathbf{x} + \tilde{p} -  \mathbf{x}^{T} (\tilde{\mathbf{A}} - \mathds{E}[\tilde{\mathbf{A}}]) \mathbf{x}  \nonumber \\
      & \leq  \mathbf{x}^{T} \mathbf{D}^{*} \mathbf{x} - \frac{\tilde{p}-\tilde{q}}{2} \mathbf{x}^{T} \mathbf{Y}^{*} \mathbf{x} + \tilde{p} -   \|\tilde{\mathbf{A}} - \mathds{E}[\tilde{\mathbf{A}}] \| \nonumber \\
       & =  \mathbf{x}^{T} \mathbf{D}^{*} \mathbf{x} - \frac{\tilde{p}-\tilde{q}}{2} \mathbf{x}^{T} \bm{\sigma}^{*} {\bm{\sigma}^{*}}^{T}  \mathbf{x} + \tilde{p} -   \|\tilde{\mathbf{A}} - \mathds{E}[\tilde{\mathbf{A}}] \| \nonumber \\
          & =  \mathbf{x}^{T} \mathbf{D}^{*} \mathbf{x}  + \tilde{p} -   \|\tilde{\mathbf{A}} - \mathds{E}[\tilde{\mathbf{A}}] \| \nonumber \\
          & \leq \mathbf{x}^{T} \mathbf{D}^{*} \mathbf{x} + \tilde{p} = \sum_{i \in [2n]} d_{i}^{*} + \tilde{p}.
\end{align}
Note that from \eqref{eqn:lagrangian_function_complete}, we have $\operatorname{tr}(\mathbf{D}^{*} \mathbf{I}) = \operatorname{tr}(\tilde{\mathbf{A}} \mathbf{Y}^{*})$. Therefore, 
\begin{align}
    d_{i}^{*} = \sum_{j=1}^{2n} \tilde{A}_{i,j} \sigma_{i}^{*} \sigma_{j}^{*}.
\end{align}
Also, note that each $d_{i}^{*}$ is equal in distribution to $X - Y$, where $X \sim \text{Bin}\big(n-1 , a_{n} \frac{\log(n)}{n}\big)$ and $Y \sim \text{Bin} \big(n, b_{n} \frac{\log(n)}{n}\big)$. Applying the union bound, our goal now is to derive conditions on $a_{n}$ and $b_{n}$ such that 
\begin{align}
     \sum_{i = 1}^{2n} \operatorname{Pr}(d_{i}^{*} \leq  0 ) \leq n \times \operatorname{Pr}(Y \geq X) = o(1). 
\end{align}
Similar as before, the above probability will be of order $o(1)$ holds if 
\begin{align}
    & \sqrt{a} - \sqrt{b} - \sqrt{\frac{1}{e^{\epsilon} - 1} } > \sqrt{2} \times \sqrt{\frac{e^{\epsilon} + 1}{e^{\epsilon} - 1}} \nonumber\\
    \Rightarrow & \sqrt{a} - \sqrt{b} > \sqrt{2} \times \sqrt{\frac{e^{\epsilon} + 1}{e^{\epsilon} - 1}} + \frac{1}{\sqrt{e^{\epsilon} - 1}}.
\end{align}
To this end, we conclude that 
\begin{align}
    \operatorname{Pr}(\hat{\bm{\sigma}} \neq \bm{\sigma}) \leq \operatorname{Pr}(\hat{\mathbf{Y}} \neq \mathbf{Y}) = o(1).
\end{align}

We are ready now to analyze the error probability $\operatorname{Pr} (\hat{\bm{\sigma}} \neq \bm{\sigma})$.  Denote $Y \sim \operatorname{Bin}(n, q_{n})$ and  $X \sim \operatorname{Bin}(n, p_{n})$. Using union bound, we have that 
\begin{align}
    \operatorname{Pr} (\hat{\bm{\sigma}} \neq \bm{\sigma}) \leq 2 n \times \operatorname{Pr} (Y \geq X). 
\end{align}
We next expand $\operatorname{Pr} (Y \geq X)$ using law of total probability Theorem. We first define $Z = X + Y$ and $c_{n} = a_{n} + b_{n}$, then we have 
\begin{align}
     \operatorname{Pr}(Y \geq X) & = \sum_{ k = 0}^{2 n } \operatorname{Pr}(Y \geq X | Z = k) \operatorname{Pr}(Z = k) \nonumber \\
    & \leq \sum_{k = 0}^{10 c_{n} \log(n)} \operatorname{Pr}(Y \geq X | Z = k) \operatorname{Pr}(Z = k) + \operatorname{Pr}(Z \geq 10 c_{n} \log(n)) \nonumber \\
        & = \operatorname{Pr}(Y \geq X | Z = 0) \operatorname{Pr}(Z = 0)  + \sum_{k = 1}^{10 c_{n} \log(n)} \operatorname{Pr}(Y \geq X | Z = k) \operatorname{Pr}(Z = k) + \operatorname{Pr}(Z \geq 10 c_{n} \log(n)) \nonumber \\ 
            & = \operatorname{Pr}(Y = X = 0 ) + \sum_{k = 1}^{10 c_{n} \log(n)} \operatorname{Pr}(Y \geq X | Z = k) \operatorname{Pr}(Z = k) + \operatorname{Pr}(Z \geq 10 c_{n} \log(n)) \nonumber \\ 
                & = \operatorname{Pr}(Y = 0) \times \operatorname{Pr}(X = 0)  + \sum_{k = 1}^{10 c_{n} \log(n)} \operatorname{Pr}(Y \geq X | Z = k) \operatorname{Pr}(Z = k) + \operatorname{Pr}(Z \geq 10 c_{n} \log(n)) \nonumber \\
                     & \overset{(a)} \leq n^{-c_{n}} + \sum_{k = 1}^{10 c_{n} \log(n)} \operatorname{Pr}(Y \geq X | Z = k) \operatorname{Pr}(Z = k) +  n^{-10 c_{n}} \nonumber \\ 
        & \leq 2 n^{- c_{n}} + \sum_{k=1}^{10 c_{n} \log(n)} \operatorname{Pr}(Y \geq X | Z = k) \operatorname{Pr}(Z = k),
\end{align}
where in step (a), we have that $\operatorname{Pr}(X = 0) = (1 - p_{n})^{n} \leq e^{ - n p_{n}}  = n^{-a_{n}}$. Similarly, $\operatorname{Pr}(Y = 0) \leq n^{-b_{n}}$. Also, we have  $\operatorname{Pr}(Z \geq 10 c_{n} \log(n)) \leq n^{-10 c_{n}}$ using Bernstein's inequality. We next upper bound $\operatorname{Pr}(Z = k)$ as follows: 
\begin{align}
\operatorname{Pr}(Z = k) & = \sum_{i = 0}^{k} \operatorname{Pr}(Y = i) \times \operatorname{Pr}(X = k -i),   
\end{align}
where, 
\begin{align}
    \operatorname{Pr}(Y = i) & = {n \choose i} \times  p_{n}^{i} \times (1 - p_{n})^{n - i} \nonumber \\ 
    & = \frac{n!}{i! (n-i)!}  \times p_{n}^{i} \times (1 - p_{n})^{n - i} \nonumber \\
    & = \frac{n!}{i! (n-i)!} \times \frac{(b_{n} \log(n))^{i}}{n^{i}} \times  \bigg(1 - \frac{b_{n} \log(n)}{n} \bigg)^{n - i} \nonumber \\ 
    & = \frac{(b_{n} \log(n))^{i}}{i!} \times \frac{n!}{n^{i} \times (n-i)!} \times e^{ - \frac{b_{n} \log(n)}{n} \times (n-i)} \nonumber \\
     & = \frac{(b_{n} \log(n))^{i}}{i!} \times \frac{n!}{n^{i} \times (n-i)!} \times e^{ - b_{n} \log(n) \times (1-i/n)} \nonumber \\
     & = \frac{(b_{n} \log(n))^{i}}{i!} \times \frac{n!}{n^{i} \times (n-i)!} \times n^{ -b_{n}} \times n^{b_{n} \times i/n} \nonumber \\
       & = \frac{(b_{n} \log(n))^{i}}{i!} \times \frac{n \times (n-1) \times \cdots \times (n - i + 1)}{n^{i} } \times n^{ -b_{n}} \times n^{b_{n} \times i/n} \nonumber \\
        & \overset{(a)} \leq  \frac{(b_{n} \log(n))^{i}}{i!} \times 1 \times  n^{ -b_{n}} \times n^{b_{n} \times i/n},
\end{align}
where in step (a) follows that $\prod_{j = 0}^{i-1} \big(1 - \frac{j}{n}\big) \leq 1 $. Similarly, we upper bound $\operatorname{Pr}(X = k - i)$ as 
\begin{align}
    \operatorname{Pr}(X = k - i) \leq \frac{(a_{n} \log(n))^{k-i}}{(k-i)!} \times n^{ -a_{n}} \times n^{a_{n} \times (k-i)/n}.
\end{align}
To this end, we get 
\begin{align}
    \operatorname{Pr}(Z = k) & \overset{(a)} \leq  n^{- c_{n}} \times n^{a_{n} k/n}  \sum_{i = 0}^{k} n^{- (a_{n} - b_{n}) \times i/n} \times \frac{(b_{n} \log(n))^{i}}{i!} \times \frac{(a_{n} \log(n))^{k-i}}{(k-i)!} \nonumber \\
     & \leq  n^{- c_{n}} \times n^{a_{n} k/n}  \sum_{i = 0}^{k} \frac{(b_{n} \log(n))^{i}}{i!} \times \frac{(a_{n} \log(n))^{k-i}}{(k-i)!} \nonumber \\
      & \leq  n^{- c_{n}} \times n^{a_{n} k/n} \times \frac{1}{k!} \sum_{i = 0}^{k} k! \times \frac{(b_{n} \log(n))^{i}}{i!} \times \frac{(a_{n} \log(n))^{k-i}}{(k-i)!} \nonumber \\
      & \leq  n^{- c_{n}} \times n^{a_{n} k/n} \times \frac{1}{k!} \sum_{i = 0}^{k} {k \choose i} \times (b_{n} \log(n))^{i} \times (a_{n} \log(n))^{k-i} \nonumber \\
            & =  n^{- c_{n}} \times n^{a_{n} k/n} \times \frac{(c_{n} \log(n))^{k}}{k!},
\end{align}
where in step (a),  we have that $a_{n} \geq b_{n}$ and $n^{- (a_{n} - b_{n}) \times i/n} \leq 1$. Similarly, we upper bound $\operatorname{Pr}(Y \geq X | Z = k)$ as follows: 
\begin{align}
\operatorname{Pr}(Y \geq X | Z = k) & = \sum_{i = \frac{k}{2}}^{k} \operatorname{Pr}(Y = i) \times \operatorname{Pr}(X = k -i) \nonumber \\ 
& \leq  n^{- c_{n}} \times n^{a_{n} k/n} \times \frac{1}{k!} \sum_{i = \frac{k}{2}}^{k} {k \choose i} \times (b_{n} \log(n))^{i} \times (a_{n} \log(n))^{k-i} \nonumber \\
& =  n^{- c_{n}} \times n^{a_{n} k/n} \times \frac{(c_{n} \log(n))^{k}}{k!} \sum_{i = \frac{k}{2}}^{k} {k \choose i} \times \eta_{n}^{i} \times (1 - \eta_{n})^{k-i},
\end{align}
where $\eta_{n} = \frac{b_{n}}{a_{n} + b_{n}} < 1/2 $. For a fixed $k$ where $k \leq 10 c_{n} \log(n)$, we have 
\begin{align}
      \operatorname{Pr}(Z = k)  \times \operatorname{Pr}(Y \geq X | Z = k) & \leq n^{- 2 c_{n}} \times n^{2 a_{n} k/n} \times \frac{(c_{n} \log(n))^{2k}}{(k!)^{2}} \times \operatorname{Pr}(\operatorname{Bin}(k, \eta_{n}) \geq \frac{k}{2}) \nonumber \\ 
      & \overset{(a)} \leq  \frac{1 - \eta_{n}}{1 - 2 \eta_{n}} \times  n^{- 2 c_{n}} \times   \frac{(c_{n} n^{a_{n}/n} \log(n))^{2k}}{(k!)^{2}} \times \operatorname{Pr}(\operatorname{Bin}(k, \eta_{n}) = \frac{k}{2}) \nonumber \\
      & =  \frac{1 - \eta_{n}}{1 - 2 \eta_{n}} \times  n^{- 2 c_{n}} \times   \frac{(c_{n} n^{a_{n}/n} \log(n))^{2k}}{(k!)^{2}} \times \frac{k!}{\big(\frac{k}{2}!\big)^{2}} \times \theta_{n}^{k}  \nonumber\\
      & =  \frac{1 - \eta_{n}}{1 - 2 \eta_{n}} \times  n^{- 2 c_{n}} \times   \frac{(c_{n} n^{a_{n}/n} \log(n))^{2k}}{k!} \times \frac{1}{\big(\frac{k}{2}!\big)^{2}} \times \theta_{n}^{k}  \nonumber\\
            & =  \frac{1 - \eta_{n}}{1 - 2 \eta_{n}} \times  n^{- 2 c_{n}} \times   \frac{(c_{n}\log(n))^{k}}{k!} \times (c_{n} n^{2a_{n}/n} \log(n))^{k} \times \frac{1}{\big(\frac{k}{2}!\big)^{2}} \times \theta_{n}^{k}  \nonumber\\
             & \overset{(b)} \leq  \frac{1}{\sqrt{2\pi}} \times  \frac{1 - \eta_{n}}{1 - 2 \eta_{n}} \times  n^{- 2 c_{n}} \times   \frac{(c_{n} e \log(n))^{k}}{k^{k+1/2}} \times (c_{n} n^{2a_{n}/n} \log(n))^{k} \times \frac{1}{\big(\frac{k}{2}!\big)^{2}} \times \theta_{n}^{k}  \nonumber\\
                &  \overset{(c)} \leq  \frac{1}{\sqrt{2\pi}} \times  \frac{1 - \eta_{n}}{1 - 2 \eta_{n}} \times  n^{- 2 c_{n}} \times n^{c_{n}} \times (c_{n} n^{2a_{n}/n} \log(n))^{k} \times \frac{1}{\big(\frac{k}{2}!\big)^{2}} \times \frac{\theta_{n}^{k}}{\sqrt{k}}  \nonumber\\
                & = \frac{1}{\sqrt{2\pi}} \times  \frac{1 - \eta_{n}}{1 - 2 \eta_{n}} \times  n^{-  c_{n}} \times  \frac{(\theta_{n} c_{n} n^{2a_{n}/n} \log(n))^{k}}{\big(\frac{k}{2}!\big)^{2} \sqrt{k}} \nonumber \\ 
      & \overset{(d)} \leq \frac{2}{(2 \pi^{2})^{3/2}} \times \frac{1 - \eta_{n}}{1 - 2 \eta_{n}} \times  n^{-  c_{n}}  \times  \frac{(c_{n}  n^{2a_{n}/n} e \log(n))^{k}}{k^{k+1}} \times  \frac{(2 \theta_{n})^{k}}{\sqrt{k}} \nonumber \\ 
      & =  \frac{1}{\sqrt{2} \pi^{3}} \times \frac{1 - \eta_{n}}{1 - 2 \eta_{n}} \times  n^{- c_{n}}  \times  \frac{(2 \theta_{n} c_{n} n^{2a_{n}/n} e \log(n))^{k}}{k^{k+3/2}} \nonumber \\ 
       & =  \frac{1}{\sqrt{2} \pi^{3}} \times n^{2a_{n}k/n}  \times \frac{1 - \eta_{n}}{1 - 2 \eta_{n}} \times  n^{- c_{n}}  \times  \frac{(2 \theta_{n} c_{n} e \log(n))^{k}}{k^{k+3/2}} \nonumber \\ 
         & \leq    \frac{1}{\sqrt{2} \pi^{3}} \times n^{20 a_{n} c_{n} \log (n)/n}  \times \frac{1 - \eta_{n}}{1 - 2 \eta_{n}} \times  n^{- c_{n}}  \times  \frac{(2 \theta_{n} c_{n} e \log(n))^{k}}{k^{k+3/2}},
\end{align}
where  $\theta_{n} = \sqrt{\eta_{n} (1 -\eta_{n})} = \frac{\sqrt{a_{n} b_{n}}}{a_{n} + b_{n}}$. In step (a), we used Lemma \ref{lemma:binomoal_probability},  while in steps (c) and (d) we used the following lower bound on $k!$, i.e.,  $k! \geq \sqrt{2 \pi} k^{k+1/2} e^{-k}$. In step (b), it can be readily shown that 
\begin{align}
    \frac{(c_{n} e \log(n))^{k}}{k^{k}} \leq n^{c_{n}}. \nonumber 
\end{align}
We next upper bound $\frac{(2 \theta_{n} c_{n} e \log(n))^{k}}{k^{k+3/2}} \triangleq e^{f(k)}$. By taking $\log(\cdot)$ for this term, we have 
\begin{align}
    f(k) = k \times \log(2 \theta_{n} c_{n}  e \log(n)) - (k+3/2) \times \log(k).
\end{align}
We next take the derivative of $f(k)$, 
\begin{align}
     & f'(k) = \log({2 \theta_{n}} c_{n} e \log(n)) - (1 + 3/2k) -  \log(k) = 0. \\
     \Rightarrow & \log(2 \theta_{n} c_{n} e \log(n)) - (1 + 3/2k) - \log(k) = 0, \nonumber \\
   \Rightarrow & \log( e^{-1} 2 \theta_{n} c_{n}  e \log(n)) = 3/2k + \log(k), \nonumber \\ 
     \Rightarrow & e^{-1} 2 \theta_{n} c_{n} e \log(n) = k \times e^{3/2k}, \nonumber \\
   \Rightarrow & 2 \theta_{n} c_{n}   \log(n) =  k \times e^{3/2k}.
 \end{align}
 \noindent Therefore, the optimal solution can be written as 
 \begin{align}
     k^{*} = {2 \theta_{n}} c_{n}  \log(n) e^{-3/2k^{*}}.
 \end{align}
 To this end, we have 
 \begin{align}
     e^{f(k^{*})} & = \frac{n^{2 \theta_{n} c_{n}  e^{-3/{2k^{*}}}}}{2 \theta_{n} c_{n} \log(n) e^{-3/{2k^{*}}}} \nonumber \\
     & \leq \frac{n^{2 \theta_{n} c_{n} e^{-3/{20c_{n} \log(n)}} }}{2 \theta_{n} c_{n} \log(n) } \times e^{3/{2k^{*}}} \nonumber \\ 
     & \leq \frac{n^{2 \theta_{n} c_{n}  }}{2 \theta_{n} c_{n}  \log(n) } \times e^{3/2}.
 \end{align}
The second term will be upper bounded by 
\begin{align}
   n^{- c_{n}} \times  10 c_{n} \log(n) \times \frac{n^{2 \theta_{n} c_{n}  }}{2 \theta_{n} c_{n} \log(n) } \times e^{3/2} & = \frac{5e^{3/2}}{\theta_{n}}  \times n^{-c_{n}  + 2 \theta_{n} c_{n}  } \nonumber \\
   & = \frac{5e^{3/2}}{\theta_{n}} \times n^{-2\left[c_{n}/2  - \theta_{n} c_{n} \right]}.
\end{align}
In order to achieve exact recovery, we require that $a_{n}$ and $b_{n}$: 
\begin{align}
    \frac{c_{n}}{2} - \theta_{n} c_{n}  > 1 \Rightarrow \frac{a_{n} + b_{n}}{2} - {\sqrt{a_{n} b_{n}}} > 1.
\end{align}
Plugging the values of $\theta_{n}$, $c_{n}$, we get 
\begin{align} 
    \operatorname{Pr}(\hat{\bm{\sigma}} \neq  \bm{\sigma} )  & \leq 2 n \times \left[ 2 n^{- (a_{n} + b_{n})} +    \zeta_{n} \times n^{-2\left[(a_{n}+b_{n})/2  - \sqrt{a_{n} b_{n}}  \right]} \right], \label{eqn:upperbound_FER} 
\end{align}
where $\zeta_{n} = \frac{5 \times e^{3/2}}{\sqrt{2} \pi^{3}} \times  e^{20 a_{n} (a_{n} + b_{n}) \log^{2}(n)/n} \times  \frac{a_{n}}{a_{n} - b_{n}} \times \frac{a_{n} + b_{n}}{\sqrt{a_{n} b_{n}}}$. We upper bound the term $e^{20 a_{n} (a_{n} + b_{n}) \log^{2}(n)/n} $ such that
\begin{align}
    & e^{20 a_{n} (a_{n} + b_{n}) \log^{2}(n)/n} \leq     e^{40 a_{n}^{2} \log^{2}(n)/n}  \leq n^{\alpha} = e^{\alpha \log(n)} \nonumber \\ 
    \Rightarrow &  a_{n}^{2}  \leq \frac{\alpha}{40}\times \frac{n}{\log(n)},
\end{align}
where $\alpha < a_{n}+b_{n} - 2 \sqrt{a_{n} b_{n}} - 1$. Plugging the value of $a_{n}$, we get 
\begin{align}
    e^{\epsilon} > \frac{n}{\log(n)} \times \frac{1}{\sqrt{\frac{\alpha n}{40 \log(n)}} - a}. 
\end{align}
Based on the previous condition, we have two cases:  When (1) $a > \sqrt{\frac{\alpha n}{40 \log(n)}} $, in this case, a sufficient condition will be $\epsilon > \log(n) - \log(\log(n))$, and (2) $a < \sqrt{\frac{\alpha n}{40 \log(n)}} $, in this case we require that $\epsilon > \log(n) - \log(\log(n)) - \log \bigg( \sqrt{\frac{\alpha n}{40 \log(n)}} -a \bigg)$.

\subsection*{Derivation of Recovery Threshold Condition:}

The randomized response mechanism $\mathcal{M}_{\text{RR}}(G)$ can be expressed as 
\begin{align}
    \tilde{A}_{i,j} = ({A}_{i,j} + N_{i, j}) \operatorname{mod} 2, \forall i \neq j,
\end{align}
where $N_{i, j} \sim \operatorname{Bern}(1 - \mu )$, $\mu = \frac{1}{e^{\epsilon} + 1}$, $N_{i,j} = N_{j,i}$, and the operation $\operatorname{mod} 2$ ensures that the released output is bounded, i.e., $\tilde{A}_{i,j}  \in \{0, 1\}$. 
If nodes $i$ and $j$ belong to the same community, we have the following:
\begin{align}
   \tilde{p} & =  \operatorname{Pr}(\tilde{A}_{i,j} = 1)  = \operatorname{Pr}(A_{i,j} = 1) \times \operatorname{Pr}(N_{i,j} = 0 | A_{i, j} = 1) + \operatorname{Pr}(A_{i,j} = 0) \times \operatorname{Pr}(N_{i,j} = 0 | A_{i, j} = 1) \nonumber \\ 
    & \overset{(a)} = \operatorname{Pr}(A_{i,j} = 1) \times \operatorname{Pr}(N_{i,j} = 0) + \operatorname{Pr}(A_{i,j} = 0) \times \operatorname{Pr}(A_{i,j} = 1)  \nonumber \\ 
    & = p \times (1 - \mu) + (1 - p) \times \mu \triangleq p \circledast \mu, \label{eqn:perturbed_p}
\end{align}
where in step (a), the perturbation mechanism is independent of the $A_{i,j}$'s. Similarly, if nodes $i$ and $j$ belong to different communities, we have 
\begin{align}
    \tilde{q} & = \operatorname{Pr}(\tilde{A}_{i,j} = 1) 
     =  q \circledast \mu. \label{eqn:perturbed_q}
\end{align}
Plugging the expression of $\mu$, $p$ and $q$ into the previous equations \eqref{eqn:perturbed_p} and \eqref{eqn:perturbed_q}, we get
\begin{align}
\tilde{p} 
& = \left[\frac{n}{(e^{\epsilon} + 1) \times \log(n)} + \frac{e^{\epsilon} -1}{e^{\epsilon} + 1} \times  a \right] \times \frac{\log (n)}{n} \triangleq a_{n} \times \frac{\log (n)}{n}.
\end{align}
Similarly,
\begin{align}
    \tilde{q} &= \left[\frac{n}{(e^{\epsilon} + 1) \times \log(n)} + \frac{e^{\epsilon} -1}{e^{\epsilon} + 1} \times b \right] \times \frac{\log (n)}{n} \triangleq b_{n} \times \frac{\log (n)}{n}.
\end{align}

We next derive a necessary threshold condition for randomized response mechanism. From eqn. \eqref{eqn:upperbound_FER}, in order to ensure exact recovery we require that 
\begin{align}
    \frac{a_{n} + b_{n}}{2} - \sqrt{a_{n} b_{n}} > 1. \label{eqn:basic_threshold_condition}
\end{align}
Plugging the expressions for $a_{n}$ and $b_{n}$ into \eqref{eqn:basic_threshold_condition}, we get the following: 
\begin{align}
    & \frac{1}{2} \left[\frac{2}{e^{\epsilon} + 1} + \frac{e^{\epsilon} - 1}{e^{\epsilon} + 1} \times  (a + b) \right] - \sqrt{\frac{1}{e^{\epsilon} + 1} + \frac{e^{\epsilon} - 1}{e^{\epsilon} + 1} \times a } \times \sqrt{\frac{1}{e^{\epsilon} + 1} + \frac{e^{\epsilon} - 1}{e^{\epsilon} + 1} \times b} > 1 \nonumber \\ 
     \Rightarrow & \frac{1}{e^{\epsilon} + 1} + \frac{e^{\epsilon} - 1}{e^{\epsilon} + 1} \times \frac{a+b}{2} - \frac{1}{e^{\epsilon} + 1}  \times \sqrt{1 + (e^{\epsilon} - 1) \times a} \times  \sqrt{1 + (e^{\epsilon} - 1) \times b} > 1 \nonumber \\ 
     \Rightarrow & \frac{e^{\epsilon} - 1}{e^{\epsilon} + 1} \times \frac{a + b}{2} - \frac{1}{e^{\epsilon} + 1}  \times \sqrt{1 + (e^{\epsilon} - 1) \times a} \times  \sqrt{1 + (e^{\epsilon} - 1) \times b} > \frac{e^{\epsilon}}{e^{\epsilon} + 1} \nonumber \\ 
         \Rightarrow & \frac{a + b}{2} - \frac{1}{e^{\epsilon} - 1}  \times \sqrt{1 + (e^{\epsilon} - 1) \times a} \times  \sqrt{1 + (e^{\epsilon} - 1) \times b} > \frac{e^{\epsilon}}{e^{\epsilon} - 1} \nonumber \\ 
    \Rightarrow & \frac{a+b}{2} - \sqrt{\bigg( \frac{1}{e^{\epsilon} - 1} + a \bigg) \bigg( \frac{1}{e^{\epsilon} - 1} + b\bigg) } > \frac{e^{\epsilon}}{e^{\epsilon} - 1}. 
\end{align}
We can further simplify the above equation as follows:
\begin{align}
    & a + b - 2  \sqrt{\bigg( \frac{1}{e^{\epsilon} - 1} + a \bigg) \bigg( \frac{1}{e^{\epsilon} - 1} + b\bigg) } > \frac{2 e^{\epsilon}}{e^{\epsilon} - 1} \nonumber \\ 
    \Rightarrow & - \frac{2}{e^{\epsilon} - 1} + \frac{1}{e^{\epsilon} - 1} + a + \frac{1}{e^{\epsilon} - 1} + b  - 2  \sqrt{\bigg( \frac{1}{e^{\epsilon} - 1} + a \bigg) \bigg( \frac{1}{e^{\epsilon} - 1} + b\bigg) }  > \frac{2 e^{\epsilon}}{e^{\epsilon} - 1} \nonumber \\
    \Rightarrow & \bigg(\sqrt{\frac{1}{e^{\epsilon} - 1} + a} - \sqrt{\frac{1}{e^{\epsilon} - 1} + b}  \bigg)^{2} > \frac{2 e^{\epsilon}}{e^{\epsilon} - 1}.
\end{align}
To this end, we get 
\begin{align}
    \sqrt{\frac{1}{e^{\epsilon} - 1} + a} - \sqrt{\frac{1}{e^{\epsilon} - 1} + b} > \sqrt{2} \times \sqrt{\frac{e^{\epsilon} + 1}{e^{\epsilon} - 1}}.
\end{align}
A more stringent condition is 
\begin{align}
    & \sqrt{a} - \sqrt{b} - \sqrt{\frac{1}{e^{\epsilon} - 1} } > \sqrt{2} \times \sqrt{\frac{e^{\epsilon} + 1}{e^{\epsilon} - 1}} \nonumber\\
    \Rightarrow & \sqrt{a} - \sqrt{b} > \sqrt{2} \times \sqrt{\frac{e^{\epsilon} + 1}{e^{\epsilon} - 1}} + \frac{1}{\sqrt{e^{\epsilon} - 1}}.
\end{align}
This completes the proof of Theorem 3.8. Note that the threshold condition matches the non-private case when $\epsilon = \infty$.

\section*{Auxiliary Results:}

\begin{lemma} \label{lemma:binomoal_probability} Suppose $X \sim \operatorname{Bin}(k, p) $, then for $ p < 1/2$, we have 
\begin{align}
    \operatorname{Pr}(X \geq \frac{k}{2}) \leq \frac{1 - p }{1 - 2 p}  \times  \operatorname{Pr}(X = \frac{k}{2}).
\end{align}
\end{lemma}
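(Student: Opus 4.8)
The plan is to control the upper tail $\operatorname{Pr}(X \ge k/2) = \sum_{i \ge k/2} \binom{k}{i} p^{i} (1-p)^{k-i}$ by showing that, once the index $i$ passes $k/2$, the individual point masses decay at least geometrically, and then to dominate the entire tail by a geometric series anchored at the single term $\operatorname{Pr}(X = k/2)$.

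First I would examine the ratio of consecutive point masses,
\[
\frac{\operatorname{Pr}(X = i+1)}{\operatorname{Pr}(X = i)} = \frac{\binom{k}{i+1}}{\binom{k}{i}}\cdot \frac{p}{1-p} = \frac{k-i}{i+1}\cdot \frac{p}{1-p}.
\]
The key observation is that for every integer $i \ge k/2$ one has $k-i \le i < i+1$, so the combinatorial factor $\frac{k-i}{i+1}$ is strictly less than $1$. Consequently, writing $\rho \triangleq \frac{p}{1-p}$ (which satisfies $\rho < 1$ precisely because $p < 1/2$), the displayed ratio is bounded above by $\rho$ for all $i \ge k/2$. Iterating this bound yields $\operatorname{Pr}(X = k/2 + j) \le \rho^{j}\, \operatorname{Pr}(X = k/2)$ for each $j \ge 0$.

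I would then sum over the tail and relax the finite geometric sum to an infinite one:
\[
\operatorname{Pr}(X \ge k/2) = \sum_{j \ge 0} \operatorname{Pr}(X = k/2 + j) \le \operatorname{Pr}(X = k/2) \sum_{j=0}^{\infty} \rho^{j} = \frac{\operatorname{Pr}(X = k/2)}{1-\rho}.
\]
Finally, substituting $1-\rho = \frac{1-2p}{1-p}$ produces the claimed constant $\frac{1-p}{1-2p}$, which completes the argument.

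The computation is elementary, so there is no genuine obstacle; the only point requiring care is the implicit assumption that $k$ is even, so that $k/2$ is an integer summation index — and indeed this is the regime in which the lemma is invoked, since the surrounding analysis evaluates terms like $\binom{k}{k/2}$ and $(k/2)!$. Were $k$ odd, one would simply replace $k/2$ by $\lceil k/2 \rceil$ throughout: the inequality $k-i < i+1$ still holds on the relevant summation range, so the geometric domination and hence the final bound remain unchanged.
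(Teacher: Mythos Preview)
Your proposal is correct and follows essentially the same route as the paper: bound the ratio of successive binomial point masses past $k/2$ by $\rho=p/(1-p)<1$, iterate to dominate each tail term by $\rho^{j}\Pr(X=k/2)$, and sum the resulting geometric series to obtain $\frac{1}{1-\rho}=\frac{1-p}{1-2p}$. The only cosmetic difference is that the paper bounds $\Pr(X=m+i)/\Pr(X=m)$ in one shot rather than via consecutive ratios, and it does not comment on the parity of $k$ as you do.
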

\begin{proof}
  Our goal is to upper bound the following ratio:
\begin{align}
    \frac{\operatorname{Pr}(X \geq \frac{k}{2})}{\operatorname{Pr}(X = \frac{k}{2})} \leq \frac{\operatorname{Pr}(X = \frac{k}{2}) + \operatorname{Pr}(X = \frac{k}{2} + 1) + \cdots + \operatorname{Pr}(X = k)}{\operatorname{Pr}(X = \frac{k}{2})}
\end{align}
For any $m$ and $i$, we have the following: 
\begin{align}
    \frac{\operatorname{Pr}(X = m + i)}{\operatorname{Pr}(X = m)} & = \frac{{k \choose m+i } p^{m+i} (1 - p)^{k-m-i}}{{k \choose m } p^{m} (1 - p)^{k-m}} \nonumber\\
    & = \frac{{k \choose m+i } }{{k \choose m } } \times \bigg(\frac{p}{1-p}\bigg)^{i} \nonumber \\ 
    & = \frac{m!}{(m+i)!} \times \frac{(k-m)!}{(k-m-i)!} \times \bigg(\frac{p}{1-p}\bigg)^{i} \nonumber \\ 
    & \leq \bigg(\frac{k-m}{m+1} \bigg)^{i} \delta^{i},
\end{align}
where $ \delta = \frac{p}{1-p} < 1, \forall p < 1/2$. For $m = \frac{k}{2}$, we have
\begin{align}
    \frac{\operatorname{Pr}(X = \frac{k}{2} + i)}{\operatorname{Pr}(X = \frac{k}{2})} \leq \bigg( \frac{k/2}{k/2+1} \bigg)^{i} \delta^{i} \leq \delta^{i}. 
\end{align}
To this end, 
\begin{align}
     \frac{\operatorname{Pr}(X \geq \frac{k}{2})}{\operatorname{Pr}(X = \frac{k}{2})} & \leq \sum_{i=0}^{k/2} \delta^{i} \leq \sum_{i=0}^{\infty} \delta^{i} \nonumber \\ 
     & = \frac{1}{1 - \delta} = \frac{1}{1 - \frac{p}{1-p}} = \frac{1 - p}{1 - 2p}.
\end{align}
\end{proof}

\begin{lemma} \label{lemma:bernstein's inequality} Consider a random variable $Z$ as a sum of $2 n$ independent random variables, i.e., $Z = \sum_{i = 1}^{n} (X_{i} + Y_{i} )$, where $X_{i} \sim \operatorname{Bern}(p_{n})$, $Y_{i} \sim \operatorname{Bern}(q_{n})$ and $\mu_{Z} = c_{n} \log(n)$. Then, we have 
\begin{align}
    \operatorname{Pr} (Z \geq 10 c_{n} \log(n)) \leq n^{-10 c_{n}}.
\end{align}
\end{lemma}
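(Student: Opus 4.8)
The plan is to treat this as a standard upper-tail Chernoff bound, since $Z$ is a sum of $2n$ independent Bernoulli random variables with mean $\mathds{E}[Z] = n(p_n+q_n) = c_n\log(n) \triangleq \mu_Z$. First I would bound the moment generating function: for any $t>0$, using $1+x \leq e^x$ on each factor $\mathds{E}[e^{tX_i}] = 1 + p_n(e^t-1)$ and $\mathds{E}[e^{tY_i}] = 1 + q_n(e^t-1)$, independence of the $X_i$ and $Y_i$ gives
\[
\mathds{E}[e^{tZ}] \leq \exp\big(\mu_Z (e^t - 1)\big).
\]

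Next I would apply Markov's inequality to $e^{tZ}$ at the threshold $10 c_n\log(n) = 10\mu_Z$, obtaining
\[
\operatorname{Pr}(Z \geq 10\mu_Z) \leq e^{-10 t\mu_Z}\,\mathds{E}[e^{tZ}] \leq \exp\big(\mu_Z(e^t - 1 - 10 t)\big).
\]
I would then optimize the exponent over $t>0$: setting the derivative $e^t - 10$ to zero gives $t = \log(10)$, at which $e^t - 1 - 10t = 9 - 10\log(10)$. Substituting $\mu_Z = c_n\log(n)$ yields
\[
\operatorname{Pr}\big(Z \geq 10 c_n\log(n)\big) \leq \exp\big((9 - 10\log(10))\,c_n\log(n)\big) = n^{-(10\log(10)-9)\,c_n}.
\]

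Finally I would observe that, since $\log$ denotes the natural logarithm throughout the paper, $10\log(10) - 9 \approx 14.03 \geq 10$, so $n^{-(10\log(10)-9)c_n} \leq n^{-10 c_n}$, which is exactly the claimed bound. There is no deep obstacle here---the only point needing care is the arithmetic confirming that the optimized exponent $10\log(10)-9$ exceeds the target constant $10$, which the Chernoff bound satisfies with a comfortable margin. Indeed one need not even carry out the optimization: directly substituting the single choice $t = \log(10)$ reproduces the same estimate in one line, which is why I expect this auxiliary lemma to be routine relative to the main stability and SDP concentration arguments.
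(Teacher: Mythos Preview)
Your argument is correct. The paper, however, proves the lemma via Bernstein's inequality rather than the Chernoff/MGF bound you use: it bounds the variance by $\sum_i \mathds{E}[(X_i-\mu_{X_i})^2]+\sum_i \mathds{E}[(Y_i-\mu_{Y_i})^2]\le n(p_n+q_n)=c_n\log n$ and the increments by $1$, writes $\Pr(Z\ge 10c_n\log n)=\Pr(Z-\mu_Z\ge 9c_n\log n)$, and plugs directly into Bernstein to obtain
\[
\exp\!\left(-\frac{\tfrac12(9c_n\log n)^2}{c_n\log n+\tfrac13\cdot 9c_n\log n}\right)\le \exp(-10c_n\log n)=n^{-10c_n}.
\]
Your route is arguably cleaner: you avoid the variance estimate entirely and exploit only the Bernoulli structure through $\mathds{E}[e^{tX_i}]\le e^{p_n(e^t-1)}$, yielding the sharper exponent $10\log 10-9\approx 14.03$ in place of the paper's $81/8\approx 10.125$. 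Both approaches are standard and either suffices for the auxiliary role this lemma plays; the paper's choice is consistent with the lemma's name, while yours is slightly more self-contained.
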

\begin{proof}
   By direct application of Bernstein's inequality, it is straight forward to show that $|X_{i}| \leq 1$, $|Y_{i}| \leq 1$ and $\sum_{i=1}^{n} \mathds{E} \left[(X_{i} - \mu_{X_{i}})^{2} \right] + \sum_{i=1}^{n} \mathds{E} \left[(Y_{i} - \mu_{Y_{i}})^{2} \right] \leq n (p_{n} + q_{n}) =  c_{n} \log(n)$. Then, we have the following:
\begin{align}
\operatorname{Pr}(Z \geq 10  c_{n} \log(n)) & = \operatorname{Pr}(Z -  c_{n} \log(n) \geq 9 c_{n} \log(n)) \nonumber \\ 
    & = \operatorname{Pr}(Z - \mu_{Z}  \geq 9 c_{n} \log(n)) \nonumber \\ 
    & \leq  \exp{\left[- \frac{\frac{1}{2} \times 81 \times  \times c_{n}^{2} \log^{2}(n)}{  c_{n} \log(n) + \frac{1}{3} \times 9 c_{n} \log(n)  } \right]} \nonumber \\
    & \leq \exp{\left[- \frac{40 c_{n}^{2}  \log^{2}(n)}{  c_{n} \log(n) + 3 c_{n} \log(n)  } \right]} \nonumber \\ 
    & = \exp{(-  10  c_{n} \log(n))} = n^{- 10  c_{n}}.
\end{align}
\end{proof}

\begin{lemma}\label{lemma:chernoff_hoeffding_bound} (Chernoeff-Hoeffding bound) Let $X = \sum_{i \in [n]} X_{i} $, where $X_{i}$'s are indentically and independently distributed over the support $\{0, 1\}$. Then, for any $\gamma \in (0, 1]$, we have 
\begin{align}
    \operatorname{Pr} \left[X \notin \left[ (1-\gamma) \mathds{E}(X), (1+\gamma) \mathds{E}(X) \right] \right] \leq 2 e^{- \frac{\gamma^{2} \mathds{E}(X)}{3}}.
\end{align}
\end{lemma}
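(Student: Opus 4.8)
The plan is to prove this as the standard two-sided multiplicative Chernoff bound via the exponential-moment (Chernoff) method, handling the upper and lower deviations separately and combining them with a union bound. Write $\mu = \mathds{E}(X)$, and note that since the $X_i \in \{0,1\}$ are i.i.d.\ Bernoulli (say with parameter $p$), they are independent and $\mu = np$. The two events whose probabilities I must control are $\{X \geq (1+\gamma)\mu\}$ and $\{X \leq (1-\gamma)\mu\}$, whose union is exactly the complement of the interval in the statement.

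First I would bound the upper tail. For any $t>0$, applying Markov's inequality to $e^{tX}$ gives $\operatorname{Pr}[X \geq (1+\gamma)\mu] \leq e^{-t(1+\gamma)\mu}\,\mathds{E}[e^{tX}]$. By independence the moment generating function factorizes, and the termwise inequality $1 + p(e^t-1) \leq e^{p(e^t-1)}$ yields $\mathds{E}[e^{tX}] \leq e^{\mu(e^t-1)}$. Choosing the optimal $t = \log(1+\gamma)$ collapses the bound to $\bigl(e^{\gamma}/(1+\gamma)^{1+\gamma}\bigr)^{\mu}$. The analogous computation with $t<0$, optimized at $t=\log(1-\gamma)$, bounds the lower tail $\operatorname{Pr}[X \leq (1-\gamma)\mu]$ by $\bigl(e^{-\gamma}/(1-\gamma)^{1-\gamma}\bigr)^{\mu}$. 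It then suffices to establish the elementary scalar inequalities
\[
\frac{e^{\gamma}}{(1+\gamma)^{1+\gamma}} \leq e^{-\gamma^2/3}
\quad\text{and}\quad
\frac{e^{-\gamma}}{(1-\gamma)^{1-\gamma}} \leq e^{-\gamma^2/2} \leq e^{-\gamma^2/3}
\]
for all $\gamma \in (0,1]$. Equivalently, I would show $g_+(\gamma) = \gamma - (1+\gamma)\log(1+\gamma) + \gamma^2/3 \leq 0$ and $g_-(\gamma) = -\gamma - (1-\gamma)\log(1-\gamma) + \gamma^2/2 \leq 0$ on $(0,1]$, by substituting the series $\log(1\pm\gamma) = \pm\gamma - \gamma^2/2 \pm \gamma^3/3 - \cdots$ and checking the sign of the resulting power series term by term. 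Adding the two tail bounds produces the factor of $2$, giving $\operatorname{Pr}[X \notin [(1-\gamma)\mu,(1+\gamma)\mu]] \leq 2e^{-\gamma^2\mu/3}$, as claimed.

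The only genuinely delicate step is the scalar bound $e^{\gamma}/(1+\gamma)^{1+\gamma} \leq e^{-\gamma^2/3}$ on the full range $(0,1]$: the power-series remainder must be controlled near $\gamma = 1$, where the constant $1/3$ (rather than the looser $1/2$ that works for small $\gamma$ and for the lower tail) is exactly the value that keeps $g_+$ nonpositive. Everything else---the factorization of the MGF, the scalar inequality $1+x \leq e^{x}$, and the optimization over $t$---is routine, so I would relegate those to a single line each and spend the effort verifying the sign of $g_+$.
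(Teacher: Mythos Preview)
Your proposal is the standard textbook proof of the multiplicative Chernoff bound and is correct. The paper, however, does not prove this lemma at all: it is listed among the ``Auxiliary Results'' as a known inequality and simply stated without proof. So there is nothing to compare against; your argument supplies exactly the kind of proof one would cite from a standard reference.
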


\begin{lemma} \label{lemma:optimized_tail_bound} (Tail bounds on the difference of two Binomial R.V.s \cite{hajek2016achieving_extensions}) Let $X$ and $R$ be independent R.V.s with $X \sim \operatorname{Bin} (m_{1}, \frac{a \log(n)}{n})$ and $R \sim \operatorname{Bin}(m_{2}, \frac{b \log(n)}{n})$, where $m_{1}, m_{2} \in \mathds{N}$, such that $f_{n,\epsilon} \leq (m_{1} a -m_{2} b) \frac{\log(n)}{n}$, then 
\begin{align}
    \Pr(X - R \leq f_{n,\epsilon}) \leq n^{-g(m_{1}/n, m_{2}/n, a, b, f_{n,\epsilon}/\log(n))},
\end{align}
where,
\begin{align}
g(m_{1}/n, m_{2}/n, a, b, f_{n,\epsilon}/\log(n)) & = a \times \frac{m_{1}}{n} + b\times \frac{m_{2}}{n} - \gamma - \frac{\alpha}{2} \times \log \left[ \frac{(\gamma - \alpha) a m_{1}}{(\gamma+\alpha) b m_{2}} \right],
\end{align}
where $\alpha = f_{n, \epsilon}/\log(n)$, and $\gamma = \sqrt{\alpha^{2} + 4 \frac{m_{1} m_{2}}{n^{2}} a b}$.
\end{lemma}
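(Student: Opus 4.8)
\textbf{Proof proposal for Lemma~\ref{lemma:optimized_tail_bound}.}

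The plan is to bound the lower tail of the difference $X - R$ of two independent binomials by a Chernoff argument, then optimize the free parameter in the exponent explicitly to obtain the stated rate $g$. First I would write, for any $t > 0$,
\begin{align}
\operatorname{Pr}(X - R \leq f_{n,\epsilon})
&= \operatorname{Pr}\bigl(e^{-t(X-R)} \geq e^{-t f_{n,\epsilon}}\bigr) \nonumber \\
&\leq e^{t f_{n,\epsilon}} \, \E\bigl[e^{-tX}\bigr] \, \E\bigl[e^{tR}\bigr],
\end{align}
using Markov's inequality and independence of $X$ and $R$. Since $X \sim \operatorname{Bin}(m_1, p)$ with $p = a\log(n)/n$ and $R \sim \operatorname{Bin}(m_2, q)$ with $q = b\log(n)/n$, the moment generating functions factor as $\E[e^{-tX}] = (1 - p(1-e^{-t}))^{m_1}$ and $\E[e^{tR}] = (1 - q(1-e^{t}))^{m_2}$. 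Substituting $p,q$ and using $\log(1+x) \leq x$, I would bound $(1 - p(1-e^{-t}))^{m_1} \leq \exp\bigl(-m_1 p (1-e^{-t})\bigr)$ and similarly for the second factor, yielding
\begin{align}
\operatorname{Pr}(X - R \leq f_{n,\epsilon})
\leq \exp\Bigl[t f_{n,\epsilon} - a\tfrac{m_1}{n}\log(n)(1-e^{-t}) - b\tfrac{m_2}{n}\log(n)(e^{t}-1)\Bigr].
\end{align}

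Next I would convert this to the claimed form by the substitution $e^{-t} = s$ (so $e^{t} = 1/s$, $s \in (0,1)$), which turns the exponent into $\tfrac{\log n}{n}\bigl[-\alpha n\log s \cdot \tfrac{1}{\log n}\cdots\bigr]$ — more cleanly, writing the whole exponent as $\log(n)\cdot h(t)$ where $h(t) = \alpha t - a\tfrac{m_1}{n}(1-e^{-t}) - b\tfrac{m_2}{n}(e^t - 1)$ and $\alpha = f_{n,\epsilon}/\log(n)$. The result is $\operatorname{Pr}(X-R\leq f_{n,\epsilon}) \leq n^{\,h(t)}$ for every $t>0$, so I want to \emph{minimize} $h(t)$ over $t$ (equivalently choose the sharpest $t$). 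Setting $h'(t) = \alpha - a\tfrac{m_1}{n}e^{-t} + b\tfrac{m_2}{n}e^{t} = 0$ gives a quadratic in $e^{t}$: writing $u = e^t$, we get $b\tfrac{m_2}{n}u^2 + \alpha u - a\tfrac{m_1}{n} = 0$, whose positive root is
\begin{align}
u^{*} = e^{t^{*}} = \frac{-\alpha + \sqrt{\alpha^2 + 4\,\tfrac{m_1 m_2}{n^2}ab}}{2\,b\,m_2/n} = \frac{\gamma - \alpha}{2\,b\,m_2/n},
\end{align}
with $\gamma = \sqrt{\alpha^2 + 4\tfrac{m_1m_2}{n^2}ab}$ exactly as defined in the statement. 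Back-substituting $u^{*}$ and $1/u^{*} = (\gamma+\alpha)/(2\,a\,m_1/n)$ into $h(t^{*})$ and simplifying — the two exponential terms combine to give $a\tfrac{m_1}{n}e^{-t^*} + b\tfrac{m_2}{n}e^{t^*} = \gamma$ after using $e^{t^*}e^{-t^*}=1$ and the root relation — produces
\begin{align}
-h(t^{*}) = a\tfrac{m_1}{n} + b\tfrac{m_2}{n} - \gamma - \frac{\alpha}{2}\log\!\Bigl[\frac{(\gamma-\alpha)a m_1}{(\gamma+\alpha)b m_2}\Bigr] = g,
\end{align}
matching the claimed exponent, so that $\operatorname{Pr}(X-R\leq f_{n,\epsilon}) \leq n^{-g}$.

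The main obstacle I expect is the algebraic bookkeeping in the final back-substitution: one must verify that the linear term $\alpha t^{*} = -\alpha\log s^{*}$ reorganizes cleanly into the $\tfrac{\alpha}{2}\log[\cdots]$ logarithmic term and that the exponential contributions collapse to exactly $\gamma$ rather than leaving residual cross-terms. This is where the precise choice of root (the condition $f_{n,\epsilon} \leq (m_1 a - m_2 b)\tfrac{\log n}{n}$ guarantees $t^{*} > 0$, i.e. that the unconstrained minimizer lies in the admissible range so the Chernoff bound is nontrivial) must be checked, and where sign errors are easy. Since this lemma is quoted from~\cite{hajek2016achieving_extensions}, I would present the MGF step and the critical-point computation in full and then cite the reference for the routine simplification, confirming that the constraint on $f_{n,\epsilon}$ ensures $g \geq 0$ so the bound is meaningful.
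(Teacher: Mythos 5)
Your proposal is correct, and it supplies the argument the paper itself omits: the paper states this lemma as an imported result, citing \cite{hajek2016achieving_extensions} without proof, and your Chernoff derivation --- bound $\operatorname{Pr}(X-R\leq f_{n,\epsilon}) \leq e^{tf_{n,\epsilon}}\,(1-p(1-e^{-t}))^{m_1}(1-q(1-e^{t}))^{m_2}$, relax via $1+x\leq e^{x}$, solve the quadratic $b\tfrac{m_2}{n}u^{2}+\alpha u - a\tfrac{m_1}{n}=0$ in $u=e^{t}$, and back-substitute using $e^{2t^{*}}=\tfrac{(\gamma-\alpha)am_1}{(\gamma+\alpha)bm_2}$ and $a\tfrac{m_1}{n}e^{-t^{*}}+b\tfrac{m_2}{n}e^{t^{*}}=\gamma$ --- is exactly the computation the paper itself begins inline in the proof of Theorem 3.3 (where it writes $\min_{t>0}e^{tf_{n,\epsilon}}(1-p(1-e^{-t}))^{2k(n-k)}(1-q(1-e^{t}))^{2k(n-k)}\leq n^{-g}$ and then invokes this lemma). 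I verified the critical-point algebra: the positive root, the identity $1/u^{*}=(\gamma+\alpha)/(2am_1/n)$, the collapse of the exponential terms to $\gamma$, and the conversion of $\alpha t^{*}$ into $\tfrac{\alpha}{2}\log\bigl[\tfrac{(\gamma-\alpha)am_1}{(\gamma+\alpha)bm_2}\bigr]$ all check out, and the hypothesis $f_{n,\epsilon}\leq(m_1a-m_2b)\tfrac{\log n}{n}$ indeed ensures $t^{*}\geq 0$ (with equality giving a trivial but still valid bound). One transcription slip to fix: in your displayed exponent and in your definition of $h(t)$, the term $b\tfrac{m_2}{n}(e^{t}-1)$ must carry a plus sign, since $\E[e^{tR}]\leq\exp(m_2q(e^{t}-1))$ contributes positively; as written those two lines state a false inequality, but your $h'(t)$, the quadratic, and everything downstream already use the correct sign, so this is a typo rather than a gap.
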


 \begin{definition} [Multiplicative Chernoff Bound] 
    Given $\tilde{m}_{1}^{(k)} \sim \operatorname{Bin}(2k(n-k), p)$, and $\tilde{m}_{2}^{(k)} \sim \operatorname{Bin}(2k(n-k), q)$, we have
 \begin{align}
     \operatorname{Pr} \left[ \tilde{m}_{1}^{(k)} - \tilde{m}_{2}^{(k)} < (1-\tilde{\delta}) \mu^{(k)} | |S_{1}| = k,   |S_{2}| = k \right] \leq \exp (- \tilde{\delta}^{2} \mu^{(k)}/2), \label{eqn:multi_chernoff_bound}
 \end{align}
 where, 
 \begin{align}
     \mu^{(k)} & = 2 k (n-k) (a - b) \frac{\log(n)}{n}.
     \end{align}
 \end{definition}
Applying the union bound for possible values of $k \in [1: \frac{n}{2}]$, it yields 
\begin{align}
          \operatorname{Pr} \left[ E_{S_{1}, S_{2}}^{c} \right] & = \sum_{k=1}^{\frac{n}{2}} {n \choose k}^{2} \times \exp \left[ - \frac{\tilde{\delta}^{2}}{2} \times 2 k(n-k) \times (a-b) \times \frac{\log(n)}{n}  \right] \nonumber \\ 
          & \leq \sum_{k=1}^{\frac{n}{2}} \bigg(\frac{ne}{k}\bigg)^{2k} \times   \exp \left[ - \frac{\tilde{\delta}^{2}}{2} \times k \times (a-b)\times {\log(n)}  \right] \nonumber \\
            & = \sum_{k=1}^{\frac{n}{2}} \exp \left[2k \bigg( \log(n) - \log(k) + 1   - \frac{\tilde{\delta}^{2}}{4}  \times (a-b)\times {\log(n)} \bigg) \right]  \nonumber \\
             & = \sum_{k=1}^{\frac{n}{2}} \exp \left[- 2k \bigg(\log(k) - 1   + \bigg(\frac{\tilde{\delta}^{2}}{4}  \times (a-b) - 1\bigg) {\log(n)} \bigg) \right].
\end{align}
In order to make the probability decays with $n$, we require that 
\begin{align}
    \tilde{\delta}^{2}(a-b) > 4 \Rightarrow a-b > \frac{4}{\tilde{\delta}^{2}} \Rightarrow \sqrt{a} - \sqrt{b} > \frac{\sqrt{2}}{\tilde{\delta}/\sqrt{2}}.
\end{align}





%


\end{document}